\documentclass[sigplan,9pt]{acmart}\settopmatter{}

\usepackage{amsmath, amsthm, amssymb}
\usepackage{scalerel}
\usepackage{ebproof}
\usepackage{cleveref}
\usepackage{mathtools}
\usepackage{centernot}

\DeclareRobustCommand{\substitution}[2]{{#1} \linebreak[1] \mathrel{:=}  {#2}}
\newcommand\cc{\protect\mathpalette{\protect\cchelper}{c}} \def\cchelper#1#2{\mathop{\rlap{$#1#2$}\mkern6mu{#1#2}}}
\DeclareRobustCommand{\stacks}{\Pi}
\DeclareRobustCommand{\terms}{\Lambda}
\DeclareRobustCommand{\processes}{\operatorname{\Lambda\star\Pi}}
\DeclareRobustCommand{\evaluateskam}{\mathrel{\succ_{\operatorname{K}}}}
\DeclareRobustCommand{\evaluateskamone}{\mathrel{\succ_{\operatorname{K}}^1}}
\DeclareRobustCommand{\prooflikes}{\operatorname{PL}}
\DeclareRobustCommand{\instra}{\xi}
\DeclareRobustCommand{\instrb}{\eta}
\DeclareRobustCommand{\emptystack}{\omega}
\DeclareRobustCommand{\lambdaSubstitution}[2]{\substitution{#1}{#2}}
\newcommand\pole{{\protect\mathpalette{\protect\polehelper}{\bot}}} \def\polehelper#1#2{\mathrel{\rlap{$#1#2$}\mkern3mu{#1#2}}}
\DeclareRobustCommand{\setofpoles}{{\mathop{\mathcal{S}}}}
\DeclareRobustCommand{\allpoles}{{\mathop{\mathcal{S}_0}}}
\DeclareRobustCommand{\N}{\mathbb{N}}
\DeclareRobustCommand{\stackcons}{{\scaleobj{0.6}{\ \bullet\ }}}


\DeclareRobustCommand{\ra}{\rightarrow}
\DeclareRobustCommand{\ea}{\leftrightarrow}
\DeclareRobustCommand{\ia}{\mathrel{\hookrightarrow}}
\DeclareRobustCommand{\formulaSubstitution}[2]{\substitution{#1}{#2}}
\DeclareRobustCommand{\req}{\approx}
\DeclareRobustCommand{\rle}{\lesssim}
\DeclareRobustCommand{\P}{\mathcal{P}}

\DeclareRobustCommand{\falsity}[1]{\left\lVert#1\right\rVert}
\DeclareRobustCommand{\truth}[1]{\left\lvert#1\right\rvert}
\DeclareRobustCommand{\foval}[1]{{v}{\left(#1\right)}}

\DeclareRobustCommand{\theory}[1]{{\operatorname{Th}}\left(#1\right)}

\DeclareRobustCommand{\falsity}[1]{\left\lVert#1\right\rVert}
\DeclareRobustCommand{\truth}[1]{\left\lvert#1\right\rvert}
\DeclareRobustCommand{\foval}[1]{{v}{\left(#1\right)}}
\DeclareRobustCommand{\realize}{\Vdash}

\DeclareRobustCommand{\genericBooleanAlgebra}{\mathbb{B}}
\DeclareRobustCommand{\nondet}[1]{{B_{#1}}}
\DeclareRobustCommand{\gim}[1]{\mathop{\gimel#1}}
\DeclareRobustCommand{\card}[1]{\left\vert{#1}\right\vert}

\DeclareRobustCommand{\gimlt}[1]{\card{\gim2}<2^{#1}}

\DeclareRobustCommand{\gimleqsimple}[1]{\card{\gim2}\leq{#1}}
\DeclareRobustCommand{\gimeqsimple}[1]{\card{\gim2}={#1}}
\DeclareRobustCommand{\gimgeqsimple}[1]{\card{\gim2}\geq{#1}}
\DeclareRobustCommand{\gimltsimple}[1]{\card{\gim2}<{#1}}

\DeclareRobustCommand{\booletwo}{{\{0,1\}}}
\DeclareRobustCommand{\boolnodisjoint}[1]{{A_{#1}}}

\DeclareRobustCommand{\nat}{\mathop{\mathit{Nat}}}
\DeclareRobustCommand{\church}[1]{{\overline{#1}}}


\DeclareRobustCommand{\evaluates}{\mathrel{\succ}}
\DeclareRobustCommand{\evaluatesone}{\mathrel{\succ^1}}

\DeclareRobustCommand{\evaluatesmodel}[1]{\mathrel{\succ_{#1}}}
\DeclareRobustCommand{\modelevaluates}{{\setofpoles_\succ}}

\DeclarePairedDelimiter\ceil{\lceil}{\rceil}

\DeclareRobustCommand{\bool}{{\operatorname{Bool}}}
\DeclareRobustCommand{\true}{{\operatorname{true}}}
\DeclareRobustCommand{\false}{{\operatorname{false}}}
\DeclareRobustCommand{\torl}{{\operatorname{or}_l}}
\DeclareRobustCommand{\torr}{{\operatorname{or}_r}}

\DeclareRobustCommand{\for}{{\operatorname{Or}}}
\DeclareRobustCommand{\forl}{{\operatorname{Or}_l}}
\DeclareRobustCommand{\forr}{{\operatorname{Or}_r}}
\DeclareRobustCommand{\forp}{{\operatorname{Or}_{\Vert}}}

\DeclareRobustCommand{\content}[1]{{\mathcal{C}(#1)}}
\DeclareRobustCommand{\contentr}[2]{{\mathcal{C}_{#2}(#1)}}
\DeclareRobustCommand{\processreplace}[2]{{{#1}[{#2}]}}
\DeclareRobustCommand{\formulaAsInstruction}[1]{{\beta_{#1}}}

\newtheorem*{remark}{Remark}

\newtheorem{theorem}{Theorem}
\newtheorem{proposition}[theorem]{Proposition}
\newtheorem{corollary}[theorem]{Corollary}
\theoremstyle{definition}
\newtheorem{definition}[theorem]{Definition}
\newtheorem*{notation}{Notation}

\newtheoremstyle{nameddef}
  {\topsep}   
  {\topsep}   
  {}  
  {0pt}       
  {\bfseries} 
  {}          
  {5pt plus 1pt minus 1pt} 
  {\thmname{#3}\thmnumber{ (#2)}.} 
\theoremstyle{nameddef}
\newtheorem*{namednotation}{Notation}

\begin{document}

\copyrightyear{2018} 
\acmYear{2018} 
\setcopyright{acmlicensed}
\acmConference[LICS '18]{LICS '18: 33rd Annual ACM/IEEE Symposium on Logic in Computer Science}{July 9--12, 2018}{Oxford, United Kingdom}
\acmBooktitle{LICS '18: LICS '18: 33rd Annual ACM/IEEE Symposium on Logic in Computer Science, July 9--12, 2018, Oxford, United Kingdom}
\acmPrice{15.00}
\acmDOI{10.1145/3209108.3209140}
\acmISBN{978-1-4503-5583-4/18/07}

\title{Classical realizability as a classifier for nondeterminism}
\author{Guillaume Geoffroy}
\email{guillaume.geoffroy@univ-amu.fr}
\affiliation{
  \department{Institut de math\'ematiques de Marseille}
  \institution{Aix-Marseille Universit\'e} 
  \country{France}
}

\begin{abstract}
We show how the language of Krivine's classical realizability may be used to specify various forms of nondeterminism
and relate them with properties of realizability models. More specifically, we introduce an abstract notion of multi-evaluation relation which allows
us to finely describe various nondeterministic behaviours.
This defines a hierarchy of computational models, ordered by their degree of nondeterminism, similar to Sazonov's degrees of parallelism.
What we show is a duality between the structure of the characteristic Boolean algebra of a realizability model
and the degree of nondeterminism in its underlying computational model.
\end{abstract}

\maketitle


\section{Introduction}

\paragraph*{Classical realizability}
Realizability is an instance of the formulas-as-types/proofs-as-programs correspondence in which each formula
is interpreted as the specification of a certain computational behaviour. Initially, this correspondence (and hence realizability) was limited
to intuitionistic reasoning, until Griffin noted a connection between
control operators and classical reasoning \cite{griffin:callcc}, which lifted this limitation.
Classical realizability is an extension of Kleene's realizability to
accommodate classical reasoning: using control operators, Krivine developed a theory
capable of interpreting all classical reasoning within \emph{second-order arithmetic} (\cite{krivine:depchoice,krivine:realizability-classical-logic}) and
\textit{Zermelo--Fraenkel set theory} with dependent choice (\cite{krivine:lambda-zf,krivine:ra1,krivine:ra2,krivine:ra3,krivine:ra4}).
Subsequently, Miquel adapted this framework to higher-order arithmetic \cite{miquel:lics11}.

Following intuitionistic realizability, Krivine's framework is made of three ingredients: \begin{itemize}
\item A computational model (in Krivine's setting, it is called a \emph{realizability algebra}). 
In general, it is a set of programs (with an operational semantics);
\item A logical language (for arithmetic, higher-order arithmetic or set theory), together with a \emph{realizability relation} expressing the fact that a given program realizes
a given formula. The essential result is the \emph{adequacy lemma} which states that classical realizability is compatible with deduction rules;
\item A realizability model (in the usual sense of model theory), which satisfies all formulas that are realized. Such a model must exist by
the completeness theorem. Krivine noted that this construction may be seen as an extension of Cohen's
forcing construction \cite{cohen:continuum-hypothesis}. Here, we will not look at it directly, nor even define it explicitly: we will look at realizability
theories (\textit{i.e.} consistent sets of realized formulas) rather than realizability models.
\end{itemize}

In Krivine's framework,
each realizability model comes with a \emph{characteristic Boolean algebra}
$\gim{2}$ (``gimel  $2$'') \cite{krivine:ra2} whose structure encodes interesting properties of
the model. In particular, forcing models correspond to the degenerate case
where $\gim{2} = \{0,1\}$.

In order to emphasise the central role of the characteristic Boolean algebra, let us recall that classical realizability gives rise
to surprising models of set theory (such as the \emph{model of threads} \cite{krivine:ra2}), whose strange set-theoretic properties are mostly
direct consequences of the structure of their characteristic Boolean algebras.

As noted above, classical realizability can be seen as an extension of forcing. However, while there are plenty of theoretical results connecting the properties of a forcing model to the structure of the underlying
forcing set, there is currently a severe lack of general results connecting the properties of a classical realizability model to the properties of the underlying computational model. The goal of this
paper is to start addressing this state of affairs by two means. First, by proving new results of this kind (namely, results which connect the presence of nondeterminism in the computational model with
the size of the characteristic Boolean algebra in the realizability model), and second, by introducing a new framework which should make it easier, in the future, to find such results.

\paragraph*{Nondeterminism}
The logical language can be extended by adding realizability
connectives (such as union ``$\cup$'' and intersection ``$\cap$''), which may
in particular be used to express various forms of nondeterminism in the
computational model.

For example, we will see that the formula $\forall X \forall Y~ X \ra Y \ra X \cap Y$
specifies the \emph{may}-nondeterministic choice operator
``$\operatorname{fork}$'', which takes two programs as arguments and does
whatever \emph{either} does. Dually the formula
$\forall X \forall Y~ X \ra Y \ra X \cup Y$ specifies the
\emph{must}-nondeterministic choice operator ``$\operatorname{choose}$'', which
takes two programs as arguments and only does what \emph{both} do.

It is known that adding a must-nondeterministic choice operator does not change
the realizability model (no new formulas are realized), and that adding a
may-nondeterministic choice operator collapses it into a forcing model.

However, we show that adding more subtle nondeterministic instructions corresponding to
different mixes of \emph{may} and \emph{must} has the effect of altering the properties of $\gim2$.  To this end, we define an abstract
notion of \emph{multi-evaluation relation} which can express arbitrary mixes of
may- and must-nondeterminism such as Kleene's ``\emph{parallel or}''
\cite{kleene:metamathematics}, Berry's ``\emph{Gustave's function}''
\cite{berry:bottom-up}, and Trakhtenbrot's ``\emph{voting function}''
\cite{trakhtenbrot:sequential-parallel}, as well as generalisations thereof.

\paragraph*{Outline}

We recall Krivine's formalisation of classical realizability -- in the context
of second-order arithmetic -- (sections \ref{classical-realizability-section}
and \ref{gim2-section}).  Our presentation follows closely those given by
Krivine and Miquel, except that, instead of individual poles, we consider sets of poles (which we call \emph{realizability
  structures}), a formula being realized if it has a common realizer for all
the poles in the structure.  The rationale behind this is that realizability
structures are a notion dual to multi-evaluation relations.

Then, we define the notion of \emph{multi-evaluation relation} (section
\ref{evaluation-relations-section}) and show that each multi-evaluation
relation defines a unique realizability structure.

Next, we analyse in detail particular examples of such mixes, to notice that they are organised in a hierarchy, in terms of both their computational expressiveness
and the properties of $\gim{2}$ to which they correspond (sections \ref{parallel-or-section}
and \ref{voting-fork-section}). This hierarchy is reminiscent of Sazonov's
\emph{degrees of parallelism} \cite{sazonov:degrees}.  

Finally, we prove that this hierarchy does not collapse: each level is indeed
strictly more expressive than the levels below (section
\ref{gim2-card-2n-section}). To this end, we prove
that $\gim{2}$ can be made elementarily equivalent to any finite Boolean
algebra with at least two elements. The case of Boolean algebras with $4$
elements was stated and proved by Krivine \cite{krivine:ra3}, but the method
used here is new.

The contributions of this paper are:\begin{itemize} \item The dual notions of \emph{realizability structures} and \emph{multi-evalua\-tion} relations,
\item The results connecting the size of $\gim{2}$ to voting functions, \emph{parallel or} and
Gustave's function (the result about \emph{fork} is due to Krivine), \item The method presented in \cref{gim2-card-2n-section}, which
fits nicely in the framework developed here, and will be used in some future work to prove the same result about all Boolean algebras, not just the finite ones. \end{itemize}

\section{Classical realizability semantics} \label{classical-realizability-section}

Following Krivine \cite{krivine:ra2}, we define the model of computation and the logical language which will be used throughout this paper, and we connect them through the classical realizability interpretation.

\subsection{The computational model: the $\lambda_c$-calculus}

The $\lambda_c$-calculus is a model of computation which extends pure $\lambda$-calculus (with a specific evaluation strategy, namely: weak head $\beta$-reduction)
by the addition of a \emph{control operator} $\cc$ (\emph{call-with-current-continuation}). 
It is made up of three kinds of syntactic entities: $\lambda_c$-\emph{terms}, \emph{stacks} and \emph{processes}.

\begin{definition} Let us fix a countably infinite set of variables. The sets of $\lambda_c$\emph{-terms}, \emph{stacks} and \emph{processes}
are defined by the following grammars, modulo $\alpha$-equivalence (free and bound variables are defined as usual, abstraction being the only binding construction):

$\lambda_c$-terms:
$$\begin{array}{rcrll}
  t,u & ::= & & x & \text{($x$ variable)}\\
  & & | & t u & \text{($t, u$ $\lambda_c$-terms -- application)}\\
  & & | & \lambda x. t &  \text{($x$ variable, $t$ $\lambda_c$-term -- abstraction)}\\
  & & | & \cc & \text{(\emph{call-with-current-continuation})}\\
  & & | & k_\pi & \text{($\pi$ stack -- continuation constants)}\\
  & & | & \instra_n & \text{($n \in \N$ -- nonrestricted instructions)}\\
  & & | & \instrb_n & \text{($n \in \N$ -- restricted instructions)}\text{,}\\
\end{array}$$

Stacks:
$$\begin{array}{rcrll}
  \pi & ::= & & \emptystack_n & \text{($n \in \N$ -- stack bottoms)}\\
  & & | & t \stackcons \pi & \text{($t$ \emph{closed} $\lambda_c$-term, $\pi$ stack)}\text{,}
\end{array}$$

Processes:
$$\begin{array}{rcrll}
  p & ::= & & t \star \pi & \text{($t$ \emph{closed} $\lambda_c$-term, $\pi$ stack)}\text{.}
\end{array}$$

The additional instructions (\textit{i.e.} $\instra_n$ and $\instrb_n$) will serve as customizable instructions: the \emph{evaluation relation of the $\lambda_c$-calculus}, which we will define in this section, gives
no evaluation rule for them, and therefore treats them as inert constants. On the other hand, in \cref{evaluation-relations-section}, we will define \emph{multi-evaluation relations},
which can specify evaluation rules for these instructions: thus, the meaning of a given additional instruction will depend on the particular evaluation relation which we are considering at the moment.
Typically, they will represent nondeterministic choice operators.

Intuitively, restricted instructions correspond to \emph{privileged instructions} of real-world processors (which cannot be called directly be the user), while unrestricted instructions correspond to \emph{system calls}
(through which the user can access privileged instructions, in a controlled fashion).

From now on, closed $\lambda_c$-terms will be simply called \emph{terms}.

The set of terms is denoted by $\terms$, the set of stacks by $\stacks$ and the set of processes by $\processes$.

If $t \star \pi$ is a process, we call $t$ its \emph{head term} and $\pi$ its \emph{stack}.

Application  is left-associative (so $t u v w$ means $((t u) v) w$) and has higher priority than abstraction (so $\lambda x. t u$ means $\lambda x. (t u)$).
We write $t^n u$ for $t$ applied $n$ times to $u$.
\end{definition}

\begin{namednotation}[Substitutions \cite{krivine:dea}]  Given any list of $\lambda_c$-terms $t, u_1, \ldots, \linebreak[1] u_n$ and any list of distinct variables $x_1, \ldots, \linebreak[1] x_n$, we will denote by
$t[\lambdaSubstitution{x_1}{u_1}, \ldots, \lambdaSubstitution{x_n}{u_n}]$ the $\lambda_c$-term obtained by replacing \emph{simultaneously} each free occurrence of $x_i$ by $u_i$ in $t$ for $i = 1, \ldots, n$.
\end{namednotation}

\begin{definition} The \emph{evaluation relation} of the $\lambda_c$-calculus, denoted by $\evaluateskam$, is the smallest \emph{preorder} on $\processes$
satisfying the following rules:
\vspace{-0.1cm}
\begin{alignat*}{6}
t u ~\star~ & \pi & \quad\evaluateskam\quad && t ~\star~ & u \stackcons \pi & \qquad\text{(push)}\\
\lambda x. t ~\star~ & u \stackcons \pi & \quad\evaluateskam\quad && t[\lambdaSubstitution{x}{u}] ~\star~ & \pi & \qquad\text{(grab)}\\
\cc ~\star~ & t \stackcons \pi & \quad\evaluateskam\quad && t ~\star~ & k_\pi \stackcons \pi & \qquad\text{(save)}\\
k_{\pi'} ~\star~ & t \stackcons \pi & \quad\evaluateskam\quad && t ~\star~ & \pi' & \qquad\text{(restore)}\\
\end{alignat*}
\vspace{-0.7cm}

The one-step evaluation relation of $\lambda_c$-calculus, denoted by $\evaluateskamone$, is the smallest \emph{binary relation} on $\processes$ satisfying these rules.
\end{definition}

The \emph{push} and \emph{grab} rules simulate \emph{weak head} $\beta$-\emph{reduction}; they will make the realizability interpretation compatible with \emph{intuitionistic logic}.
 The \emph{save} and \emph{restore} rules allow a program (\textit{i.e.} a term) to save its evaluation context (\textit{i.e.} the stack), and restore it later; they will make the realizability interpretation compatible with \emph{classical logic}.
 
\begin{remark} The one-step evaluation relation is deterministic: for all $p, q, q'$, if $p \evaluateskamone q$ and $p \evaluateskamone q'$, then $q = q'$. \end{remark}

\subsection{The realizability language}

The realizability language is a second-order logical language, whose first-order terms are intended to represent integers.

\begin{definition} We fix a set of first-order variables, and for each natural number $n$, we fix a set of $n$-ary propositional variables.
Those sets are taken pairwise disjoint and countably infinite. The sets of \emph{first-order terms} and of \emph{formulas} are described by the following grammars,
modulo $\alpha$-equivalence (the binding constructions being first- and second-order universal quantifications).

First order terms:
$$\begin{array}{rrll}
  a,b & ::= & x & \text{($x$ first-order variable)}\\
  & | & f(a_1, \ldots, a_n) & \text{($f : \N^n \ra \N$)}\text{,}\\
\end{array}$$

Formulas:
$$\begin{array}{rrl}
  A,B & ::= & X(a_1, \ldots, a_n) \quad \text{($X$ $n$-ary relational variable)}\\
  & \mid & \top ~ \mid ~  \bot ~ \mid ~  A \ra B  ~ \mid ~  \forall x~ A ~ \mid ~ \forall X~ A \\[1ex]
  & \mid & (a = b) \ia A  ~ \mid ~ A \cap B ~ \mid ~ A \cup B \\
  & \mid & F(a_1, \ldots, a_n) \quad \text{($F : \N^n \ra \P(\stacks)$).}\\
\end{array}$$
\end{definition}

All formulas can be interpreted as describing program behaviours. In addition, formulas which only contain first-order terms, propositional variables, $\top$, $\bot$, $\ra$ and $\forall$
also have a logical meaning: they can be interpreted in $\N$ (or any model of second-order arithmetic). These formulas are called  \emph{arithmetic formulas}.

The construction $(a = b) \ia A$ is called \emph{equational implcation}. As we will see later, it is logically equivalent to regular implication ($(a = b) \ra A$), and its interest lies in the fact that
realizers of $(a = b) \ia A$ are easier to read, write and understand than realizers of $(a = b) \ra A$ (they involve less ``red tape'').

We write $a \neq b$ for $(a = b) \ia \bot$.

Caution: we use the same letters for first-order variables and variables of $\lambda_c$-calculus. It will always be clear from context which is which.

\begin{namednotation}[Additional connectives] Equality and additional logical connectives are defined by the usual second-order encodings: \begin{itemize}
\item $a = b$ means $\forall Z,~ Z(a) \ra Z(b)$;
\item $A \wedge B$ means $\forall Z,~ (A \ra B \ra Z) \ra Z$;
\item $A \vee B$ means $\forall Z,~ (A \ra Z) \ra (B \ra Z) \ra Z$;
\item $\neg A$ means $A \ra \bot$;
\item $A \ea B$ means $(A \ra B) \wedge (B \ra A)$;
\item $\exists x~ A$ means $\forall Z,~ (\forall x,~ A \ra Z) \ra Z$;
\item $\exists X~ A$ means $\forall Z,~ (\forall X,~ A \ra Z) \ra Z$.
\end{itemize}
\end{namednotation}

Connectives are, from highest to lowest precedence: $\neg$, $\neq$, $=$, $\cap$, $\cup$, $\wedge$,
$\vee$, $\ra$, $\ia$, $\ea$, $\exists$ and $\forall$. In addition, $\wedge$ $\vee$, $\cap$ and $\cup$ are left-associative, and $\ra$ is right-associative.

\begin{namednotation}[First-order substitutions] Given any formula or first-order term $\alpha$, any list of first-order terms $b_1, \ldots, b_n$ and any list of distinct first-order variables $x_1, \ldots, x_n$
we denote by $\alpha[\formulaSubstitution{x_1}{u_1}, \ldots, \formulaSubstitution{x_n}{u_n}]$ the formula or the first-order term obtained by replacing
simultaneously each free occurrence of $x_i$ by $b_i$ in $\alpha$ for $i = 1, \ldots, n$.
\end{namednotation}

\begin{namednotation}[Second-order substitutions] Given any two formulas $A, B$, any $n$-ary propositional variable $X$ and any list of distinct first-order variables $y_1, \ldots, y_n$,
we denote by $A[\formulaSubstitution{X(y_1,\ldots,y_n)}{B}]$ the formula obtained by replacing $X$ by $B$ in $A$. Specifically, each free occurence of $X$ of the form $X(a_1, \ldots, a_n)$,
is replaced by $B[\formulaSubstitution{y_1}{a_1},\ldots,\formulaSubstitution{y_n}{a_n}]$.
\end{namednotation}

\begin{namednotation}[Formulas and terms with parameters] Given any formula or first-order term $\alpha$ and
any list of distinct first-order variables $x_1, \ldots,  \linebreak[1] x_n$  containing at least all the free variables of $\alpha$,
we will sometimes denote $\alpha$ by $\alpha(x_1, \ldots, x_n)$
(the point of this notation is to order the free variables of $\alpha$).
In that case, given any list of first-order terms $a_1, \ldots, a_n$ we will write $\alpha(a_1, \ldots, a_n)$ for $\alpha[\lambdaSubstitution{x_1}{a_1}, \ldots, \lambdaSubstitution{x_n}{a_n}]$.
\end{namednotation}

When writing first-order terms, we will use all sorts of  abuses of notation, such as writing $a + b$
instead of $+(a,b)$, or $\sum_{i=1}^n a_i$ instead of $+(a_1, \ldots, a_n)$, etc.

In addition, if $A_1, \ldots A_n$ are formulas and $\odot$ is $\wedge$, $\vee$, $\cap$ or $\cup$, we will write
$\bigodot_{i=1}^n A_i$ for $A_1 \odot \ldots \odot A_n$.

\subsection{Classical realizability}

\begin{definition} A \emph{pole} is a set of processes which is \emph{closed by anti-evaluation}, that is to say, a set $\pole \subseteq \processes$
such that for all $p, q \in \processes$, if $p \evaluateskam q$ and $q \in \pole$, then $p \in \pole$. The set of all poles is denoted by $\allpoles$. \end{definition}

\begin{definition} Let $a$ be a closed first-order term. The \emph{value} of $a$, written $\foval{a}$ is the integer defined (inductively) by
$\foval{f(a_1, \ldots, a_n)} = f(\foval{a_1}, \ldots, \foval{a_n})\text{.}$
\end{definition}

\begin{definition} Let $\pole$ be a pole and $X$ a subset of $\stacks$. The \emph{dual of $X$ with respect to $\pole$}, denoted by $X^\pole$,
is the set $\{ t \in \terms; \forall \pi \in X,~ t \star \pi \in \pole \}$.
\end{definition}

\begin{definition} Let $\pole$ be a pole and $A$ a closed formula. The \emph{falsity value of} $A$ \emph{with respect to} $\pole$, denoted by $\falsity{A}_\pole$,
is the subset of $\P(\stacks)$ defined below, and the \emph{truth value of} $A$ \emph{with respect to} $\pole$, denoted by $\truth{A}_\pole$, is defined as $(\falsity{A}_\pole)^\pole$.

Falsity values of closed formulas are defined as follows: \begin{itemize}
\item $\falsity{\top}_\pole = \emptyset$, $\falsity{\bot}_\pole = \P(\stacks)$;
\item $\falsity{A \ra B}_\pole = \{ t \stackcons \pi ; t \in \truth{A}_\pole, \pi \in \falsity{B}_\pole \}$;
\item $\falsity{\forall x~ A}_\pole = \bigcup_{n \in \N} \falsity{A[\lambdaSubstitution{x}{n}]}_\pole$;
\item $\falsity{\forall X~A}_\pole = $ \vspace{-0.1cm} $$\bigcup_{F : \N^n \ra \P(\stacks)} \falsity{A[\lambdaSubstitution{X(y_1,\ldots,y_n)}{F(y_1,\ldots,y_n)}]}_\pole$$
\vspace{-0.4cm}
\item $\falsity{(a = b) \ia A}_\pole = \left\{ \begin{array}{ll} \falsity{A}_\pole & \text{if } \foval{a} = \foval{b} \\ \falsity{\top}_\pole & \text {otherwise;} \end{array}\right.$
\item $\falsity{A \cap B}_\pole = \falsity{A}_\pole \cup \falsity{B}_\pole$;
\item $\falsity{A \cup B}_\pole = \falsity{A}_\pole \cap \falsity{B}_\pole$;
\item $\falsity{F(a_1, \ldots, a_n)}_\pole = F(\foval{a_1}, \ldots, \foval{a_n})$.
\end{itemize}
\end{definition}

\begin{notation} Let $A$ be a closed formula. We denote by $\falsity{A}$ the function which maps $\pole$ to $\falsity{A}_\pole$,
and by $\truth{A}$ the function which maps $\pole$ to $\truth{A}_\pole$. \end{notation} 

\begin{definition} Let $A$ be a closed formula, $t$ a term and $\pole$ a pole. If $t \in \truth{A}_\pole$, we say that $t$ \emph{realizes} $A$ \emph{with respect to} $\pole$,
and we write $t \realize_\pole A$.
\end{definition}

\begin{definition} Let $A$ be a closed formula and $t$ a term. We say that $t$ \emph{realizes} $A$ \emph{universally} if $t$ realizes $A$ with respect to every pole. \end{definition}

\begin{remark} Let $\pole$ be a pole. For all $X, Y \subseteq \P(\stacks)$, if $X \subseteq Y$, then $X^\pole \supseteq Y^\pole$. Therefore, for all closed formulas
$A$ and $B$, if $\falsity{A}_\pole \subseteq \falsity{B}_\pole$, then $\truth{A}_\pole \supseteq \truth{B}_\pole$ -- in particular, the identity term $\lambda x. x$ realizes $B \ra A$.
\end{remark}

\begin{namednotation}[Semantic equivalence and semantic subtyping \cite{rieg:thesis}]  Let $A$ and $B$ be two closed formulas. If $\pole$ is a pole, we write $A \req_\pole B$ if $\falsity{A}_\pole = \falsity{B}_\pole$ and $A \rle_\pole B$ if
$\falsity{A}_\pole \supseteq \falsity{B}_\pole$. Moreover, we write $A \req B$ if $A \req_\pole B$ for all $\pole$, and $A \rle B$ if $A \rle_\pole B$ for all $\pole$. \end{namednotation}

\subsection{Adequacy}

\subsubsection{Typing $\lambda_c$-terms}

We type $\lambda_c$-terms with (non-necessarily closed) formulas of the realizability language.

A \emph{context} is a finite set of \emph{hypotheses} of the form $x : B$ (with $x$ a variable of the $\lambda_c$-calculus and $B$ a formula) such that no variable of the $\lambda_c$-calculus appears more than once.
Variables of the $\lambda_c$-calculus which appear on the left of an hypothesis are said to be \emph{declared} in the context, and variables of the realizability language which appear freely on the right of at least one hypothesis are said to be \emph{free}
in the context.

\emph{Typing judgments} are sequents of the form $\Gamma \vdash t : A$, where $\Gamma$ is a context, $t$ is a $\lambda_c$-term whose free variables are all declared in $\Gamma$, and $A$ is a formula.

A \emph{typing derivation} is a tree formed with the following \emph{typing rules}. Its root is called its \emph{conclusion}:

\begin{center}
\begin{prooftree}
\infer[left label=(Axiom)]{0}{ \Gamma, x : A \vdash x : A }
\end{prooftree}

\vspace{.25cm}

\begin{prooftree}
\infer[left label=(Peirce)]{0}{ \Gamma \vdash \cc : ((A \ra B) \ra A) \ra A}
\end{prooftree}

\vspace{.25cm}

\begin{tabular}{cc}

{\begin{prooftree}
\infer[left label=($\top$-intro)]{0}{ \Gamma \vdash t : \top }
\end{prooftree}} &

{\begin{prooftree}
\hypo{ \Gamma \vdash t : \bot }
\infer[left label=($\bot$-elim)]{1}{ \Gamma \vdash t : A }
\end{prooftree}}

\end{tabular}

\vspace{.25cm}

\begin{prooftree}
\hypo{ \Gamma, x : A \vdash t : B }
\infer[left label=($\ra$-intro)]{1}{ \Gamma \vdash \lambda x. t : A \ra B }
\end{prooftree}

\vspace{.25cm}

\begin{prooftree}
\hypo{ \Gamma \vdash t : A \ra B }
\hypo{ \Gamma \vdash u : A }
\infer[left label=($\ra$-elim)]{2}{ \Gamma \vdash t u : B }
\end{prooftree}

\vspace{.25cm}

\begin{prooftree}
\hypo{ \Gamma \vdash t : A }
\infer[left label=($\forall_1$-intro)]{1}[($x$ not free in $\Gamma$)]{ \Gamma \vdash t : \forall x~ A }
\end{prooftree}

\vspace{.25cm}

\begin{prooftree}
\hypo{ \Gamma \vdash t : \forall x~ A }
\infer[left label=($\forall_1$-elim)]{1}{ \Gamma \vdash t : A[\formulaSubstitution{x}{a}]}
\end{prooftree}

\vspace{.25cm}

\begin{prooftree}
\hypo{ \Gamma \vdash t : A }
\infer[left label=($\forall_2$-intro)]{1}[($X$ not free in $\Gamma$)]{ \Gamma \vdash t : \forall X~ A }
\end{prooftree}

\vspace{.25cm}

\begin{prooftree}
\hypo{ \Gamma \vdash t : \forall X~ A }
\infer[left label=($\forall_2$-elim)]{1}{ \Gamma \vdash t : A[\formulaSubstitution{X(y_1, \ldots, y_n)}{B}]}
\end{prooftree}

\end{center}

These are the usual rules of intuitionistic natural deduction, plus the rule that $\cc$ is typed by Peirce's law.

A sequent is said to be \emph{derivable} if it is the conclusion of some derivation.
 
 
The following lemma states the compatibility of classical realizability with respect to deduction rules:

\begin{proposition}[Adequacy lemma] Let $\pole$ be a pole, $x_1 : A_1, \ldots, x_n : A_n \vdash t : B$ a derivable sequent with $A_1, \ldots A_n, B$ closed, and $u_1, \ldots, u_n$ terms such that $u_i$ realizes $A_i$ w.r.t. $\pole$ for all $i$.
The term $t[\lambdaSubstitution{x_1}{u_1}, \ldots,\lambdaSubstitution{x_n}{u_n}]$ realizes $B$ w.r.t. $\pole$. \end{proposition}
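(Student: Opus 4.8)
The plan is to prove the statement by induction on the typing derivation, but the first thing I would do is strengthen the statement so that the induction actually goes through. Although the conclusion concerns closed formulas, the subderivations produced by the $\forall$-rules involve formulas with free first- and second-order variables. So I would generalise to the following: for any derivable sequent $x_1 : A_1, \ldots, x_n : A_n \vdash t : B$ (now with $A_i$ and $B$ arbitrary), any closing substitution $\sigma$ assigning an integer to each free first-order variable and a function $\N^k \ra \P(\stacks)$ to each free $k$-ary propositional variable, any pole $\pole$, and any terms $u_1, \ldots, u_n$ with $u_i \realize_\pole A_i[\sigma]$, the term $t[\lambdaSubstitution{x_1}{u_1}, \ldots, \lambdaSubstitution{x_n}{u_n}]$ realizes $B[\sigma]$ with respect to $\pole$. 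The original statement is then the special case where $\sigma$ is empty.

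Next I would treat the rules one at a time, writing $t' = t[\lambdaSubstitution{x_1}{u_1}, \ldots, \lambdaSubstitution{x_n}{u_n}]$ (and similarly $u'$) throughout. The (Axiom) case is immediate, since $t' = u_i$ realizes $A_i[\sigma]$ by hypothesis. For ($\top$-intro) I would note $\truth{\top}_\pole = \emptyset^\pole = \terms$, so every term qualifies; for ($\bot$-elim) I would invoke the Remark, since $\falsity{A[\sigma]}_\pole \subseteq \P(\stacks) = \falsity{\bot}_\pole$ yields $\truth{\bot}_\pole \subseteq \truth{A[\sigma]}_\pole$. The two implication rules are where the push and grab rules enter. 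For ($\ra$-elim) I would take $\pi \in \falsity{B[\sigma]}_\pole$, use push to get $t'u' \star \pi \evaluateskam t' \star u' \stackcons \pi$, observe that $u' \stackcons \pi \in \falsity{(A \ra B)[\sigma]}_\pole$ because $u'$ realizes $A[\sigma]$ by the induction hypothesis, and close by anti-evaluation of $\pole$. For ($\ra$-intro) I would take $s \stackcons \pi$ in the relevant falsity value, use grab to reduce to $t'[\lambdaSubstitution{x}{s}] \star \pi$, and apply the induction hypothesis to the premise with the extended list $u_1, \ldots, u_n, s$, where $s$ realizes $A[\sigma]$.

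For the four quantifier rules I would exploit the duality recorded in the Remark. In the introduction rules the freshness side-condition ($x$, resp.\ $X$, not free in $\Gamma$) ensures that $A_i[\sigma'] = A_i[\sigma]$ for any extension $\sigma'$ of $\sigma$ on the quantified variable, so the $u_i$ still realize the $A_i[\sigma']$ and the induction hypothesis applies at every instance; since $\falsity{(\forall x~A)[\sigma]}_\pole$ is the union of the $\falsity{A[\sigma][\formulaSubstitution{x}{n}]}_\pole$, realizing every instance produces a realizer of the universal. The elimination rules run in the dual direction: the falsity value of a single instance $A[\formulaSubstitution{x}{a}][\sigma]$, which depends on $a$ only through $\foval{a[\sigma]}$, is contained in that union, so by the Remark any realizer of the universal realizes the instance. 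The second-order cases are identical in structure, using the clause for $\falsity{\forall X~A}_\pole$ and second-order substitution.

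Finally, the (Peirce) case is the one genuinely new computation, and I expect it to be the main obstacle. Writing $A' = A[\sigma]$ and $B' = B[\sigma]$, I would take an arbitrary $t \stackcons \pi \in \falsity{((A' \ra B') \ra A') \ra A'}_\pole$, so that $t \in \truth{(A' \ra B') \ra A'}_\pole$ and $\pi \in \falsity{A'}_\pole$. The save rule gives $\cc \star t \stackcons \pi \evaluateskam t \star k_\pi \stackcons \pi$, so by anti-evaluation it suffices to show $t \star k_\pi \stackcons \pi \in \pole$; for this I would verify that $k_\pi \stackcons \pi \in \falsity{(A' \ra B') \ra A'}_\pole$, which reduces to checking $k_\pi \in \truth{A' \ra B'}_\pole$. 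That last point is exactly where the restore rule is used: for $s \in \truth{A'}_\pole$ and $\rho \in \falsity{B'}_\pole$ one has $k_\pi \star s \stackcons \rho \evaluateskam s \star \pi$, which lies in $\pole$ since $s$ realizes $A'$ and $\pi \in \falsity{A'}_\pole$, and anti-evaluation finishes the argument. The whole lemma is then the assembly of these cases.
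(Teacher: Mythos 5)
Your proposal is correct: the paper itself does not spell out a proof (it defers to Krivine's \emph{Realizability in classical logic}), and your argument is precisely the standard one found there and in Miquel's presentations --- induction on the typing derivation, strengthened with closing substitutions for the free first- and second-order variables, with each case closed by anti-evaluation of the pole and the Peirce case handled via the continuation $k_\pi$ realizing $A' \ra B'$. The only ingredient you use tacitly is the routine substitution lemma (that $\falsity{A[\formulaSubstitution{x}{a}]}$ depends on $a$ only through $\foval{a}$, and its second-order analogue), which you correctly identify but do not prove; that is exactly how the cited proof proceeds as well.
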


The proof can be found in \cite{krivine:ra2} (although in a set-theoretical rather than second-order-arithmetical setting).

\subsection{Realizability structures}

So far, we have considered realizability with respect to a fixed pole and uniform realizability across all poles (\textit{i.e.} universal realizablity). However, in order to state the results of this paper, we will need to consider uniform realizability across arbitrary sets of poles:

\begin{definition} A \emph{realizability structure} is a set of poles. \end{definition}

\begin{definition} Let $A$ be a closed formula, $t$ a term and $\setofpoles$ a realizability structure. If $t \realize_\pole A$ for all $\pole \in \setofpoles$, we say that $t$ \emph{realizes} $A$ \emph{with respect to} $\setofpoles$,
and we write $t \realize_\setofpoles A$.
\end{definition}

\begin{remark} To simplify notation, when we have fixed a realizability structure $\setofpoles$ but no individual pole, we shall say that $t$ realizes $A$ and write $t \realize A$
to mean $t \realize_\setofpoles A$, and when we have fixed a pole $\pole$, we shall say that $t$ realizes $A$ and write $t \realize A$ to mean $t \realize_\pole A$.
\end{remark}

Now, we might be tempted to define the \emph{theory} of a realizability structure as the set of all formulas which have a realizer with respect to this structure. However, an important  feature of classical (as opposed to intuitionistic) realizability is that, given any non-empty pole $\pole$, there are terms which realize $\bot$ with respect to $\pole$. Indeed, take $t \star \pi \in \pole$ and form the term $k_\pi t$: when evaluated, this term ignores its context and replaces it with the ``winning'' context $\pi$, therefore it realises $\bot$ with respect to $\pole$. As a result, to avoid getting inconsistent theories, we will consider the following restriction:

\begin{definition} The set of \emph{proof-like terms}, denoted by $\prooflikes$, is the set of all terms which do not contain any continuation constant ($k_\pi$)
nor any restricted instruction ($\instrb_n$). \end{definition}

We also exclude \emph{restricted} instructions because in some cases (\textit{e.g.} in \cref{gim2-card-2n-section}), it will be convenient to require some of them to realize $\bot$ (or other inconsistent formulas),
and we want to do so without breaking the logic.

\begin{definition} Let $\setofpoles$ be a realizability structure. The \emph{theory generated by} $\setofpoles$, denoted by $\theory{\setofpoles}$, is the set of all closed formulas $A$ such that there exists
a \emph{proof-like term} $t$ which realizes $A$ with respect to $\setofpoles$. \end{definition}

Intuitively, the adequacy lemma means that $\theory{\setofpoles}$ is \emph{closed by the rules of classical deduction}. Therefore, we will write $\theory{\setofpoles} \models A$ for $A \in \theory{\setofpoles}$,
and we will say that $\theory{\setofpoles}$ is \emph{inconsistent} if it contains $\bot$, and \emph{consistent} if it does not. Likewise, we will say that the structure $\setofpoles$ is consistent (respectively, inconsistent) if $\theory{\setofpoles}$ is.

From now on, we will say that a formula is \emph{universally realized} if it is universally realized \emph{by a proof-like term}.

\begin{definition} Two formulas $A(X_1, \ldots, X_m, y_1, \ldots, y_n)$ and $B(X_1, \linebreak[1] \ldots, X_m, y_1, \ldots, y_n)$ are \emph{universally equivalent}
if the formula $\forall X_1 \ldots \linebreak[1] X_m~ \forall y_1 \ldots y_n~ A \ea B$ is universally realized. \end{definition}

\begin{remark} A realizability structure $\setofpoles$ is consistent if and only if there is no proof-like term $t$ such that $t \star \pi \in \pole$
for all $\pole \in \setofpoles$ and all $\pi \in \stacks$.
\end{remark}

\begin{remark} It is well-known that if $\pole = \emptyset$, then the truth value of an arithmetic formula $A$ is $\terms$ if $A$ is true in $\N$, and $\emptyset$ else  \cite{miquel:ewe}.
In particular, an arithmetic formula which is universally realized must be true in $\N$. However, the converse is false, as we will shortly see.\end{remark}

Theories of the form $\theory{\setofpoles}$, which we will call \emph{realizability theories}, will be our main object of study: what are their properties, and how do these relate to the properties of the generating realizability structures?

\subsection{Equations, inequations and equational implications}

Equations and inequations are preserved by classical realizability, in the following sense:

\begin{lemma} \label{eq-ineq-lemma} Let $a$ and $b$ be closed first-order terms.\begin{itemize}
\item $\falsity{a \neq b} = \left\{ \begin{array}{ll} \falsity{\top} & \text{if } \foval{a} \neq \foval{b} \text{,} \\
 \falsity{\bot} & \text{otherwise.}\end{array} \right.$
 \item $\falsity{a = b} = \left\{ \begin{array}{ll} \falsity{\forall X~ X \ra X} \quad & \text{if } \foval{a} = \foval{b} \text{,} \\
 \falsity{\top \ra \bot} & \text{otherwise.}\end{array} \right.$
\end{itemize}
\end{lemma}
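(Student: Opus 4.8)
The first item is immediate from the definitions. Since $a \neq b$ abbreviates $(a = b) \ia \bot$, the clause for equational implication applies verbatim with $A = \bot$: for any pole $\pole$, $\falsity{(a = b) \ia \bot}_\pole$ equals $\falsity{\bot}_\pole$ when $\foval{a} = \foval{b}$ and $\falsity{\top}_\pole$ otherwise. This is exactly the claimed case split (the ``otherwise'' branch being $\foval{a} = \foval{b}$), so I would simply unfold the abbreviation and invoke the equational-implication clause, with no further work.

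For the second item, I plan to unfold $a = b$ as $\forall Z~ Z(a) \ra Z(b)$, where $Z$ is a unary propositional variable, and compute its falsity value pole by pole. Applying the clause for second-order quantification turns $\falsity{a = b}_\pole$ into $\bigcup_{F : \N \ra \P(\stacks)} \falsity{F(a) \ra F(b)}_\pole$, and the clauses for implication and for the atomic construction $F(\cdot)$ rewrite each summand as $\{t \stackcons \pi ; t \in (F(\foval{a}))^\pole,~ \pi \in F(\foval{b})\}$. The whole argument then reduces to understanding this union as $F$ ranges over all functions $\N \ra \P(\stacks)$, and the dichotomy in the statement is governed precisely by whether $\foval{a}$ and $\foval{b}$ are the same index.

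If $\foval{a} = \foval{b} =: n$, then only the value $F(n)$ is relevant, and as $F$ varies this value ranges over every subset $S \subseteq \stacks$; hence the union collapses to $\bigcup_{S \subseteq \stacks} \{t \stackcons \pi ; t \in S^\pole,~ \pi \in S\}$. Running the same expansion on $\forall X~ X \ra X$ --- now with $X$ a nullary variable, so that the quantifier ranges over single subsets $S = F() \subseteq \stacks$ --- produces the identical union, giving $\falsity{a = b}_\pole = \falsity{\forall X~ X \ra X}_\pole$. If instead $\foval{a} \neq \foval{b}$, then $F(\foval{a})$ and $F(\foval{b})$ may be chosen independently, so the union runs over all pairs $(S, T)$ of subsets. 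Choosing $S = \emptyset$, so that $S^\pole = \emptyset^\pole = \terms$, together with $T = \stacks$, shows that every stack of the form $t \stackcons \pi$ lies in the union; since conversely every element of the union is such a stack, the union is exactly $\{t \stackcons \pi ; t \in \terms,~ \pi \in \stacks\}$. This matches $\falsity{\top \ra \bot}_\pole$, because $\truth{\top}_\pole = \emptyset^\pole = \terms$ and $\falsity{\bot}_\pole$ is all of $\stacks$.

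The computations are routine unfoldings of the falsity-value clauses; the only point that deserves genuine care is the handling of the second-order quantifier. Concretely, I expect the one substantive step to be the observation that the equality or inequality of the indices $\foval{a}$ and $\foval{b}$ is exactly what decides whether the single function $F$ links the antecedent and consequent (one free subset $S$, yielding $X \ra X$) or leaves them independent (two free subsets, yielding $\top \ra \bot$ via $\emptyset^\pole = \terms$). Everything else is bookkeeping with the definitions already in place.
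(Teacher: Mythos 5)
Your proposal is correct and takes essentially the same route as the paper: the paper's entire proof is the remark that ``both facts can be checked by expanding the definitions on either side of the equality,'' and your argument is precisely that unfolding carried out in detail (the equational-implication clause for $\falsity{a \neq b}$, and the pole-by-pole computation of the second-order union, split on whether $\foval{a} = \foval{b}$, for $\falsity{a = b}$).
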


\begin{proof} Both facts can be checked by expanding the definitions on either side of the equality. \end{proof}

Moreover, as mentioned above, equational implication is equivalent to regular implication:

\begin{lemma} \label{eq-ia-ra-prop} Let $A$ be a formula and $a$ and $b$ two first-order terms. The formulas $(a = b) \ia A$ and $(a = b) \ra A$ are universally equivalent.
\end{lemma}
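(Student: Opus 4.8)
The plan is to exhibit explicit proof-like terms realizing the two implications $((a = b) \ia A) \ra ((a = b) \ra A)$ and $((a = b) \ra A) \ra ((a = b) \ia A)$ uniformly, and then combine them by the usual pairing to realize the equivalence $((a = b) \ia A) \ea ((a = b) \ra A)$. Since realizing a universally quantified formula amounts to realizing every instance with the same term (the falsity value of a $\forall$ being a union), it suffices to produce single proof-like terms that work for every pole $\pole$ and every closing instantiation of the free variables. Fixing such an instantiation turns $a$ and $b$ into closed first-order terms with definite values, so throughout I case on whether $\foval a = \foval b$ or $\foval a \neq \foval b$; the whole difficulty is to make one term handle both cases at once.

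For the first implication I propose $\theta_1 = \lambda f. \lambda e.\, e f$. Given $t \in \truth{(a = b) \ia A}_\pole$ and a stack in $\falsity{(a = b) \ra A}_\pole$, which is necessarily of the form $u \stackcons \rho$ with $u \in \truth{a = b}_\pole$ and $\rho \in \falsity{A}_\pole$, one verifies $\theta_1 \star t \stackcons u \stackcons \rho \evaluateskam u \star t \stackcons \rho$, so by anti-evaluation it is enough to show $u \star t \stackcons \rho \in \pole$, that is, $t \stackcons \rho \in \falsity{a = b}_\pole$. This is the key point, and it holds in both cases: choosing for the bound variable $Z$ of the encoding $a = b \equiv \forall Z\, Z(a) \ra Z(b)$ the constant interpretation $F$ with $F(\foval a) = F(\foval b) = \falsity{A}_\pole$ when $\foval a = \foval b$ (using that then $t \in \truth{A}_\pole$), and the interpretation with $F(\foval a) = \emptyset$, $F(\foval b) = \stacks$ when $\foval a \neq \foval b$ (equivalently, invoking \cref{eq-ineq-lemma}), shows $t \stackcons \rho \in \falsity{a = b}_\pole$ in either case.

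For the converse implication I propose $\theta_2 = \lambda g.\, g\,(\lambda x. x)$. When $\foval a \neq \foval b$ there is nothing to prove, since $\falsity{(a = b) \ia A}_\pole = \falsity{\top}_\pole = \emptyset$ and hence the falsity value of the implication is empty. When $\foval a = \foval b$ we have $\falsity{(a = b) \ia A}_\pole = \falsity{A}_\pole$, so given $g \in \truth{(a = b) \ra A}_\pole$ and $\rho \in \falsity{A}_\pole$ we reduce $\theta_2 \star g \stackcons \rho \evaluateskam g \star (\lambda x. x) \stackcons \rho$, and it remains to note that $\lambda x. x \in \truth{a = b}_\pole$ in this case (the identity realizes $Z(a) \ra Z(b)$ for every $F$ once $\foval a = \foval b$), whence $g \star (\lambda x. x) \stackcons \rho \in \pole$.

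Finally, pairing the two realizers as $\lambda z.\, z\, \theta_1\, \theta_2$ realizes the conjunction defining $\ea$; this term contains neither continuation constants nor restricted instructions, so it is proof-like, and since the argument was carried out for an arbitrary pole and an arbitrary instantiation it witnesses universal realizability of $\forall \ldots\, ((a = b) \ia A) \ea ((a = b) \ra A)$, i.e. universal equivalence. The main obstacle is the first implication: there the equality hypothesis is ``trivially true'' in the matched case yet must still be consumed, and the trick is to route both cases through the single membership $t \stackcons \rho \in \falsity{a = b}_\pole$ by applying the realizer $u$ to $t$.
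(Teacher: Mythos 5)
Your proposal is correct and matches the paper's own proof essentially verbatim: the paper also uses Lemma \ref{eq-ineq-lemma} together with the realizers $\lambda a.\lambda e.\, e\,a$ and $\lambda b.\, b\,(\lambda x.x)$, which are exactly your $\theta_1$ and $\theta_2$ up to renaming. Your write-up merely makes explicit the case analysis on $\foval{a} = \foval{b}$ and the final pairing into $\ea$, which the paper leaves implicit.
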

\begin{proof}
By the previous lemma and the definition of $\Vert(a = b)  \ia \linebreak[1] a\Vert$, we see that $\lambda a. \lambda e. e a$ universally realizes the left-to-right implication and $\lambda b. b(\lambda x.x)$ the right-to-left implication.
\end{proof}

In particular, formulas which contain the construction $\ia$ can also be read as arithmetic formulas.

\subsection{Horn clauses}

An other important class of formulas which are preserved by classical realizability is the class of Horn clauses:

\begin{proposition}\label{horn-prop} Let $H$ be a \emph{Horn clause}, \textit{i.e.} a closed formula of the form $\forall x_1 \dots x_m~ E_1 \ra \ldots \ra E_n \ra G$, with $E_i$ of the form $a_i = b_i$ for all $i$,
and $G$ of the form either $c = d$ (\emph{definite clause}) or $\bot$ (\emph{goal clause}). In particular, $H$ is arithmetic. If $H$ is true in $\N$, then $H$ is universally realized,
and if not, then $\neg H$ is universally realized. \end{proposition}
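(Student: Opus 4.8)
The plan is to reduce everything to two consequences of \cref{eq-ineq-lemma}: the term $\lambda x.x$ universally realizes every \emph{true} closed equation (such an equation has the same falsity value as $\forall X~X\ra X$), and every realizer of a \emph{false} closed equation is a realizer of $\top\ra\bot$ (same falsity value). Since realizing $\forall x~A$ amounts to realizing every numerical instance (the falsity clause for $\forall$ is a union over $\N$, so the truth value is the corresponding intersection), it is enough, in each case, to produce a single proof-like term that works uniformly in the instantiation $\vec n$ of $x_1,\dots,x_m$ and in the pole $\pole$. I also use the standard facts that a realizer $r$ of $\bot$ satisfies $r\star\pi\in\pole$ for every stack $\pi$ (so it realizes every formula) and that modus ponens holds for realizers ($u\realize A\ra B$ and $v\realize A$ give $uv\realize B$).

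The engine of the proof is a sub-lemma, proved by downward induction on $k$, about the application chain $v_k \mathrel{:=} u_k(u_{k+1}(\cdots(u_n\,w)\cdots))$ (with $v_{n+1}\mathrel{:=} w$), where each $u_i$ realizes the (true or false) equation $E_i^{\vec n}$: if $w\realize F$ and $E_k^{\vec n},\dots,E_n^{\vec n}$ are all true, then $v_k\realize F$; and if at least one of $E_k^{\vec n},\dots,E_n^{\vec n}$ is false, then $v_k\realize\bot$, whatever $w$ is. The point is that a true-equation realizer realizes $\forall X~X\ra X$, hence every $C\ra C$, so it transports realizer-hood of $F$ and of $\bot$ unchanged, while the first false-equation realizer, applied to $v_{k+1}$ (which realizes $\top$, as every term does), returns a realizer of $\bot$.

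For the case where $H$ is true, I would exhibit the proof-like terms $\lambda e_1\dots e_n.\,\lambda z.\,e_1(\cdots(e_n\,z)\cdots)$ for a definite clause and $\lambda e_1\dots e_n.\,e_1(\cdots(e_n\,(\lambda x.x))\cdots)$ for a goal clause. For a definite clause $G=(c=d)$, the falsity value of the conclusion always consists of stacks of the shape $w_0\stackcons\rho_0$ (both $\forall X~X\ra X$ and $\top\ra\bot$ have this form), so after the grabs the process becomes $v_1\star\rho_0$ with $v_1=u_1(\cdots(u_n\,w_0)\cdots)$; the sub-lemma together with the hypothesis — which forbids exactly the pattern ``all premises true and conclusion false'' — shows that $v_1$ realizes either the witness formula attached to $(w_0,\rho_0)$ (when every premise is true and $c=d$ holds) or $\bot$ (otherwise), and in both cases $v_1\star\rho_0\in\pole$. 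For a goal clause, truth of $H$ forces at least one premise to be false at every instantiation, so by the sub-lemma the chain with $\lambda x.x$ at the bottom already realizes $\bot$ and therefore absorbs the remaining stack, whatever its shape.

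For the case where $H$ is false, I would fix in the metatheory a witnessing tuple $\vec n$ at which all premises are true and $G$ fails, and realize $\neg H=H\ra\bot$ by $\lambda q.\,q\,(\lambda x.x)\cdots(\lambda x.x)$: given any realizer $h$ of $H$, instantiating at $\vec n$ and feeding $n$ copies of $\lambda x.x$ (legitimate, since the premises are now true equations) yields a realizer of $G^{\vec n}$; this is already a realizer of $\bot$ for a goal clause, and for a definite clause $G^{\vec n}=(c=d)$ is a false equation, hence a realizer of $\top\ra\bot$, so one further application of $\lambda x.x$ produces $\bot$. The number of copies ($n$ or $n+1$) depends only on $H$, so this is a fixed proof-like realizer. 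The main obstacle is the $H$-true case: the arithmetical \emph{reason} why $H$ holds changes with the instantiation $\vec n$, yet a single term must realize every instance; the sub-lemma resolves this by making the realizer behave as the identity on true equations and collapse to $\bot$ at the first false one, so that uniformity comes for free.
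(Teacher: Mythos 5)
Your proposal is correct and follows essentially the same route as the paper's proof: a case split on the truth of $H$ in $\N$, with the false case handled by fixing a witnessing tuple and feeding identity terms (plus one extra application for a definite clause), and the true case handled by the composition chain of premise-realizers, which transports realizer-hood when all instantiated premises are true and collapses to a realizer of $\bot$ at the first false premise. Your downward-induction sub-lemma is just an explicit packaging of the paper's observation that $\lambda y.\,t_1(\ldots(t_n(y))\ldots)$ realizes $\forall X~X\ra X$ or $\top\ra\bot$ according to whether all the $E_i$ hold.
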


\begin{proof} Let us first assume that $H$ is false in $\N$. Let $p_1, \ldots, p_m$ be values for $x_1, \ldots, x_m$ which invalidate it (\textit{i.e.}
the  $E_i(p_1, \ldots, \linebreak[1] p_m)$ are true and $G(p_1, \ldots, p_m)$ is false). Then $I = \lambda y.y$ universally realizes $E_i(p_1, \ldots, p_m)$ for all $i$,
so $\lambda f. f I \ldots I$ universally realizes either $H \ra \bot$ (if $H$ is a goal clause) or $H \ra \top \ra \bot$ (if $H$ is a definite clause).

Now, let us assume that $H$ is true, and let us fix a pole $\pole$ and some $p_1, \ldots, p_m \in \N$. To simplify notation, we will write $E_i$ for $E_i(p_1, \ldots, p_m)$, etc. Let
$t_1, \ldots, t_n$ be terms such that $t_i$ realizes $E_i$ for all $i$. If all the $E_i$ are true, then $G$ is true (and therefore, of the form $c = d$), so
$u = \lambda y. t_1(\ldots(t_n(y))\ldots)$ realizes $G \req \forall X~ X \ra X$. If any of the $E_i$ is false, then $u$ realizes
$\top \ra \bot$, therefore, if $G$ is of the form $c = d$, it is realized by $u$, and if it is $\bot$, then it is realized by $u (\lambda y.y)$.
\end{proof}

However, as we will see in \cref{gim2-card-2n-section}, a universal ($\Pi^0_1$) formula which is not a Horn clause
(\textit{i.e.} which is not of the form described in \cref{horn-prop}) is generally not universally realized, even if it is true in $\N$.

\subsection{The axioms of arithmetics in the realizability model}

Let $s$ denote the successor function: $s(n) = n+1$ for all $n \in \N$. 
The first two axioms of Peano, which can be written as $\forall x~ s(x) \neq 0$ and $\forall x \forall y ~ s(x) = s(y) \ra x = y$, are universally realized (by \cref{horn-prop}). However, the axiom of induction,
which can be written as $\forall x~ \forall Z~ (\forall y~ Z(y) \ra Z(s(y))) \ra Z(0) \ra Z(x)$, is not, as we will see later.

Notwithstanding, we can define a predicate $\nat(x)$ such that all three axioms are realized when relativized to $\nat$: let $\nat(x)$ denote the formula $\forall Z~ (\forall y~ Z(y) \ra Z(s(y))) \ra Z(0) \ra Z(x)$
Intuitively, $\nat(x)$ says ``$x$ is a natural number''.

\begin{proposition} The following formulas are universally realized: \begin{itemize}
\item $\nat(0)$;
\item $\forall x~ \nat(x) \ra \nat(s(x))$;
\item $\forall x~ \nat(x) \ra s(x) \neq 0$;
\item $\forall x \forall y ~ \nat(x) \ra \nat(y) \ra s(x) = s(y) \ra x = y$;
\item $\forall x~ \nat(x) \ra \forall Z~ (\forall y~ Z(y) \ra Z(s(y))) \ra Z(0) \ra Z(x)$.
\end{itemize}
\end{proposition}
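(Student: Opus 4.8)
The plan is to exhibit an explicit proof-like realizer for each of the five formulas, reusing \cref{horn-prop}, \cref{eq-ineq-lemma} and \cref{eq-ia-ra-prop} wherever an equation or inequation occurs, so that the only genuine computation concerns the two Church-numeral combinators. Everything goes through typing plus the adequacy lemma, or through definitional unfolding.

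First, for $\nat(0)$ and the successor formula I would use the standard Church encoding. The numeral $\church{0} = \lambda f. \lambda x. x$ realizes $\nat(0)$: the judgment $\vdash \lambda f. \lambda x. x : \nat(0)$ is derivable by taking the axiom $f : \forall y~ Z(y) \ra Z(s(y)),\, x : Z(0) \vdash x : Z(0)$, applying ($\ra$-intro) twice, and closing with ($\forall_2$-intro) ($Z$ not free in the empty context); adequacy then gives universal realizability. For the successor I would take $\sigma = \lambda n. \lambda f. \lambda x.\, f(n\, f\, x)$ and derive $\vdash \sigma : \forall x~ \nat(x) \ra \nat(s(x))$: with $n : \nat(x)$, instantiate its bound predicate at the ambient $Z$ by ($\forall_2$-elim) and apply it to $f$ and $x$ to get $n\, f\, x : Z(x)$; then instantiate the step hypothesis $f : \forall y~ Z(y) \ra Z(s(y))$ at $y := x$ by ($\forall_1$-elim) to obtain $f : Z(x) \ra Z(s(x))$, whence $f(n\, f\, x) : Z(s(x))$. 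Re-generalizing $Z$ by ($\forall_2$-intro) and then $x$ by ($\forall_1$-intro) (both side conditions hold, the context being empty) yields the result, and adequacy applies.

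Next, the two equational Peano axioms reduce to facts already established. After rewriting $\ia$ as $\ra$ via \cref{eq-ia-ra-prop}, the unrelativized formula $\forall x~ s(x) \neq 0$ is a true goal Horn clause and $\forall x \forall y~ s(x) = s(y) \ra x = y$ is a true definite Horn clause, so both are universally realized by \cref{horn-prop}, say by proof-like terms $h_0$ and $h_1$. The relativized versions follow by prefixing dummy abstractions: if a proof-like term $h$ universally realizes a closed formula $G$, then $\lambda z_1 \ldots z_k.\, h$ universally realizes $C_1 \ra \ldots \ra C_k \ra G$ for arbitrary $C_i$, since such a realizer discards the $k$ arguments pushed onto the stack before behaving like $h$; moreover $h$ is uniform in the instantiating values, so a single term realizes the whole quantified formula. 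Applying this with the $C_i$ taken to be the relevant $\nat(\cdot)$ hypotheses gives the third and fourth formulas. For the third one may also argue directly that $\falsity{\forall x~ \nat(x) \ra s(x) \neq 0}_\pole = \emptyset$ for every pole, because $\falsity{s(n) \neq 0}_\pole = \falsity{\top}_\pole = \emptyset$ by \cref{eq-ineq-lemma}, so that even $\lambda x. x$ realizes it.

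Finally, the relativized induction axiom is trivial once its shape is unfolded: its conclusion $\forall Z~ (\forall y~ Z(y) \ra Z(s(y))) \ra Z(0) \ra Z(x)$ is \emph{by definition} $\nat(x)$, so the fifth formula is exactly $\forall x~ \nat(x) \ra \nat(x)$, universally realized by the identity $\lambda x. x$. The only real obstacle is the successor case: one must correctly manage the interaction of ($\forall_2$-elim), which instantiates the bound predicate of $\nat(x)$, with ($\forall_1$-elim), which instantiates $y := x$ in the step hypothesis, and verify the free-variable side conditions of the two introduction rules. Everything else is a direct appeal to \cref{horn-prop} and \cref{eq-ineq-lemma} together with the weakening observation, or a mere definitional unfolding.
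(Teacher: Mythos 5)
Your proof is correct and follows essentially the same route as the paper's, which simply observes that items 1, 2 and 5 are provable in second-order logic (hence realized, by adequacy) and that items 3 and 4 are consequences of their non-relativized versions (Horn clauses, handled by \cref{horn-prop}); you merely make the realizers and typing derivations explicit. One small slip that does not affect correctness: in the successor derivation, the ($\forall_2$-intro) on $Z$ must be applied while $n : \nat(x)$ is still in the context — the side condition holds not because the context is empty but because $Z$ occurs only bound in $\nat(x)$ — and only afterwards do you discharge $n$ and generalize $x$ over the then-empty context.
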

\begin{proof} The third and fourth are consequences of their non-rela\-tivized versions, and the other three are provable in second-order logic (and therefore realized, by the adequacy lemma).\end{proof}

\section{The characteristic Boolean algebra $\gim 2$} \label{gim2-section}

In this section, we define the characteristic Boolean algebra $\gim 2$ and prove some general properties about it.

Intuitively, $\gim 2$ is a unary predicate (\emph{i.e.} a set of individuals) present in each realizability model. Since we are dealing with models of second-order logic, $\gim 2$ must be thought of as a ``first-class'' object
in the realizability model. However, for simplicity, we will define it as a formula with one parameter, \textit{i.e.} a purely syntactic object.

The predicate $\gim 2(x)$ is designed so that  $\gim{2}(0)$ and $\gim{2}(1)$ are both true (\textit{i.e.} universally realized) and $\gim{2}(2), \gim{2}(3), \ldots\ $ are all false.
Moreover, and crucially, it is designed so that $\gim{2}(0)$ and $\gim{2}(1)$ are true \emph{in the same way}, \textit{i.e.} $\falsity{\gim 2(0)} = \falsity{\gim 2(1)}$. In that respect, it is very
different from the predicate $(x = 0 \vee x = 1)$, which represents the set $\{ 0, 1 \}$.

In fact, we can define a predicate $\gim{n}$ for all $n$, although only $\gim{2}$ is a Boolean algebra:

\begin{notation} 
For all integers $m$ and $n$, we denote by $\max(m,n)$ the greater of the two, and by $\min(m,n)$ the lesser. For all first order terms $a$ and $b$, we write $a \leq b$ for $\min(a,b) = a$.
\end{notation}

\begin{definition} Let $n$ be a natural number and $a$ a first-order term: we denote by $\gim{n}(a)$ the formula $a+1 \leq n$. \end{definition}

By \cref{eq-ineq-lemma}, $\gim{n}(k)$ is universally realized if $k < n$, and $\neg\gim{n}(k)$ is universally realized otherwise.

However, whenever $n > 1$, the formula $\forall x~ \gim{n}(x) \ra (x = 0) \vee \ldots \vee (x = n-1)$ is not universally realized, even though it is true in $\N$. This means that
the predicate $\gim{n}$ does not represent the set $\{0, \ldots, n-1\}$. We will prove this for $n = 2$ in \cref{gim2-card-2n-section}.

\begin{namednotation}[Relativized quantifiers] Let $n$ be an integer, $x$ a first-order variable and $A$ a formula. We write $\forall x^{\gim{n}} A$ for $\forall x \gim{n}(x) \ia A$. \end{namednotation}

Now, we show how to equip $\gim 2$ with the structure of a Boolean algebra, inherited from the set $\{0,1\}$:

\begin{notation} For all natural numbers $m$ and $n$, let \begin{itemize} \item $m \vee n = 1$ if $m > 0$ or $n > 0$, $m \vee n = 0$ otherwise,
\item $m \wedge n = 1$ if $m > 0$ and $n > 0$, $m \wedge n =0$ otherwise,
\item $\neg m = 1$ if $m = 0$, $\neg m = 0$ otherwise,
\end{itemize}
\end{notation}

\begin{definition} The \emph{language of Boolean algebras} is the subset of the realizability language defined by the following grammar:

Terms:
$a,b ::=  x ~|~ 0 ~|~ 1 ~|~ a \vee b ~|~ a \wedge b ~|~ \neg a$

$\arraycolsep=1pt\begin{array}{lrlll} \text{Formulas:} & A,B & ::= &  & \top ~|~ \bot ~|~ a \neq b ~|~ a = b  ~|~ A \ra B  \\
& & & \vert & A \vee B  ~|~ A \wedge B ~|~ \forall x~ A\end{array}$
\end{definition}

Terms and formulas of this language can be interpreted in any Boolean algebra. In particular, they can
be interpreted in the Boolean algebra $\booletwo$.

Note that we use the same notations $\vee$, $\wedge$ and $\neg$ for operations in Boolean algebras and for logical connectives: the context will always make it clear which is which.

\begin{notation}
If $A$ is a formula of the language of Boolean algebras, we denote by $\gim{2} \models A$ the formula $A$ relativized to $\gim 2$
(\textit{i.e.}, $A$ with all $\forall x$ turned into $\forall x^{\gim 2}$).
\end{notation}

We choose this notation, rather than the more conventional $A^{\gim 2}$, in order to emphasise that we think of $\gim 2$ as
a first-class object of the realizability model which, \emph{from the point of view of the realizability model}, happens to be a Boolean algebra
(\textit{i.e.} a first-order structure on the language of Boolean algebras satisfying all the axioms of Boolean algebras).

Note that the formula $\gim{2} \models A$ is true in $\N$ if and only if $A$ is true in $\booletwo$.

The following proposition establishes that $\gim2$ is indeed a Boolean algebra (with at least two elements):

\begin{proposition}\label{gim2-boole-proposition} Let $A$ be a closed formula of the language of Boolean algebras. If $A$ is true in every Boolean algebra with
at least two elements, then $\gim 2 \models A$ is universally realized.
\end{proposition}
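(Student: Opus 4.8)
The plan is to reduce the statement to ordinary classical first-order provability via Gödel's completeness theorem, and then to transport that proof into realizability by \emph{relativizing} it to $\gim 2$ and invoking the adequacy lemma. First I would observe that the Boolean algebras with at least two elements are exactly the models of the finite first-order theory $T$ consisting of the (equational) axioms of Boolean algebras together with the inequation $0 \neq 1$. Since $A$ is true in every such algebra, it is true in every model of $T$, so by completeness there is a finite classical natural-deduction derivation of $A$ from $T$.

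Next I would relativize this derivation to the predicate $\gim 2$. By the standard relativization lemma for first-order logic, a derivation of $A$ from $T$ yields a classical derivation of $\gim 2 \models A$ whose hypotheses are: the relativized axioms $\gim 2 \models \phi$ for each $\phi \in T$; and the \emph{closure conditions} asserting that $\gim 2$ contains the constants and is closed under the operations, namely $\gim 2(0)$, $\gim 2(1)$, $\forall x^{\gim 2}\forall y^{\gim 2}\,\gim 2(x \vee y)$, $\forall x^{\gim 2}\forall y^{\gim 2}\,\gim 2(x \wedge y)$ and $\forall x^{\gim 2}\,\gim 2(\neg x)$. These closure conditions are precisely what license the instantiation of a relativized universal quantifier by a compound term during relativization, and $\gim 2(0)$ also witnesses that the relativized domain is inhabited. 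Reading this derivation as a typing derivation, the adequacy lemma reduces the whole problem to checking that every one of these hypotheses is universally realized by a proof-like term.

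Finally I would verify this last point, which is where the design of $\gim 2$ does the work. Each hypothesis is a formula all of whose first-order quantifiers are relativized, i.e. guarded by an equational implication $\gim 2(x) \ia (\cdots)$; by \cref{eq-ia-ra-prop} it is universally equivalent to the corresponding formula with ordinary implication, which is then a \emph{Horn clause} in the sense of \cref{horn-prop} (its hypotheses $\gim 2(x_i)$ and its conclusion are all equations, inequations, or $\bot$). Moreover each such Horn clause is \emph{true in} $\N$: by \cref{eq-ineq-lemma} the guard $\gim 2(k)$ can hold only when $k \in \{0,1\}$, and on $\{0,1\}$ the operations $\vee$, $\wedge$, $\neg$ are exactly those of the two-element Boolean algebra $\booletwo$, so every relativized Boolean-algebra axiom and every closure condition holds, while $0 \neq 1$ holds outright. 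By \cref{horn-prop} each hypothesis is therefore universally realized by a proof-like term, and adequacy delivers a proof-like universal realizer of $\gim 2 \models A$.

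The main obstacle is the relativization step rather than the realizability content: one must set it up so that the relativized derivation is literally a typing derivation to which adequacy applies, carefully tracking the closure conditions that justify relativizing instantiations of universal quantifiers by compound terms. By contrast the realizability side is light, since guarding every quantifier by an equational implication collapses each relativized axiom to a Horn clause that is true in $\N$ for the sole reason that the operations of $\gim 2$ restrict to genuine Boolean operations on $\{0,1\}$.
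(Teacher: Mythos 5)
Your proposal is correct and follows essentially the same route as the paper: axiomatize the Boolean algebras with at least two elements by universally quantified equations and inequations, observe that their relativizations to $\gim 2$ are universally equivalent to Horn clauses true in $\N$, realize them via \cref{horn-prop}, and conclude by the completeness theorem together with the adequacy lemma. The only difference is that you make explicit the relativization of the derivation and the closure conditions ($\gim 2(0)$, $\gim 2(1)$, closure under $\vee$, $\wedge$, $\neg$) that the paper's one-line conclusion leaves implicit; these are themselves Horn clauses true in $\N$, so they are handled by exactly the same lemma.
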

\begin{proof} The theory of Boolean algebras with at least two elements can be axiomatised by a list $E_1, \ldots, E_n$ of formulas,
where each $E_i$ is of the form $\forall x_{i,1} \ldots x_{i,m_i}~ G_i$ and $G_i$ is of the form $a_i = b_i$ or $a_i \neq b_i$.

For all $i$, denote by $H_i$ the formula $\forall x_{i,1} \ldots \forall x_{i,m_i}~\linebreak[1] \gim{2}(x_{i,1}) \ra \ldots \ra \gim{2}(x_{i,m_i}) \ra G_i$: it is a Horn clause,
and it is universally equivalent to the formula $\gim{2} \models E_i$, which is true in $\N$ (because $E_i$ is true in $\{0,1\}$).
Therefore, by \cref{horn-prop}, $\gim{2} \models E_i$ is universally realized.

The proposition follows by the adequacy lemma and the completeness theorem of first-order logic. \end{proof}

\subsection*{The cardinality of $\gim{2}$}

An important property of a realizability model is the cardinality of its characteristic Boolean algebra. This subsection develops the basic tools
that we will need in order to state and prove facts about this cardinality.

If $\genericBooleanAlgebra$ is a Boolean algebra and $n$ a positive integer, then formula $\forall x_1 \ldots x_n~ \bigvee_{i \neq j}(x_i = x_j)$ is true in
$\genericBooleanAlgebra$ if and only if $\genericBooleanAlgebra$ has fewer than $n$ elements. This justifies the following notation:

\begin{notation} Let $n$ be a positive integer.
\begin{itemize}
\item We denote by $\gimltsimple{n}$ (or $\gimleqsimple{n+1}$) the formula
$$\gim{2} \models \forall x_1 \ldots x_n~ \bigvee_{i \neq j}(x_i = x_j)\text{,}$$
\item We denote by $\gimgeqsimple{n}$ the formula $(\gimltsimple{n}) \ra \bot$,
\item We denote by $\gimeqsimple{n}$ the formula $(\gimgeqsimple{n}) \wedge (\gimltsimple{n+1})$.
\end{itemize}
\end{notation}

\begin{remark} The cardinality of a finite Boolean algebra is always a power of two. Therefore, for all $n$, the formulas
$\gimleqsimple{2^n}$ $\gimltsimple{2^{n+1}}$ are universally equivalent, and in particular, the formula $\gimeqsimple{2^n}$ is universally equivalent to
$\gimgeqsimple{2^n} \wedge \gimltsimple{2^{n+1}}$. \end{remark}

In general, a Boolean algebra has at least $2^n$ element if and only if it contains a sequence of $n$ pairwise-disjoint non-zero elements.
Therefore, by \cref{gim2-boole-proposition}, the sentences ``$\gim 2$ has at least $2^n$ elements'' and ``$\gim 2$ contains a sequence of $n$ pairwise-disjoint elements'' are equivalent
 (see \cite{givant-halmos:ba}, chapter 15).
We will favour the latter, because because its formal statement (as a formula of the realizability language) has a better-behaved falsity value:

\begin{notation} For all positive integers $n$, let us denote by $\boolnodisjoint{n}$ the following formula of the language of Boolean algebras:
\begin{multline*}
\forall x_1 \ldots x_n~~ x_1 \neq 0 \ra \ldots \ra x_n \neq 0 \ra \left(\bigvee_{i \neq j} x_i \wedge x_j \right) \neq 0\text{.} \end{multline*}
\end{notation}

If $\genericBooleanAlgebra$ is a Boolean algebra and $n$ a positive integer, then the formula $\boolnodisjoint{n}$ is true in
$\genericBooleanAlgebra$ if and only if it is not possible to find $n$ pairwise-disjoint non-zero elements in $\genericBooleanAlgebra$.

The following lemma describes the falsity value of $\gim2 \models \boolnodisjoint{n}$ and shows that it is indeed quite simple:

\begin{lemma}\label{falsity-boolnondisjoint-lemma} Let $n$ be a positive integer, $\pole$ a pole, $t_1, \ldots, t_n \in \terms$ and $\pi \in \stacks$. Then $t_1 \stackcons \ldots \stackcons t_n \stackcons
\pi \in \falsity{\gim{2} \models \boolnodisjoint{n}}_\pole$ if and only if at most one of the $t_i$ does not realize $\bot$. \end{lemma}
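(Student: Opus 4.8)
The plan is to compute the falsity value $\falsity{\gim{2} \models \boolnodisjoint{n}}_\pole$ directly by unfolding the definition of $\boolnodisjoint{n}$ together with the relativization to $\gim{2}$ and the falsity-value clauses for the connectives. Recall that $\gim{2} \models \boolnodisjoint{n}$ is
$$\forall x_1^{\gim{2}} \ldots x_n^{\gim{2}}~~ x_1 \neq 0 \ra \ldots \ra x_n \neq 0 \ra \left(\bigvee_{i \neq j} x_i \wedge x_j\right) \neq 0\text{,}$$
where each $\forall x_i^{\gim 2}$ abbreviates $\forall x_i~ \gim{2}(x_i) \ia (\cdots)$. A stack in the falsity value of this universally quantified formula arises from a choice of values $v_1, \ldots, v_n \in \N$ for $x_1, \ldots, x_n$; since $\gim 2(x_i)$ is $x_i + 1 \leq 2$, the equational implication $\gim{2}(v_i) \ia (\cdots)$ contributes the falsity value of its conclusion when $v_i \in \{0,1\}$ and the falsity value of $\top$ (namely $\emptyset$) otherwise. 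So I first restrict attention to the relevant case $v_1, \ldots, v_n \in \{0,1\}$.

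Next I would unfold the nested implications: a stack $t_1 \stackcons \ldots \stackcons t_n \stackcons \pi$ lies in the falsity value precisely when each $t_i \in \truth{v_i \neq 0}_\pole$ and $\pi \in \falsity{(\bigvee_{i \neq j} v_i \wedge v_j) \neq 0}_\pole$. By \cref{eq-ineq-lemma}, $\truth{v_i \neq 0}_\pole$ is $\truth{\top}_\pole$ (all terms) if $v_i \neq 0$, and $\truth{\bot}_\pole$ (the realizers of $\bot$) if $v_i = 0$, i.e.\ if $v_i = 1$. Dually, the conclusion $(\bigvee_{i\neq j} v_i \wedge v_j) \neq 0$: its arithmetic value is $0$ unless at least two of the $v_i$ equal $1$, in which case the disjunction evaluates to $1$ and the inequation $1 \neq 0$ holds, making the falsity value $\falsity{\top}_\pole = \emptyset$; if fewer than two of the $v_i$ equal $1$ the inequation $0 \neq 0$ fails, so the conclusion has falsity value $\falsity{\bot}_\pole = \P(\stacks)$, and $\pi$ lies in it automatically. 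The key combinatorial observation is therefore: the choice $(v_i)$ forces $t_i$ to realize $\bot$ exactly when $v_i = 1$, and the constraint on $\pi$ is vacuous exactly when at most one $v_i$ equals $1$.

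Assembling these, $t_1 \stackcons \ldots \stackcons t_n \stackcons \pi$ is in the falsity value iff there exists a choice $(v_i) \in \{0,1\}^n$ with at most one $v_i = 1$ such that $t_i$ realizes $\bot$ for every $i$ with $v_i = 1$ (the indices with $v_i = 0$ imposing the vacuous $\top$ constraint on $t_i$, and the $\pi$-constraint being vacuous). Taking the union over all such $(v_i)$, the stack is in the falsity value iff either all the $t_i$ satisfy no constraint — which happens with the all-zero choice only when no constraint is needed — or, more usefully, iff we can single out at most one index $i_0$ and require only $t_{i_0}$ to realize $\bot$. This is exactly the statement that \emph{at most one} of the $t_i$ fails to realize $\bot$: if $t_j, t_k$ both fail to realize $\bot$ for $j \neq k$, then every admissible choice (having at most one coordinate set to $1$) leaves one of $j,k$ with $v = 0$ and hence imposes no $\bot$-requirement there, but it also needs the non-vacuous cases to realize $\bot$, and one checks no single choice covers both failures; conversely if at most one $t_i$ fails, pick $(v_i)$ to be the indicator of that single index (or all zeros), and every required realizer is indeed a realizer of $\bot$.

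The main obstacle is purely bookkeeping: matching the existential over the choices $(v_i)$ (coming from $\falsity{\forall x}$ being a union) against the quantifier ``at most one $t_i$ does not realize $\bot$'', and being careful that the admissible choices are precisely those with at most one coordinate equal to $1$ — these are exactly the choices for which the conclusion's falsity value is all of $\P(\stacks)$ so that $\pi$ is unconstrained. I would verify the two directions separately: for the forward direction, fix a witnessing $(v_i)$ with at most one $1$ and read off which $t_i$ are forced to realize $\bot$; for the converse, given that at most one $t_i$ fails to realize $\bot$, exhibit the single-index (or empty) choice $(v_i)$ that puts the stack into the corresponding summand of the union. No genuine difficulty remains once the falsity-value clauses and \cref{eq-ineq-lemma} are applied carefully.
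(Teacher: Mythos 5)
Your overall strategy is the same as the paper's: unfold the falsity value of the relativized universal as a union over choices $(v_1,\ldots,v_n) \in \{0,1\}^n$ (the relativization via equational implication correctly discards values outside $\{0,1\}$), observe that the constraint on $\pi$ is vacuous exactly when at most one $v_i$ equals $1$, and then match the resulting existential against ``at most one $t_i$ does not realize $\bot$''. The paper's proof is exactly this computation, stated more tersely.

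However, your write-up contains a polarity error that, as written, breaks both directions. You first state (correctly, following \cref{eq-ineq-lemma}) that $\truth{v_i \neq 0}_\pole$ is all of $\terms$ when $v_i \neq 0$ and is the set of realizers of $\bot$ when $v_i = 0$ --- but you then append ``i.e.\ if $v_i = 1$'', and from that point on you work with the inverted condition: your ``key combinatorial observation'' and your assembled characterization require $t_i \realize_\pole \bot$ for every $i$ with $v_i = 1$. That is backwards: when $v_i = 0$, the premise $v_i \neq 0$ becomes the false inequation $0 \neq 0$, whose truth value is that of $\bot$, so the $\bot$-requirement is imposed precisely on the indices with $v_i = 0$. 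Under your inverted reading, the all-zero choice is admissible and imposes no constraint whatsoever, so every stack of the given shape would lie in the falsity value and the characterization becomes trivially true; accordingly, your two-direction verification does not cohere (the claim that ``no single choice covers both failures'' establishes nothing when the empty choice imposes no requirements, and your converse witness --- the indicator of the failing index --- would demand that the one term which \emph{fails} to realize $\bot$ realize it). With the polarity corrected, everything assembles as in the paper: an admissible choice (at most one $v_i = 1$) forces all $t_i$ with $v_i = 0$, i.e.\ all but at most one, to realize $\bot$; and conversely, if at most one $t_i$ fails to realize $\bot$, taking $(v_i)$ to be the indicator of that index (or all zeros) witnesses membership. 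The gap is a single sign flip, but it sits exactly at the combinatorial heart of the lemma, so the proof as written does not go through.
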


\begin{proof} For all $x_1, \ldots, x_n$, one can check that $\left(\bigvee_{i \neq j} x_i \wedge x_j \right) = 0$ if and only if at most one of the $x_i$ is not equal to $0$. Therefore,
$t_1 \stackcons \ldots \stackcons t_n \stackcons \pi \in \falsity{\gim{2} \models \boolnodisjoint{n}}_\pole$ if and only there exist $x_1, \ldots, x_n$ such that $t_i$ realize $x_i \neq 0$
for all $i$ and at most one of the $x_i$ is not equal to $0$.\end{proof}

The following notation provides a third way of expressing tha cardinality of $\gim2$. In a way, it is the most important one,
since it is the one which provides the connection with non-determinism (as we will see in \cref{voting-fork-section}):

\begin{notation} Let $n$ be a positive integer. We denote by $\nondet{n}$ the formula $\forall X \left\{ \begin{matrix}
  & \top \ra X \ra \cdots \ra X \ra X & \text{\textit{(1)}}\\
  \cap & X \ra \top \ra \cdots \ra X \ra X & \text{\textit{(2)}}\\
  \vdots & \vdots & \vdots \\
  \cap & X \ra X \ra \cdots \ra \top \ra X & \text{\textit{(n)}}\\
\end{matrix} \right.$.
\end{notation}

Its falsity value is quite similar to that of $\gim2 \models \boolnodisjoint{n}$ (which is why they are equivalent):

\begin{lemma}\label{falsity-nondet-lemma}Let $n$ be a positive integer, $\pole$ a pole, $t_1, \ldots, t_n \in \terms$ and $\pi \in \stacks$. Then $t_1 \stackcons \ldots \stackcons t_n \stackcons
\pi \in \falsity{\nondet{n}}_\pole$ if and only if there is at most one $i$ such that $t_i \star \pi \notin \pole$. \end{lemma}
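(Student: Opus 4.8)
The plan is to unfold the definition of $\nondet{n}$'s falsity value directly, following the same strategy as the proof of \cref{falsity-boolnondisjoint-lemma}. Recall that $\nondet{n} = \forall X~ \bigcap_{k=1}^n (C_1 \ra \cdots \ra C_n \ra X)$ where in the $k$-th conjunct $C_k = \top$ and $C_i = X$ for $i \neq k$. First I would compute $\falsity{\nondet{n}}_\pole$ by instantiating the second-order quantifier: by the clause for $\falsity{\forall X~A}_\pole$, we have $\falsity{\nondet{n}}_\pole = \bigcup_{F : \stacks \ra \P(\stacks)} \falsity{(\ldots)[\formulaSubstitution{X}{F}]}_\pole$, ranging over the substitution of the predicate variable $X$ by an arbitrary falsity-value function (here a constant, so effectively over all $S \subseteq \stacks$, taking $\falsity{X} = S$).

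Next I would use the clause $\falsity{A \cap B}_\pole = \falsity{A}_\pole \cup \falsity{B}_\pole$ to turn the big intersection into a union of the falsity values of the individual conjuncts, and the clause for $\falsity{A \ra B}_\pole$ to read off each conjunct: the $k$-th conjunct's falsity value consists of stacks $u_1 \stackcons \cdots \stackcons u_n \stackcons \rho$ where $u_k \in \truth{\top}_\pole = \terms$ (no constraint, since $\falsity{\top}_\pole = \emptyset$), $u_i \in \truth{X}_\pole = S^\pole$ for $i \neq k$, and $\rho \in \falsity{X}_\pole = S$. So the stack $t_1 \stackcons \cdots \stackcons t_n \stackcons \pi$ lies in $\falsity{\nondet{n}}_\pole$ iff there exists some $S \subseteq \stacks$ and some index $k$ such that $\pi \in S$ and $t_i \in S^\pole$ for all $i \neq k$. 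Unfolding $S^\pole$, this says $t_i \star \rho \in \pole$ for every $\rho \in S$ and every $i \neq k$; since $\pi \in S$, this forces $t_i \star \pi \in \pole$ for all $i \neq k$.

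For the equivalence with the stated condition, the right-to-left direction is the easy one: if there is at most one $i$ with $t_i \star \pi \notin \pole$, pick $k$ to be that index (or any index if there is none) and take $S = \{\pi\}$; then $S^\pole = \{t : t \star \pi \in \pole\}$ contains every $t_i$ with $i \neq k$, so the witness works. The left-to-right direction needs a small argument: from a witnessing $S \ni \pi$ and index $k$, since $S^\pole \subseteq \{t : t \star \pi \in \pole\}$ (as $\pi \in S$), we get $t_i \star \pi \in \pole$ for all $i \neq k$, so at most the single index $k$ can fail, giving ``at most one $i$ with $t_i \star \pi \notin \pole$.''

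The main obstacle, such as it is, is purely bookkeeping: keeping the indexing of the $n$ conjuncts and the $n$ arguments aligned, and being careful that the minimal witness $S = \{\pi\}$ suffices, so that the second-order quantifier does not introduce spurious strength — the point being that one never needs $S$ larger than $\{\pi\}$, which is exactly what makes the falsity value ``simple'' and collapses it to a condition mentioning only $\pi$ rather than all of $\stacks$. This mirrors the role played by the witnesses $x_i \neq 0$ in \cref{falsity-boolnondisjoint-lemma}, and it is why the two lemmas produce parallel conditions (``at most one $t_i$ does not realize $\bot$'' there versus ``at most one $t_i \star \pi \notin \pole$'' here).
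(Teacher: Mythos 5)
Your proof is correct: the paper states this lemma without any proof, treating it as a routine unfolding of the falsity-value clauses, and your argument is exactly that unfolding — instantiate the second-order quantifier by a subset $S \subseteq \stacks$, use $\falsity{A \cap B}_\pole = \falsity{A}_\pole \cup \falsity{B}_\pole$ and the clause for $\ra$, and observe that the minimal witness $S = \{\pi\}$ suffices for the right-to-left direction while $\pi \in S$ forces the left-to-right one. The only blemish is the typo in the range of the second-order quantifier — since $X$ is a propositional (nullary) variable, $F$ ranges over $\P(\stacks)$ itself rather than over functions $\stacks \to \P(\stacks)$ — which your own parenthetical remark already corrects.
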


\begin{proposition}\label{gimlt-long-short-proposition} Let $n$ be a positive integer. The formulas $\gimlt{n}$, $\gim{2} \models \boolnodisjoint{n}$ and $\nondet{n}$
are universally equivalent. \end{proposition}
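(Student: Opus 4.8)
The plan is to prove the two equivalences $\gimlt{n} \ea (\gim{2} \models \boolnodisjoint{n})$ and $(\gim{2} \models \boolnodisjoint{n}) \ea \nondet{n}$ separately and then conclude by transitivity of universal equivalence (which holds because implications compose). The first equivalence is purely Boolean-algebraic and will follow from \cref{gim2-boole-proposition}; the second carries the computational content and will be obtained by comparing the falsity values of \cref{falsity-boolnondisjoint-lemma} and \cref{falsity-nondet-lemma}, one implication being realized by the identity and the other by a term using $\cc$.

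For the first equivalence, recall that $\gimlt{n}$ is $\gim{2} \models \psi$ with $\psi$ the formula $\forall x_1 \ldots x_{2^n}~ \bigvee_{i \neq j}(x_i = x_j)$, expressing ``fewer than $2^n$ elements'', while $\boolnodisjoint{n}$ expresses ``there is no family of $n$ pairwise-disjoint non-zero elements''. The standard fact that a Boolean algebra has at least $2^n$ elements exactly when it contains $n$ pairwise-disjoint non-zero elements shows that $\psi \ea \boolnodisjoint{n}$ is true in every Boolean algebra with at least two elements. By \cref{gim2-boole-proposition}, $\gim{2} \models (\psi \ea \boolnodisjoint{n})$ is universally realized; since relativization to $\gim{2}$ affects only first-order quantifiers and commutes with $\ra$ (hence with $\ea$), this formula is $(\gim{2} \models \psi) \ea (\gim{2} \models \boolnodisjoint{n})$, i.e. $\gimlt{n} \ea (\gim{2} \models \boolnodisjoint{n})$.

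For the second equivalence I would first compare falsity values pointwise. By the two lemmas, the only stacks that can belong to either falsity value have the shape $t_1 \stackcons \ldots \stackcons t_n \stackcons \pi$, and since a term realizing $\bot$ sends \emph{every} stack into $\pole$ (in particular $t_i \star \pi \in \pole$), the condition ``at most one $t_i$ fails to realize $\bot$'' implies ``at most one $i$ has $t_i \star \pi \notin \pole$''. Hence $\falsity{\gim{2} \models \boolnodisjoint{n}}_\pole \subseteq \falsity{\nondet{n}}_\pole$ for every pole $\pole$, so by the subtyping remark the identity $\lambda x. x$ universally realizes $\nondet{n} \ra (\gim{2} \models \boolnodisjoint{n})$.

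The converse implication is the main obstacle, because this inclusion is strict: a stack may lie in $\falsity{\nondet{n}}_\pole$ merely because $t_i \star \pi \in \pole$ for the \emph{particular} bottom stack $\pi$, without $t_i$ realizing $\bot$. To bridge the gap I would use the control operator to turn each $t_i$ into a term realizing $\bot$ as soon as $t_i \star \pi \in \pole$. Concretely, set $v = \lambda f x_1 \ldots x_n.~ \cc\,(\lambda k.~ f\,(k x_1) \cdots (k x_n))$ and check, via \emph{push}, \emph{save}, \emph{grab} and \emph{restore}, that $v \star t \stackcons t_1 \stackcons \ldots \stackcons t_n \stackcons \pi \evaluateskam t \star (k_\pi t_1) \stackcons \ldots \stackcons (k_\pi t_n) \stackcons \pi$. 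Since $(k_\pi t_i) \star \rho \evaluateskam t_i \star \pi$ for every $\rho$, whenever $t_i \star \pi \in \pole$ the term $k_\pi t_i$ realizes $\bot$. Now if $t \realize \gim{2} \models \boolnodisjoint{n}$ and $t_1 \stackcons \ldots \stackcons t_n \stackcons \pi \in \falsity{\nondet{n}}_\pole$ (at most one $t_i \star \pi \notin \pole$), then at most one $k_\pi t_i$ fails to realize $\bot$, so by \cref{falsity-boolnondisjoint-lemma} the transformed stack lies in $\falsity{\gim{2} \models \boolnodisjoint{n}}_\pole$; thus $t$ sends it into $\pole$, and by closure under anti-evaluation so does the original process. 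This shows the proof-like term $v$ universally realizes $(\gim{2} \models \boolnodisjoint{n}) \ra \nondet{n}$. Pairing the two implications gives the universal equivalence $(\gim{2} \models \boolnodisjoint{n}) \ea \nondet{n}$, and transitivity with the first equivalence completes the proof.
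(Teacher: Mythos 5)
Your proposal is correct and takes essentially the same approach as the paper: the first equivalence via \cref{gim2-boole-proposition}, the identity term for $\nondet{n} \ra (\gim{2} \models \boolnodisjoint{n})$ by comparing the falsity values given in \cref{falsity-boolnondisjoint-lemma} and \cref{falsity-nondet-lemma}, and your term $v$ is exactly the paper's realizer $\lambda t.\ \lambda u_1.\ldots.\lambda u_n.\ \cc\ (\lambda k.\ t\ (k u_1) \ldots (k u_n))$ for the converse, justified by the same observation that $u_i \star \pi \in \pole$ implies $k_\pi u_i \realize_\pole \bot$. The only difference is that you spell out the evaluation steps and the relativization detail that the paper's terser proof leaves implicit.
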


\begin{proof}[Proof of \cref{gimlt-long-short-proposition}] The formulas $\gimlt{n}$, $\gim{2} \models \boolnodisjoint{n}$ are universally equivalent by \cref{gim2-boole-proposition}.

By lemmas \ref{falsity-boolnondisjoint-lemma} and \ref{falsity-nondet-lemma}, $\nondet{n} \ra (\gim{2} \models A_n)$ is universally realized by $\lambda t. t$
(because $\nondet{n} \rle (\gim{2} \models A_n)$) and $(\gim{2} \models A_n) \ra \nondet{n}$ is universally realized by
$\lambda t.\ \lambda u_1.\ldots.\lambda u_n.\ \cc\ (\lambda k.\ t\ (k u_1) \linebreak[1] \ldots \linebreak[1] (k u_n))$ (because for all $\pole$, $\pi$ and $i$, $u_i \star \pi \in \pole$ implies $k_\pi u_i \realize_\pole \bot$).
\end{proof}

As a side note, we would like to bring the reader's attention on the following result (which is due to Krivine, but does not seem to appear
in any published work):

\begin{proposition} The formulas $\gimeqsimple{2}$ and $\forall x~\linebreak[1] \nat(x)$ are universally equivalent. \end{proposition}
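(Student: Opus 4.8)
The plan is to reduce $\gimeqsimple2$ to the combinatorially transparent formula $\nondet2$, and then to prove $\nondet2$ and $\forall x~\nat(x)$ universally equivalent by exhibiting explicit proof-like realizers for both implications. For the reduction, recall that $\gimeqsimple2$ is $(\gimgeqsimple2)\wedge(\gimltsimple3)$. The conjunct $\gimgeqsimple2$ is universally realized: instantiating $\gimltsimple2=\gim2\models\forall x_1 x_2~(x_1=x_2)$ at $0$ and $1$ and using that $\gim2(0)$ and $\gim2(1)$ are universally realized while $0\neq1$ (a Horn clause, \cref{horn-prop}) yields a proof-like realizer of $(\gimltsimple2)\ra\bot$. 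Hence $\gimeqsimple2$ is universally equivalent to $\gimltsimple3$. Since the cardinality of a finite Boolean algebra is a power of two, $\gimltsimple3$ is universally equivalent to $\gimlt2$, which by \cref{gimlt-long-short-proposition} is universally equivalent to $\nondet2$. So it suffices to treat $\nondet2$ versus $\forall x~\nat(x)$, and throughout I would use the concrete description of $\falsity{\nondet2}_\pole$ from \cref{falsity-nondet-lemma}: $t_1\stackcons t_2\stackcons\pi\in\falsity{\nondet2}_\pole$ if and only if $t_1\star\pi\in\pole$ or $t_2\star\pi\in\pole$.

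For $\forall x~\nat(x)\ra\nondet2$, I claim the proof-like term $\lambda n.\lambda x_1 x_2.~n\,(\lambda z.x_2)\,x_1$ works. Given $n\realize_\pole\forall x~\nat(x)$ and $t_1,t_2,\pi$ with $t_1\star\pi\in\pole$ or $t_2\star\pi\in\pole$, the relevant process reduces to $n\star(\lambda z.t_2)\stackcons t_1\stackcons\pi$, which I must place in $\pole$. Here $\lambda z.t_2$ plays the successor step and $t_1$ the base; because $n$ realizes induction for \emph{every} second-order witness, it is enough to exhibit one $F:\N\ra\P(\stacks)$ and one index $m$ with $\pi\in F(m)$ for which $\lambda z.t_2$ realizes $\forall y~(F(y)\ra F(s(y)))$ and $t_1$ realizes $F(0)$. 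The freedom in $F$ lets me route to the winning branch: if $t_1\star\pi\in\pole$, take $F(0)=\{\pi\}$ and $F(k)=\emptyset$ for $k>0$ with $m=0$; if $t_2\star\pi\in\pole$, take $F(0)=\emptyset$, $F(k)=\{\pi\}$ for $k\geq1$ and $m=1$, so that the only non-vacuous step obligation is $(\lambda z.t_2)\star w\stackcons\pi\evaluateskam t_2\star\pi\in\pole$. In either case the hypotheses needed for $n$ are met, so the process lies in $\pole$.

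For $\nondet2\ra\forall x~\nat(x)$, given $c\realize_\pole\nondet2$ I would build a realizer of induction by an unbounded search. Set $\nu=\lambda u v.~\cc\,(\lambda k.~G\,u\,v\,k)$, where $G$ is a fixpoint satisfying $G\,u\,v\,k\evaluateskam c\,(k\,v)\,(G\,u\,(u\,v)\,k)$. To see $\nu\realize_\pole\forall x~\nat(x)$, fix $F$, $m$, a step $u$, a base $v$ and $\pi\in F(m)$; then $u^i v$ realizes $F(i)$ for all $i$, so $u^m v\star\pi\in\pole$. After $\cc$ captures $k_\pi$, the process $G\,u\,v\,k_\pi\star\pi$ unfolds to $c\star(k_\pi v)\stackcons(G\,u\,(u v)\,k_\pi)\stackcons\pi$; since $(k_\pi v)\star\pi\evaluateskam v\star\pi$, the characterisation of $c$ as a parallel-or on $\pole$ shows this process is in $\pole$ as soon as $v\star\pi\in\pole$ or the recursive call $G\,u\,(u v)\,k_\pi\star\pi\in\pole$. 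A meta-level induction on the least $i$ with $u^i v\star\pi\in\pole$ then gives $G\,u\,v\,k_\pi\star\pi\in\pole$, whence $\nu\star u\stackcons v\stackcons\pi\in\pole$. The resulting term $\lambda c.~\nu$ is proof-like, since $\cc$ and the fixpoint combinator are, and $c$ is a bound variable.

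The hard part will be this last direction: one must realise \emph{full} induction out of a merely binary parallel-or, which forces the use of a fixpoint combined with $\cc$ and the meta-induction verifying that the search genuinely lands in the pole. By contrast, the forward direction superficially appears to demand true parallel-or, yet it is a one-liner, the whole subtlety being absorbed into the per-case choice of the second-order witness $F$ — recognising that this freedom suffices is the key observation. Finally I would assemble the two implications together with the reduction by composition, and pair them using the standard realizer of $\wedge$, obtaining a proof-like realizer of $\gimeqsimple2\ea\forall x~\nat(x)$, that is, the claimed universal equivalence.
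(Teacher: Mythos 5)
Your proof is correct, and its skeleton --- reduce $\gimeqsimple{2}$ to $\nondet{2}$ via \cref{gimlt-long-short-proposition} and the power-of-two remark, then realize both implications between $\nondet{2}$ and $\forall x~\nat(x)$ --- is exactly the paper's (the paper leaves the reduction implicit where you spell it out; both are fine). Your treatment of $\forall x~\nat(x)\ra\nondet{2}$ is also essentially the paper's: the paper uses the term $\lambda t.\lambda u.\lambda v.~t\,(\lambda z. u)\,v$ together with the identities $\nat(0)\req\forall Z~\top\ra Z\ra Z$ and $\nat(1)\req\forall Z_0\forall Z_1~(Z_0\ra Z_1)\ra Z_0\ra Z_1$, which is precisely your per-case choice of the second-order witness $F$, packaged differently. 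Where you genuinely diverge is the direction $\nondet{2}\ra\forall x~\nat(x)$: you implement an unbounded search, capturing the continuation with $\cc$ and using a fixpoint that, at stage $j$, forks between escaping with $u^j v$ through the captured $k_\pi$ and recursing with $u^{j+1}v$; correctness is a downward meta-induction from the least $i$ with $u^i v\star\pi\in\pole$, whose existence itself already requires the upward induction showing $u^i v\realize F(i)$. The paper's realizer is leaner: $\lambda\psi.\,Y_\psi$, where $Y_\psi$ is a fixpoint satisfying $Y_\psi\star\pi\evaluateskam\psi\star\church{0}\stackcons\church{s}Y_\psi\stackcons\pi$, so that $Y_\psi$ forks between $\church{0}$ and $\church{s}$ applied to itself and thus behaves like \emph{every} Church numeral at once; a single upward induction ($Y_\psi\realize\nat(n)$ implies $Y_\psi\realize\nat(n+1)$, since $\church{s}$ realizes $\nat(n)\ra\nat(n+1)$) concludes. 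Both arguments are valid, but yours needs $\cc$ plus two meta-inductions where the paper needs neither the control operator nor the search --- so the one point your commentary gets wrong is that realizing full induction from the binary choice operator does \emph{not} force the use of $\cc$. A minor terminological slip: the property of $c$ you invoke is the fork / may-nondeterministic choice behaviour of \cref{falsity-nondet-lemma}; in this paper ``parallel or'' refers to $\forp$, which corresponds to $\nondet{3}$ and to $\gimleqsimple{4}$, not to $\nondet{2}$.
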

\begin{proof} By \cref{gimlt-long-short-proposition}, it suffices to realize $\nondet{2} \ea \forall x, \nat(x)$.

Let us denote by $\delta$ the $\lambda_c$-term $\lambda d. x (d d)$ and by $Y$ the $\lambda_c$-term $\delta \delta$.
For all $t \in \terms$ and all $\pi \in \stacks$, $Y[\substitution{x}{t}] \star \pi \evaluateskam t \star Y[\substitution{x}{t}] \stackcons \pi$.

Let $\church{0}$ denote the term $\lambda f.\lambda x.\ x$ (\emph{i.e} the Church numeral $0$) and $\church{s}$ the term
$\lambda n.\lambda f.\lambda x.\ n\ f\ (f\ x)$ (\emph{i.e.} the Church successor function). Using the adequacy lemma, one can check that $\church{0}$ universally realizes
$\nat(0)$ and that $\church{s}$ universally realizes $\forall n.\ \nat(n) \ra \nat(n+1)$.

To simplify notation, let us denote $Y[\substitution{x}{\lambda y.\ \psi\ \church{0}\ (\church{s} y)}]$ by $Y_\psi$.
We prove that the left-to-right implication is universally realized by $l = \lambda \psi.\ Y_\psi$. Let us fix a
pole $\pole$, and let $\psi \realize \nondet{2}$.
For all $\pi \in \stacks$, $Y_\psi \star \pi \evaluateskam \psi \star \church{0}  \stackcons \church{s}Y_\psi \stackcons \pi$.
Therefore, $Y_\psi \realize \nat(0)$, and for all $n$ such that $Y_\psi \realize \nat(n)$, $Y_\psi \realize \nat(n+1)$ (because $\church{s}$ realizes $\nat(n) \ra \nat(n+1)$). 
By induction, $Y_\psi$ realizes $\nat(n)$ for all $n$.

Let us prove that the right-to-left implication is universally realized by $r = \lambda t.\ \lambda u.\ \lambda v.\ t\ (\lambda z. u)\ v$. First, note that
$\nat(0) \req \forall Z,~ \top \ra Z \ra Z$ and $\nat(1) \req \forall Z_0, Z_1,~ (Z_0 \ra Z_1) \ra Z_0 \ra Z_1$.
Let us fix a pole $\pole$, and let $t \realize \forall x,~ \nat(x)$, $X \in \P(\stacks)$ and $u \stackcons v \stackcons \pi \in \falsity{\nondet{2}}_\pole$,
which means that $\pi \in X$, and $u$ or $v$ realizes $X$. If $v$ realizes $X$,
then, since $t$ realizes $\nat(0)$, $t\ (\lambda z.u)\ v$ realizes $X$. If $u$ realizes $X$, then $\lambda z. u$ realizes $\top \ra X$, $t$ realizes $\nat(1)$ and $v$ realizes $\top$,
so again $t\ (\lambda z. u)\ v$ realizes $X$. In both cases, $r \star t \stackcons u \stackcons v \stackcons \pi \in \pole$.
\end{proof}

\section{Multi-evaluation relations} \label{evaluation-relations-section}

This section defines a correspondence between realizability structures and a certain notion of \emph{multi-evaluation relations} which extend the evaluation relation of $\lambda_c$-calculus: this
notion is powerful enough to express arbitrary mixes of may- and must-nondetermin\-ism.

Then, we will explore some of the connections between the properties of the theory generated by a structure and the properties of the corresponding evaluation relation.

While the evaluation relation of $\lambda_c$-calculus is a relation between processes, \emph{multi-evaluation relations} are relations between \emph{sets of processes} which satisfy certain axioms.
Using ``$\evaluates$'' generic symbol for evaluation relations, the following intuitions should be kept in mind:

The statement ``$\{ p \} \evaluates \{ q_1 \}$ and $\{ p \} \evaluates \{ q_2 \}$'' specifies that, under ``$\evaluates$'', $p$ reduces nondeterministically to $q_1$ or $q_2$,
according to an oracle which ``favours'' being in the pole: this is \emph{may}-nondeterminism. One can also imagine that the program tries both branches in parallel,
to find out if one eventually meets the condition for being in the pole. Meanwhile,
``$\{ p \} \evaluates \{ q_1, q_2 \}$'' specifies that $p$ reduces nondeterministically to $q_1$ or $q_2$,
according to an oracle which ``favours'' being out of the pole: this is \emph{must}-nondeterminism.

Additionally, we can mix may- and must-nondeterminism by writing, say, ``$\{ p \} \evaluates \{ q_1, q_2 \}$ and $\{ p \} \evaluates \{ q_2, q_3 \}$ and $\{ p \} \evaluates \{ q_3, q_1 \}$''.
This kind of two-out-of-three nondeteministic choice will be studied in detail in the next section, under the name of $3$-voting instruction.

Finally, statements like ``$\{ p_1, p_2 \} \evaluates \{ q_1, q_2 \}$'' allow interaction between different branches of execution: 
if one branch evaluates to $p_1$ and another to $p_2$, then they merge and evaluate \emph{must}-nonde\-terministically to $q_1$ or $q_2$.

\begin{definition} A \emph{multi-evaluation relation}, or simply \emph{evaluation relation} is a binary relation on $\P(\processes)$ such that \begin{itemize}
\item For all $p, q \in \processes$ such that $p \evaluateskamone q$, $\{p\} \evaluates \{q\}$,
\item For all $p \in \processes$, $\{ p \} \evaluates \{ p \}$ \emph{(identity)},
\item For all $P, Q, P', Q' \in \P(\processes)$, for all $r \in \processes$, if $P \evaluates Q \cup \{ r\}$ and
$P' \cup \{ r \} \evaluates Q'$, then $P \cup P' \evaluates Q \cup Q'$ \emph{(cut)},
\item For all $P, Q, P', Q' \in \P(\processes)$ such that $P \evaluates Q$, if $P \subseteq P'$ and $Q \subseteq Q'$, then $P' \evaluates Q'$ \emph{(weakening)}.
\end{itemize}
\end{definition}

The intuition behind the cut rule is the following: if on the one hand $p_1, \ldots, p_m$, together, evaluate to $r$ or one of $q_1, \ldots, q_n$, and the other hand $p'_{1'}, \ldots, p'_{m'}$, together with $r$, evaluate
to one of $q'_{1'}, \ldots, q'_{n'}$, then, together, $p_1, \ldots, p_m$ and $p'_{1'}, \ldots, p'_{m'}$ must evaluate to one of $q_1, \ldots, q_n$ (directly) or one of $q'_{1'}, \ldots, q'_{n'}$ (\textit{via} $r$).

\begin{definition} Let $\evaluates$ be a binary relation on $\P(\processes)$. The \emph{realizability structure generated by} $\evaluates$, denoted by $\modelevaluates$,
is the of set all poles $\pole$ such that, for all $P, Q$ such that $P \evaluates Q$, if $Q \subseteq \pole$, then $P \cap \pole \neq \emptyset$. \end{definition}

\begin{definition} Let $\setofpoles$ be a realizability structure. The \emph{evaluation relation associated with} $\setofpoles$, written ``$\evaluatesmodel{\setofpoles}$'',
is the binary relation on $\P(\processes)$ defined by the following: for all $P, Q \in \P(\processes)$, $P \evaluatesmodel{\setofpoles} Q$ if and only if for all $\pole \in \setofpoles$,
if $Q \subseteq \pole$, then $P \cap \pole \neq \emptyset$. \end{definition}

\begin{proposition}  \label{struct-eval-struct-proposition} Let $\setofpoles_0$ be a realizability structure: $\evaluates_{\setofpoles_0}$ is an evaluation relation, and $\setofpoles_{\evaluates_{\setofpoles_0}} = {\setofpoles_0}$. \end{proposition}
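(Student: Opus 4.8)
The plan is to prove the two claims separately: first that $\evaluates_{\setofpoles_0}$ satisfies the four axioms of a multi-evaluation relation, and then that the round-trip $\setofpoles_{\evaluates_{\setofpoles_0}}$ recovers $\setofpoles_0$. Throughout I abbreviate $\evaluates_{\setofpoles_0}$ by $\evaluates$, so that $P \evaluates Q$ unfolds to: for every $\pole \in \setofpoles_0$, if $Q \subseteq \pole$ then $P \cap \pole \neq \emptyset$.

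For the axioms, identity and weakening are immediate from this formulation: $\{p\} \evaluates \{p\}$ is the tautology $p \in \pole \implies p \in \pole$, while enlarging $P$ only enlarges each $P \cap \pole$ and enlarging $Q$ only strengthens the hypothesis $Q \subseteq \pole$, which gives weakening. For the first axiom, if $p \evaluateskamone q$ then $p \evaluateskam q$, so closure of each pole under anti-evaluation yields $q \in \pole \implies p \in \pole$, which is exactly $\{p\} \evaluates \{q\}$. The only axiom requiring a genuine argument is cut: assuming $P \evaluates Q \cup \{r\}$ and $P' \cup \{r\} \evaluates Q'$, I would fix $\pole \in \setofpoles_0$ with $Q \cup Q' \subseteq \pole$ and split on whether $r \in \pole$. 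If $r \in \pole$, then $Q \cup \{r\} \subseteq \pole$, so the first hypothesis produces a point of $P \cap \pole$; if $r \notin \pole$, then the second hypothesis applied to $Q' \subseteq \pole$ must produce a point of $P' \cap \pole$ (the witness cannot be $r$, since $r \notin \pole$). Either way $(P \cup P') \cap \pole \neq \emptyset$.

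For the equality $\setofpoles_{\evaluates_{\setofpoles_0}} = \setofpoles_0$, the inclusion $\setofpoles_0 \subseteq \setofpoles_{\evaluates_{\setofpoles_0}}$ is direct: any $\pole_0 \in \setofpoles_0$ is a pole, and whenever $P \evaluates Q$ with $Q \subseteq \pole_0$, instantiating the defining universal condition of $\evaluates$ at the single pole $\pole_0 \in \setofpoles_0$ already gives $P \cap \pole_0 \neq \emptyset$. The reverse inclusion is the crux, and I would establish its contrapositive: given a pole $\pole \notin \setofpoles_0$, I must exhibit a pair witnessing $\pole \notin \setofpoles_{\evaluates_{\setofpoles_0}}$. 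The separating pair I would choose is $P = \processes \setminus \pole$ and $Q = \pole$. Then trivially $Q \subseteq \pole$ and $P \cap \pole = \emptyset$, so it remains only to verify $P \evaluates Q$, i.e. that for every $\pole' \in \setofpoles_0$ with $\pole \subseteq \pole'$ one has $(\processes \setminus \pole) \cap \pole' \neq \emptyset$. This is exactly where the assumption $\pole \notin \setofpoles_0$ is used: since $\pole' \in \setofpoles_0$ forces $\pole \neq \pole'$, the inclusion $\pole \subseteq \pole'$ becomes strict, so $\pole' \setminus \pole \neq \emptyset$ supplies the required point of $P \cap \pole'$.

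The only real obstacle is guessing the separating pair in the reverse inclusion; once one picks $(\processes \setminus \pole,\ \pole)$, everything reduces to the elementary observation that a pole not belonging to $\setofpoles_0$ is a \emph{proper} subset of any member of $\setofpoles_0$ that contains it. The axiom verifications (cut being the only nontrivial one) and the easy inclusion are then routine.
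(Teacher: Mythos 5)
Your proof is correct and follows essentially the same route as the paper's: the same direct argument for the inclusion $\setofpoles_0 \subseteq \setofpoles_{\evaluates_{\setofpoles_0}}$, and the same separating pair $(\processes \setminus \pole,\ \pole)$ for the converse, exploiting that a pole not in $\setofpoles_0$ is a \emph{proper} subset of any member of $\setofpoles_0$ containing it. The only difference is that you also spell out the verification of the four axioms (cut included), which the paper's proof omits entirely.
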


This means that any realizability structure is fully described by its associated evaluation relation.

\begin{proof} Let $\pole \in \setofpoles$. By definition of $\evaluates_\setofpoles$, $\pole \in \setofpoles_{\evaluates_\setofpoles}$.
Let $\pole \in \allpoles \setminus \setofpoles$. Then for all $\pole' \in \setofpoles$, $\pole' \neq \pole$, which means that if $\pole \subseteq \pole'$, there exists $p \in \pole'$ such that $p \notin \pole$.
In other words, $(\processes \setminus \pole) \evaluates_\setofpoles \pole$, so $\pole \notin \setofpoles$.  \end{proof}

\begin{lemma} Let $\evaluates$ be a binary relation on $\P(\processes)$. There exists a smallest evaluation relation containing $\evaluates$.
Moreover, if we denote this relation by $\evaluates^*$, then $\setofpoles_{\evaluates^*} = \setofpoles_{\evaluates}$.\end{lemma}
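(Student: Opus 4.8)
The statement has two parts. First, that among all evaluation relations containing a given binary relation $\evaluates$ on $\P(\processes)$, there is a smallest one, which we call $\evaluates^*$. Second, that this closure does not change the generated structure: $\setofpoles_{\evaluates^*} = \setofpoles_{\evaluates}$. The plan is to obtain the smallest evaluation relation as an intersection, and then to sandwich $\setofpoles_{\evaluates^*}$ between $\setofpoles_{\evaluates}$ and itself using the fact (from \cref{struct-eval-struct-proposition}) that $\evaluatesmodel{\setofpoles}$ is already an evaluation relation.

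**Existence of the smallest evaluation relation.**
First I would show that the class of evaluation relations is closed under arbitrary intersection. Each of the four defining conditions (the embedding of $\evaluateskamone$, identity, cut, and weakening) is a universally-quantified Horn-style closure condition on the relation, so if a family $(\evaluates_j)_{j}$ all satisfy it, their intersection $\bigcap_j \evaluates_j$ does too: for identity and the $\evaluateskamone$-embedding, the required pairs lie in every $\evaluates_j$ hence in the intersection; for cut and weakening, the hypotheses being in $\bigcap_j \evaluates_j$ means they hold in each $\evaluates_j$, which forces the conclusion in each $\evaluates_j$, hence in the intersection. Then I would define $\evaluates^*$ as the intersection of all evaluation relations containing $\evaluates$. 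This family is nonempty: the full relation $\P(\processes) \times \P(\processes)$ is trivially an evaluation relation containing $\evaluates$, and moreover $\evaluatesmodel{\setofpoles_\evaluates}$ is an evaluation relation containing $\evaluates$ by \cref{struct-eval-struct-proposition} together with the definition of $\setofpoles_\evaluates$. The intersection $\evaluates^*$ is therefore itself an evaluation relation (by the closure-under-intersection just established), it contains $\evaluates$, and it is contained in every evaluation relation containing $\evaluates$, i.e.\ it is the smallest one.

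**Equality of the generated structures.**
For the second part, I would prove the two inclusions. Since $\evaluates \subseteq \evaluates^*$, and since $P \evaluatesmodel{\setofpoles} Q$ imposes a closure condition that is monotone decreasing in the relation, a larger relation generates fewer admissible poles, giving $\setofpoles_{\evaluates^*} \subseteq \setofpoles_{\evaluates}$ directly from the definition of $\setofpoles_{(-)}$. For the reverse inclusion, the key observation is that $\evaluatesmodel{\setofpoles_\evaluates}$ is an evaluation relation containing $\evaluates$ (as noted above), so by minimality of $\evaluates^*$ we have $\evaluates^* \subseteq \evaluatesmodel{\setofpoles_\evaluates}$. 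Applying $\setofpoles_{(-)}$ and using the same antitone-monotonicity, this yields $\setofpoles_{\evaluatesmodel{\setofpoles_\evaluates}} \subseteq \setofpoles_{\evaluates^*}$; but $\setofpoles_{\evaluatesmodel{\setofpoles_\evaluates}} = \setofpoles_\evaluates$ by \cref{struct-eval-struct-proposition}, giving $\setofpoles_\evaluates \subseteq \setofpoles_{\evaluates^*}$.

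**Main obstacle.**
The genuinely substantive point, and the one I would verify most carefully, is that $\evaluates$ is contained in an evaluation relation of the form $\evaluatesmodel{\setofpoles}$ at all — specifically $\evaluates \subseteq \evaluatesmodel{\setofpoles_\evaluates}$. This is exactly what makes the family of containing evaluation relations nonempty in a \emph{useful} way and simultaneously supplies the bridge to \cref{struct-eval-struct-proposition} needed for the reverse inclusion; it follows by unfolding the definition of $\setofpoles_\evaluates$ and reading off that any pair related by $\evaluates$ satisfies the pole-condition defining $\evaluatesmodel{\setofpoles_\evaluates}$. Everything else is routine: the closure-under-intersection is a mechanical check of the four axioms, and both inclusions of the structures reduce to the single antitone-monotonicity remark that enlarging the relation shrinks the set of generated poles.
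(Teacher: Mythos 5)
Your proof is correct and takes essentially the same route as the paper's: existence of $\evaluates^*$ via stability of evaluation relations under arbitrary intersections, then the equality $\setofpoles_{\evaluates^*} = \setofpoles_{\evaluates}$ by combining antitonicity of $\setofpoles_{(-)}$ (giving $\setofpoles_{\evaluates^*} \subseteq \setofpoles_{\evaluates}$) with the observation that $\evaluatesmodel{\setofpoles_{\evaluates}}$ is an evaluation relation containing $\evaluates$, hence containing $\evaluates^*$ by minimality, so that \cref{struct-eval-struct-proposition} yields the reverse inclusion. You merely spell out the routine checks the paper leaves implicit (closure of the axioms under intersection, nonemptiness of the family, and $\evaluates \subseteq \evaluatesmodel{\setofpoles_{\evaluates}}$), which is fine.
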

\begin{proof} The set of evaluations relations containing $\evaluates$ is stable by arbitrary intersections, so it has a smallest element.

Since ${\evaluates} \subseteq {\evaluates^*}$, $\setofpoles_{\evaluates} \supseteq \setofpoles_{\evaluates^*}$. Moreover, $\evaluates_{\setofpoles_{\evaluates}}$ is an evaluation relation which contains $\evaluates$, so it contains
$\evaluates^*$. Therefore,  $\setofpoles_{\evaluates^*} \subseteq \setofpoles_{\evaluates_{\setofpoles_{\evaluates}}} = \setofpoles_{\evaluates}$, by \cref{struct-eval-struct-proposition}.
\end{proof}

If $\evaluates$ is an evaluation realation, it is true in general that ${\evaluates} \subseteq {\evaluates_{\setofpoles_{\evaluates}}}$, but not that ${\evaluates} = {\evaluates_{\setofpoles_{\evaluates}}}$.\footnote{This becomes true if $\evaluates$ is $\emph{compact}$
(\textit{i.e.} for all $P \evaluates Q$, there exist two finite sets $P_0 \subseteq P$ and $Q_0 \subseteq Q$ such that $P_0 \evaluates Q_0$) or if $\evaluates$ satisfies the following infinitary cut-rule (which is implied by compactness): if there exists $R$ such that, for all $S \subseteq R$, $P \cup S \evaluates (R \setminus S) \cup Q$, then $P \evaluates Q$ .}
\section{Voting instructions and fork} \label{voting-fork-section}

Using the language of evaluation relations, one can state the following specification result:

\begin{definition} Let $n$ be a positive integer and $\evaluates$ an evaluation relation. An \emph{$n$-voting instruction modulo $\evaluates$} is a term 
$\phi$ such that for all terms $t_1, \ldots, t_n$ and all stacks $\pi$, for all $j \in \{1,\ldots,n\}$, $\{ \phi \star t_1 \stackcons \ldots \stackcons t_n \stackcons \pi \}
\evaluates \{ t_i \star \pi; i \neq j \}$.
\end{definition}

\begin{theorem}\label{nondet-specification-theorem} Let $\setofpoles$ be a realizability structure and $n$ a positive integer.
A term $\phi$ realizes $\nondet{n}$ with respect to $\setofpoles$ if and only if $\phi$ is an $n$-voting instruction modulo $\evaluates_\setofpoles$.
\end{theorem}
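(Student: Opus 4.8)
The plan is to prove the two implications of the biconditional by unfolding the definition of $\modelevaluates$ (specifically, $\evaluates_\setofpoles$) against the characterisation of $\falsity{\nondet{n}}_\pole$ supplied by \cref{falsity-nondet-lemma}. The key observation is that both notions, ``$\phi$ realizes $\nondet{n}$'' and ``$\phi$ is an $n$-voting instruction,'' are statements quantified over all poles $\pole \in \setofpoles$, all terms $t_1, \ldots, t_n$, and all stacks $\pi$; so I expect the proof to reduce to checking that, pole by pole, the local condition in one definition matches the local condition in the other. By \cref{falsity-nondet-lemma}, $\phi \realize_\setofpoles \nondet{n}$ means: for every $\pole \in \setofpoles$, every $t_1, \ldots, t_n$ and every $\pi$, if $\phi \star t_1 \stackcons \ldots \stackcons t_n \stackcons \pi$ lies below a member of $\falsity{\nondet{n}}_\pole$ in the appropriate sense, then $\phi \star t_1 \stackcons \ldots \stackcons t_n \stackcons \pi \in \pole$; and the falsity-value membership $t_1 \stackcons \ldots \stackcons t_n \stackcons \pi \in \falsity{\nondet{n}}_\pole$ holds exactly when there is at most one $i$ with $t_i \star \pi \notin \pole$.

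**The translation step.**

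First I would make the realizability condition fully explicit. Since $\phi \realize_\pole \nondet{n}$ means $\phi \in \truth{\nondet{n}}_\pole = (\falsity{\nondet{n}}_\pole)^\pole$, this is precisely the statement that $\phi \star \sigma \in \pole$ for every stack $\sigma \in \falsity{\nondet{n}}_\pole$. By \cref{falsity-nondet-lemma}, the relevant stacks are exactly those of the form $t_1 \stackcons \ldots \stackcons t_n \stackcons \pi$ for which at most one $i$ satisfies $t_i \star \pi \notin \pole$ (one should check, or cite, that every stack in the falsity value has this shape, which follows from the outermost structure of $\nondet{n}$ being an $n$-fold arrow under $\forall X$). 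Thus $\phi \realize_\pole \nondet{n}$ holds if and only if: for all $t_1, \ldots, t_n, \pi$, whenever at most one $t_i \star \pi$ escapes $\pole$, we have $\phi \star t_1 \stackcons \ldots \stackcons t_n \stackcons \pi \in \pole$.

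**Matching against the voting condition.**

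Next I would unfold the voting-instruction side. By definition of $\evaluates_\setofpoles$, the relation $\{ \phi \star t_1 \stackcons \ldots \stackcons t_n \stackcons \pi \} \evaluates_\setofpoles \{ t_i \star \pi; i \neq j \}$ holds iff for every $\pole \in \setofpoles$, if $\{ t_i \star \pi; i \neq j \} \subseteq \pole$ then $\phi \star t_1 \stackcons \ldots \stackcons t_n \stackcons \pi \in \pole$. So $\phi$ being an $n$-voting instruction means: for every $\pole \in \setofpoles$, every $t_1, \ldots, t_n, \pi$, and every $j$, if $t_i \star \pi \in \pole$ for all $i \neq j$, then $\phi \star t_1 \stackcons \ldots \stackcons t_n \stackcons \pi \in \pole$. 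The crux is then the purely combinatorial equivalence, fixing $\pole, t_1, \ldots, t_n, \pi$, between the two antecedents: ``at most one $i$ has $t_i \star \pi \notin \pole$'' versus ``there exists $j$ such that $t_i \star \pi \in \pole$ for all $i \neq j$.'' These are visibly the same condition (the exceptional index, if any, serves as the witness $j$, and conversely). Hence the two global statements coincide, giving the biconditional directly.

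**Expected obstacle.**

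I expect the only subtlety — rather than a genuine obstacle — to be the bookkeeping in the translation step: verifying that $\falsity{\nondet{n}}_\pole$ contains \emph{only} stacks of the form $t_1 \stackcons \ldots \stackcons t_n \stackcons \pi$ (so that no other stacks impose extra constraints on $\phi$), and handling the quantifier over the propositional variable $X$ correctly. This is essentially already packaged in \cref{falsity-nondet-lemma}, so the argument should amount to reading that lemma in both directions and matching the ``at most one exception'' phrasing with the ``for some $j$, all other branches enter $\pole$'' phrasing of the voting definition.
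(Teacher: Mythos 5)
Your proposal is correct and follows essentially the same route as the paper's proof: both unfold ``$\phi$ realizes $\nondet{n}$'' and ``$\phi$ is an $n$-voting instruction modulo $\evaluates_\setofpoles$'' pole by pole, use the characterization of $\falsity{\nondet{n}}_\pole$ (\cref{falsity-nondet-lemma}) together with the definition of $\evaluates_\setofpoles$, and observe that both sides reduce to the same statement, the ``at most one exception'' and ``for some $j$, all $i \neq j$'' conditions being identical.
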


\begin{proof}First, we notice that, by \cref{falsity-boolnondisjoint-lemma},
for all poles $\pole$, $t_1 \stackcons \ldots \linebreak[1] \stackcons t_n \stackcons \pi \in \falsity{\nondet{n}}_\pole$ if and only if 
there exists $j \in \{1,\ldots,n\}$ such that for all $i \neq j$, $t_i \star \pi \in \pole$. 
Therefore, both sides of the equivalence amount to saying that for all $\pole \in \setofpoles$, for all,
$t_1 \stackcons \ldots \stackcons t_n \stackcons \pi \in \falsity{\nondet{n}}_\pole$, $\phi \star t_1 \stackcons \ldots \stackcons t_n \stackcons \pi \in \pole$, and so they are equivalent.
\end{proof}

\begin{corollary}\label{gim2-voting-corollary} Let $\setofpoles$ be a realizability structure and $n$ a positive integer. We have $\theory{\setofpoles} \models \gimleqsimple{2^n}$
if and only if there is a \emph{proof-like} $n$-voting instruction modulo $\evaluates_\setofpoles$.
\end{corollary}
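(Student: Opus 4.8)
The plan is to reduce the statement to a short chain of equivalences between membership statements in $\theory{\setofpoles}$, using results already established. The computational half of the corollary disappears immediately: by \cref{nondet-specification-theorem}, a term $\phi$ is an $n$-voting instruction modulo $\evaluates_\setofpoles$ if and only if $\phi \realize_\setofpoles \nondet{n}$. Hence there is a proof-like $n$-voting instruction modulo $\evaluates_\setofpoles$ if and only if $\nondet{n}$ is realized by some proof-like term with respect to $\setofpoles$, which is exactly $\theory{\setofpoles} \models \nondet{n}$ by the definition of $\theory{\setofpoles}$. So it remains to prove the purely logical equivalence $\theory{\setofpoles} \models \gimleqsimple{2^n}$ iff $\theory{\setofpoles} \models \nondet{n}$.

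For this I would invoke \cref{gimlt-long-short-proposition}, which gives that $\nondet{n}$ is universally equivalent to the cardinality bound $\gimlt{n}$, together with the remark that the cardinality of a finite Boolean algebra is a power of two (which records the universal equivalence of $\gimleqsimple{2^m}$ with $\gimltsimple{2^{m+1}}$), in order to pass between the strict bound produced by that proposition and the nonstrict bound $\gimleqsimple{\cdot}$ appearing in the statement. The only care needed here is to track the exponent when converting the strict bound of that proposition into the nonstrict bound of the statement through this remark.

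The one substantive step, and the step I expect to carry the weight, is the general principle that universal equivalence transfers membership in $\theory{\setofpoles}$: if $A$ and $B$ are universally equivalent and $\theory{\setofpoles} \models A$, then $\theory{\setofpoles} \models B$. I would argue it directly from the definitions. A universally realized formula is realized with respect to every pole, hence with respect to every $\pole \in \setofpoles$, hence with respect to $\setofpoles$; so the universal equivalence of $A$ and $B$ yields a proof-like realizer of $A \ra B$ that belongs to $\theory{\setofpoles}$. Given a proof-like realizer $t$ of $A$ with respect to $\setofpoles$, applying this realizer of $A \ra B$ to $t$ produces, by the adequacy lemma (equivalently, by closure of $\theory{\setofpoles}$ under the deduction rules), a realizer of $B$ with respect to $\setofpoles$; and since it is an application of two proof-like terms it is itself proof-like, so $\theory{\setofpoles} \models B$. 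Applying this transfer once for the equivalence of \cref{gimlt-long-short-proposition} and once for the power-of-two conversion chains the membership statements together and closes the argument.
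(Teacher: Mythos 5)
Your pipeline is exactly the intended one: the paper states this corollary without proof, as an immediate consequence of \cref{nondet-specification-theorem}, \cref{gimlt-long-short-proposition}, and the fact that universal equivalence transfers membership in $\theory{\setofpoles}$; and your proof of that transfer principle (extract a proof-like realizer of the implication from the universal equivalence, apply it to the given proof-like realizer, observe that the application is proof-like and realizes the conclusion with respect to every $\pole \in \setofpoles$) is correct. The genuine gap is the one step you explicitly defer, the exponent bookkeeping, because carried out it does not land on the printed statement. \cref{gimlt-long-short-proposition} gives $\nondet{n} \req \gimlt{n}$, i.e.\ the \emph{strict} bound $\gimltsimple{2^n}$; the power-of-two remark in \cref{gim2-section} converts $\gimltsimple{2^{m+1}}$ into $\gimleqsimple{2^m}$, so to rewrite $\gimltsimple{2^n}$ you must take $m+1=n$, and your chain yields: a proof-like $n$-voting instruction exists iff $\theory{\setofpoles} \models \gimleqsimple{2^{n-1}}$ --- not $\gimleqsimple{2^n}$. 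The two bounds are not interchangeable: by the theorem of \cref{gim2-card-2n-section} there is a consistent $\setofpoles$ with $\theory{\setofpoles} \models \gimeqsimple{2^n}$, and for such a structure $\gimleqsimple{2^n}$ belongs to the theory while $\gimleqsimple{2^{n-1}}$ cannot (the theory is deductively closed and consistent, and contains $\gimgeqsimple{2^n}$), so no argument can bridge the difference.

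What this reveals is that the corollary as printed carries an off-by-one error, and your write-up inherits it rather than detects it. The paper's own cross-checks confirm the corrected indexing: \cref{fork-gim2-prop}, whose proof cites this corollary, pairs the fork instruction --- by definition a $2$-voting instruction --- with $\gimeqsimple{2}$, hence with the bound $2^1$ and not $2^2$; and in \cref{gim2-card-2n-section} the instruction $\phi$, which takes $n+1$ arguments, is the one used to realize $\gim{2} \models \boolnodisjoint{n+1}$ and thereby obtain $\card{\gim2} \leq 2^n$. So the statement your argument actually establishes (and the one the corollary should assert) is: $\theory{\setofpoles} \models \gimleqsimple{2^{n-1}}$ iff there is a proof-like $n$-voting instruction modulo $\evaluates_\setofpoles$. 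A correct submission either proves that and flags the typo, or explains how the paper's notation is to be read so that the printed exponent is right; writing that ``the only care needed here is to track the exponent'' and then not tracking it conceals precisely the point where the proof, as a proof of the printed statement, fails.
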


We use the name $n$-voting instruction because whenever all but one of the $t_i$ ``agree on'' a certain behaviour,
$\phi\ t_1 \ldots t_n$ ``chooses'' it. For example, if one of the $t_i$ is the church numeral $\church{0}$ and all the others are $\church{1}$, then $\phi\ t_1 \ldots t_n$
behaves like $\church{1}$. (The notion of $3$-voting instruction was introduced by Trakhtenbrot \cite{trakhtenbrot:sequential-parallel}.)

Interestingly, the important point turns out to be, not the number of ``votes for'', but rather their \emph{proportion}. Indeed, one can define a notion of $(n,k)$-voting instruction,
which takes $n$ arguments and requires at least $n-k$ ``votes'' (so that an $n$-voting instruction is the same thing as an $(n,1)$-voting instruction).
Then, using the same method, one can prove that there is a proof-like $(n,k)$-voting instruction if and only if $\theory{\setofpoles} \models \gimltsimple{2^{\ceil{\frac{n}{k}}}}$.

Now, let us consider the following definitions:

\begin{definition} Let $\evaluates$ be an evaluation relation. \begin{itemize}
\item A \emph{fork instruction} (or \emph{may}-nondeterministic choice operator)
modulo $\evaluates$ is a term $\psi$ such that for all terms $u, v$ and all stacks $\pi$, $\{\psi \star u \stackcons v \stackcons \pi\} \evaluates \{u \star \pi\}$ \emph{and}
$\{\psi \star u \stackcons v \stackcons \pi\} \evaluates \{v \star \pi\}$ (\textit{i.e.} a $2$-voting instruction).
\item A \emph{must}-nondeterministic choice operator
modulo $\evaluates$ is a term $\chi$ such that for all terms $u, v$ and all stacks $\pi$, $\{\chi \star u \stackcons v \stackcons \pi\} \evaluates \{u \star \pi, v \star \pi\}$.
\end{itemize}
\end{definition}

\begin{proposition} Let $\setofpoles$ be a realizability structure and $t$ a term. Then $t$ is a \emph{may}-nondeterministic choice operator modulo $\evaluates_{\setofpoles}$ if and only if
it realizes $\forall X \forall Y~ X \ra Y \ra X \cap Y$ with respect to $\setofpoles$, and it is a \emph{must}-nondeterministic choice operator $\evaluates_{\setofpoles}$
if and only if it realizes $\forall X \forall Y~ X \ra Y \ra X \cup Y$ with respect to $\setofpoles$.
\end{proposition}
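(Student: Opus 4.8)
The plan is to reduce each of the two equivalences to a single ``pointwise'' statement about membership in the poles of $\setofpoles$, obtained by unfolding the falsity value of the formula on one side and the definition of $\evaluates_{\setofpoles}$ on the other. I start with the \emph{may}-case. Fixing a pole $\pole$ and writing $X, Y$ also for the subsets of $\stacks$ interpreting the two propositional variables (so that $\falsity{X}_\pole = X$), the clauses for $\forall$, $\ra$ and $\cap$ give
$$\falsity{\forall X \forall Y~ X \ra Y \ra X \cap Y}_\pole = \bigcup_{X, Y \subseteq \stacks} \{\, u \stackcons v \stackcons \pi \;;\; u \in X^\pole,\ v \in Y^\pole,\ \pi \in X \cup Y \,\}\text{,}$$
using $\falsity{X \cap Y}_\pole = X \cup Y$.

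First I would show that $t \realize_\pole \forall X \forall Y~ X \ra Y \ra X \cap Y$ if and only if, for all terms $u, v$ and every stack $\pi$, the condition ``$u \star \pi \in \pole$ or $v \star \pi \in \pole$'' implies $t \star u \stackcons v \stackcons \pi \in \pole$. The backward direction is immediate, since every element of the falsity value has $\pi \in X \cup Y$, so (say) $\pi \in X$ together with $u \in X^\pole$ forces $u \star \pi \in \pole$. For the forward direction I specialise the parameters: given $u, v, \pi$ with $u \star \pi \in \pole$, I take $X = \{\pi\}$ and $Y = \emptyset$, so that $u \in X^\pole$, $v \in Y^\pole = \terms$ and $\pi \in X \cup Y$; hence $u \stackcons v \stackcons \pi$ lies in the falsity value and $t \star u \stackcons v \stackcons \pi \in \pole$. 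The case $v \star \pi \in \pole$ is symmetric, with $X = \emptyset$ and $Y = \{\pi\}$.

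It then remains to match this with the fork condition. Unfolding $\evaluates_{\setofpoles}$, the conjunction of $\{t \star u \stackcons v \stackcons \pi\} \evaluates_{\setofpoles} \{u \star \pi\}$ and $\{t \star u \stackcons v \stackcons \pi\} \evaluates_{\setofpoles} \{v \star \pi\}$ says precisely that for every $\pole \in \setofpoles$, ``$u \star \pi \in \pole$ or $v \star \pi \in \pole$'' implies $t \star u \stackcons v \stackcons \pi \in \pole$ --- the very condition above, now quantified over $\pole \in \setofpoles$. Since $t \realize_{\setofpoles} A$ means $t \realize_\pole A$ for all $\pole \in \setofpoles$, the two sides of the may-equivalence coincide. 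The \emph{must}-case is entirely parallel, replacing $\cap$ by $\cup$ and using $\falsity{X \cup Y}_\pole = X \cap Y$: the falsity value is the union over $X, Y$ of $\{\, u \stackcons v \stackcons \pi \;;\; u \in X^\pole,\ v \in Y^\pole,\ \pi \in X \cap Y \,\}$, the pointwise condition becomes ``$u \star \pi \in \pole$ \emph{and} $v \star \pi \in \pole$'' implies $t \star u \stackcons v \stackcons \pi \in \pole$ (the forward direction now using the single choice $X = Y = \{\pi\}$), and this is exactly the hypothesis $\{u \star \pi, v \star \pi\} \subseteq \pole$ of $\{t \star u \stackcons v \stackcons \pi\} \evaluates_{\setofpoles} \{u \star \pi, v \star \pi\}$.

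I expect the only delicate step to be the forward direction of the pointwise reformulation, where the freedom in the universally quantified second-order parameters has to be exploited with the right instantiations --- the singletons $\{\pi\}$ and the empty set --- and where one must keep in mind that $\emptyset^\pole = \terms$, so that the unconstrained argument lands in its truth value vacuously. Everything else is routine unfolding of the falsity-value clauses and of the definition of $\evaluates_{\setofpoles}$.
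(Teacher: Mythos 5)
Your proof is correct and follows essentially the same route as the paper's: both arguments unfold the falsity value of $\forall X \forall Y~ X \ra Y \ra X \cap Y$ (resp.\ $\cup$) and hinge on the same key instantiations $X = \{\pi\}$, $Y = \emptyset$ (resp.\ $X = Y = \{\pi\}$) to get the hard direction, the paper phrasing that direction contrapositively while you package it as a pointwise reformulation matched against the definition of $\evaluates_{\setofpoles}$. The only difference is cosmetic organization, plus the fact that you spell out the must-case which the paper leaves as ``similar''.
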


\begin{proof} If $t$ is a fork instruction, let $\pole \in \setofpoles$, $X, Y \subseteq \stacks$, $u \realize_\pole X$, $v \realize_\pole Y$ and $\pi \in X \cup Y$.
If $\pi \in X$, then $\{t \star u \stackcons v \stackcons \pi\}  \evaluates_{\setofpoles} \{u \star \pi\} \subseteq \pole$, and similarly if $\pi \in Y$.

Assume $t$ is not a fork instruction. Without loss of generality, one can assume that there exist $u$, $v$ and $\pi$ such that $\{t \star u \stackcons v \stackcons \pi\} ~{\centernot{\evaluates}}_{\setofpoles}~ \{u \star \pi\}$.
Let $\pole \in \setofpoles$ such that $u \star \pi \in \pole$ and $t \star u \stackcons v \stackcons \pi \notin \pole$, and let $X = \{ \pi \}$ and $Y = \emptyset$: then
$u \stackcons v \stackcons \pi \in \falsity{X \ra Y \ra X \cup Y}_{\pole}$, so $t$ does not realize $X \ra Y \ra X \cup Y$ with respect to $\pi$.

The proof of the second point is similar.
\end{proof}

\begin{proposition}\label{fork-gim2-prop} Let $\setofpoles$ be a realizability structure. We have $\theory{\setofpoles} \models \gimeqsimple{2}$
if and only if there exists a \emph{proof-like} fork instruction modulo $\evaluates_\setofpoles$.
\end{proposition}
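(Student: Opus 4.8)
The plan is to route everything through the formula $\nondet{2}$, for which the specification has already been established, and then to identify $\nondet{2}$ with $\gimeqsimple{2}$ at the level of universal equivalence. By definition a fork instruction modulo $\evaluates_{\setofpoles}$ is exactly a $2$-voting instruction modulo $\evaluates_{\setofpoles}$, so \cref{nondet-specification-theorem} (taken with $n = 2$) tells us that a term $\psi$ is a fork instruction modulo $\evaluates_{\setofpoles}$ if and only if $\psi$ realizes $\nondet{2}$ with respect to $\setofpoles$. Since the side condition ``$\psi$ is proof-like'' is a purely syntactic constraint appearing identically on both sides, this immediately yields: there is a proof-like fork instruction modulo $\evaluates_{\setofpoles}$ if and only if $\nondet{2}$ has a proof-like realizer with respect to $\setofpoles$, i.e.\ if and only if $\theory{\setofpoles} \models \nondet{2}$. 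The whole statement therefore reduces to the equivalence $\theory{\setofpoles} \models \gimeqsimple{2}$ if and only if $\theory{\setofpoles} \models \nondet{2}$.

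For this it suffices to show that $\gimeqsimple{2}$ and $\nondet{2}$ are universally equivalent, since then a proof-like realizer of the universal equivalence transports a proof-like realizer of one formula to a proof-like realizer of the other, and hence membership in $\theory{\setofpoles}$ transfers. I would assemble this equivalence from three pieces. First, by the cardinality remark following the definition of $\gimeqsimple{n}$ (instantiated at $n = 1$), the formula $\gimeqsimple{2}$ is universally equivalent to $\gimgeqsimple{2} \wedge \gimltsimple{4}$, and $\gimltsimple{4}$ is nothing but $\gimlt{2}$. Second, by \cref{gimlt-long-short-proposition}, $\gimlt{2}$ is universally equivalent to $\nondet{2}$. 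Third, $\gimgeqsimple{2}$ is universally realized, so it can be discarded as a conjunct: if $A$ is universally realized by a proof-like term, then $A \wedge B$ and $B$ are universally equivalent, one direction being a projection and the other a pairing with the universal realizer of $A$. Chaining these, $\gimeqsimple{2}$ is universally equivalent to $\gimgeqsimple{2} \wedge \nondet{2}$, and hence to $\nondet{2}$, as required.

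The only step that is not a direct appeal to an earlier result is the claim that $\gimgeqsimple{2}$ is universally realized, and this is where I would spend one sentence of genuine argument. Unfolding the notation, $\gimgeqsimple{2}$ is $\gimltsimple{2} \ra \bot$, where $\gimltsimple{2}$ is $\gim{2} \models \forall x_1 x_2~ x_1 = x_2$; since relativization commutes with $\ra$ and $\bot$, the formula $\gimgeqsimple{2}$ is literally $\gim{2} \models \neg \forall x_1 x_2~(x_1 = x_2)$, the relativization to $\gim{2}$ of a closed formula of the language of Boolean algebras that is true in every Boolean algebra with at least two elements. By \cref{gim2-boole-proposition} it is therefore universally realized; this is precisely the statement that $\gim{2}$ always has at least two elements. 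With that in hand the three equivalences combine to show that $\gimeqsimple{2}$ and $\nondet{2}$ are universally equivalent, and combining with the first paragraph closes the argument. I expect no substantial obstacle: the proof is essentially a chain of citations, and the only real risk is bookkeeping with the indices (namely $n = 1$ in the cardinality remark versus $n = 2$ in \cref{nondet-specification-theorem}, and the fact that $\gimeqsimple{2}$ is defined with $\gimltsimple{3}$ while the remark lets us work with the more convenient $\gimltsimple{4}$).
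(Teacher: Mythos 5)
Your proof is correct and is essentially the paper's own argument: the paper proves this proposition in one line by citing \cref{gim2-voting-corollary} together with \cref{gim2-boole-proposition}, and your argument simply unfolds that corollary into its ingredients---\cref{nondet-specification-theorem}, \cref{gimlt-long-short-proposition} and the power-of-two remark---while invoking \cref{gim2-boole-proposition} for the conjunct $\gimgeqsimple{2}$ exactly as the paper intends. A small point in your favour: by routing through $\nondet{2}$ directly you keep the indexing straight, since the literal statement of \cref{gim2-voting-corollary} would pair a $2$-voting instruction with $\gimleqsimple{4}$ rather than $\gimleqsimple{2}$ (an off-by-one slip in that corollary, as comparison with \cref{por-gim2-theorem} confirms), which your unfolding quietly corrects.
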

\begin{proof} Consequence of \cref{gim2-voting-corollary} and \cref{gim2-boole-proposition}.\end{proof}

\section{Parallel Or} \label{parallel-or-section}

\begin{definition} Let $\bool(y)$ denote the formula $\forall X~ X(0) \ra X(1) \ra X(y)$. \end{definition}

\begin{lemma} We have $\bool(0) \req \forall X~ X \ra \top \ra X$, $\bool(1) \req \forall X,~ \top \ra X \ra X$, and for all $n > 1$,
$\bool(n) \req \top \ra \top \ra \bot$. \end{lemma}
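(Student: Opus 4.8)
The plan is to unfold the definition of $\bool(y) = \forall X~ X(0) \ra X(1) \ra X(y)$ using the semantics of second-order quantification, and to compute the falsity value $\falsity{\bool(k)}_\pole$ for $k = 0, 1$, and $k > 1$ separately. Recall that the falsity value of a second-order universal quantifier ranges over all interpretations $F : \N^1 \ra \P(\stacks)$ of the unary predicate $X$, so $\falsity{\bool(k)}_\pole = \bigcup_F \falsity{F(0) \ra F(1) \ra F(k)}_\pole$. Each summand is, by the clause for $\ra$, the set of stacks $t_0 \stackcons t_1 \stackcons \pi$ with $t_0 \in \truth{F(0)}_\pole$, $t_1 \in \truth{F(1)}_\pole$ and $\pi \in \falsity{F(k)}_\pole$, where each $F(j)$ is a plain falsity value $F(j) \subseteq \stacks$.

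First I would handle the case $k = 0$. Here $X(0)$ and $X(k)$ coincide, so the relevant quantity is $\bigcup_F \{ t_0 \stackcons t_1 \stackcons \pi : t_0 \in \truth{F(0)}_\pole,\ t_1 \in \truth{F(1)}_\pole,\ \pi \in F(0) \}$. To get $\bool(0) \req \forall X~ X \ra \top \ra X$ I want to show this equals $\bigcup_{G \subseteq \stacks} \{ t_0 \stackcons t_1 \stackcons \pi : t_0 \in \truth{G}_\pole,\ t_1 \in \terms,\ \pi \in G \}$, using $\falsity{\top}_\pole = \emptyset$ so $\truth{\top}_\pole = \terms$. The inclusion from right to left is immediate by taking $F(0) = G$ and $F(1) = \emptyset$ (so $\truth{F(1)}_\pole = \terms$). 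For the reverse inclusion, given an arbitrary $F$, I set $G = F(0)$; the constraint $t_1 \in \truth{F(1)}_\pole$ is weaker than $t_1 \in \terms$, so any stack in the left-hand union already lies in the right-hand union. The case $k = 1$ is symmetric, swapping the roles of the two arguments and setting $F(0) = \emptyset$, $G = F(1)$, yielding $\bool(1) \req \forall X~ \top \ra X \ra X$.

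The more delicate case is $k > 1$, where I claim $\falsity{\bool(k)}_\pole = \falsity{\top \ra \top \ra \bot}_\pole = \{ t_0 \stackcons t_1 \stackcons \pi : t_0, t_1 \in \terms,\ \pi \in \P(\stacks)\text{-many stacks}\}$, i.e. every stack $t_0 \stackcons t_1 \stackcons \pi$ with $\pi$ arbitrary, since $\falsity{\bot}_\pole = \stacks$ and $\truth{\top}_\pole = \terms$. The point is that for $k > 1$ the three indices $0$, $1$, $k$ are pairwise distinct, so $F(0)$, $F(1)$ and $F(k)$ can be chosen independently. To realize an arbitrary target stack $t_0 \stackcons t_1 \stackcons \pi$, I take $F(k) = \{\pi\}$ (or $\stacks$) so that $\pi \in \falsity{F(k)}_\pole$, and I take $F(0) = F(1) = \emptyset$ so that $\truth{F(0)}_\pole = \truth{F(1)}_\pole = \terms$, placing $t_0, t_1$ with no constraint. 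This shows $\falsity{\top \ra \top \ra \bot}_\pole \subseteq \falsity{\bool(k)}_\pole$; the reverse inclusion is trivial since $\falsity{\bot}_\pole = \P(\stacks)$ is the largest possible falsity value, so no choice of $F$ can exceed it. The main obstacle — really the only subtlety — is precisely this independence argument: one must verify that distinctness of $0$, $1$ and $k$ lets the three occurrences of $X$ be decoupled, which is exactly what fails when $k \in \{0,1\}$ and forces the weaker (non-$\bot$) shapes there. Since all three equivalences are pole-by-pole equalities of falsity values, they hold for every $\pole$ and thus give $\req$ in each case.
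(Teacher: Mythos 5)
Your proof is correct, and it is precisely the routine unfolding of falsity values that the paper treats as immediate (the lemma is stated there without proof, in the same spirit as the earlier ``expand the definitions on either side'' lemma). The one load-bearing observation---that the values $F(0)$, $F(1)$ and $F(n)$ can be chosen independently exactly when $0$, $1$, $n$ are pairwise distinct, which fails for $n\in\{0,1\}$---is identified and used correctly; the only blemishes are notational (e.g.\ ``$\pi \in \P(\stacks)$-many stacks'' should simply read $\pi \in \stacks$).
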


\begin{remark} The formula $\bool(0)$ is universally realized by $\false = \lambda x. \lambda y. x$, and $\bool(1)$ by
$\true = \lambda x. \lambda y. y$.\end{remark}

\begin{proposition} Let $\setofpoles$ be a realizability structure and $t$ a term: $t$ realizes $\bool(0)$ in $\setofpoles$
if and only if for all terms $u, v$ and all stacks $\pi$, $\{ t \star u \stackcons v \stackcons \pi \} \evaluatesmodel{\setofpoles} \{ u \star \pi \}$, and
$t$ realizes $\bool(1)$ in $\setofpoles$ if and only if for all terms $u, v$ and all stacks $\pi$, $\{ t \star u \stackcons v \stackcons \pi \}
\evaluatesmodel{\setofpoles} \{ v \star \pi \}$. \end{proposition}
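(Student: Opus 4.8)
The plan is to unfold the definitions of the falsity values of $\bool(0)$ and $\bool(1)$, and then match them against the definition of $\evaluatesmodel{\setofpoles}$. By the preceding lemma, $\bool(0) \req \forall X~ X \ra \top \ra X$, so its falsity value with respect to a pole $\pole$ consists of stacks of the form $u \stackcons v \stackcons \pi$ where $u \in \truth{X}_\pole$, $v \in \truth{\top}_\pole = \terms$, and $\pi \in \falsity{X}_\pole = X$ (quantifying over all $X \subseteq \stacks$). Symmetrically, the falsity value of $\bool(1) \req \forall X~ \top \ra X \ra X$ consists of stacks $u \stackcons v \stackcons \pi$ with $u$ arbitrary, $v \in \truth{X}_\pole$, and $\pi \in X$. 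So I would first record these explicit descriptions.

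Next I would prove the left-to-right direction of the first equivalence. Suppose $t \realize_\setofpoles \bool(0)$, and fix terms $u,v$ and a stack $\pi$; I must show $\{t \star u \stackcons v \stackcons \pi\} \evaluatesmodel{\setofpoles} \{u \star \pi\}$, i.e.\ that for every $\pole \in \setofpoles$, if $u \star \pi \in \pole$ then $t \star u \stackcons v \stackcons \pi \in \pole$. So fix $\pole \in \setofpoles$ with $u \star \pi \in \pole$, and take $X = \{\pi\}$. Then $u \star \pi \in \pole$ for all $\pi \in X$ gives $u \in \truth{X}_\pole$, and trivially $v \in \terms = \truth{\top}_\pole$ and $\pi \in X = \falsity{X}_\pole$, so $u \stackcons v \stackcons \pi \in \falsity{\bool(0)}_\pole$; since $t \realize_\pole \bool(0)$ this yields $t \star u \stackcons v \stackcons \pi \in \pole$, as required.

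For the converse, suppose that for all $u,v,\pi$ we have $\{t \star u \stackcons v \stackcons \pi\} \evaluatesmodel{\setofpoles} \{u \star \pi\}$; I must show $t \realize_\pole \bool(0)$ for each $\pole \in \setofpoles$. Fix $\pole \in \setofpoles$ and a stack in $\falsity{\bool(0)}_\pole$, which by the explicit description has the form $u \stackcons v \stackcons \pi$ with $u \in \truth{X}_\pole$, $v$ arbitrary, and $\pi \in X$ for some $X \subseteq \stacks$. Then $u \in \truth{X}_\pole$ together with $\pi \in X$ gives $u \star \pi \in \pole$, and the hypothesis for this $\pole \in \setofpoles$ then yields $t \star u \stackcons v \stackcons \pi \in \pole$. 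The statement for $\bool(1)$ is proved identically, with the roles of the first and second arguments exchanged (using $\bool(1) \req \forall X~ \top \ra X \ra X$), so I would simply note that it is symmetric.

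The argument is essentially a direct unfolding, so there is no serious obstacle; the only point requiring care is the choice of witness in each direction --- taking the singleton $X = \{\pi\}$ in the forward direction, and conversely reading off from an arbitrary $X$ that the hypothesis $u \in \truth{X}_\pole$ and $\pi \in X$ forces $u \star \pi \in \pole$. One must also keep track that $\evaluatesmodel{\setofpoles} Q$ quantifies over $\pole \in \setofpoles$ with $Q \subseteq \pole$, matching the ``for all $\pole \in \setofpoles$'' built into realization with respect to $\setofpoles$, which is exactly why the two conditions coincide.
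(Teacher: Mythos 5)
Your proof is correct and takes essentially the same route as the paper, whose proof is simply ``Similar to \cref{nondet-specification-theorem}'': unfold the falsity value of $\bool(0)$ (your witness $X=\{\pi\}$ is exactly what shows $u \stackcons v \stackcons \pi \in \falsity{\bool(0)}_\pole$ iff $u \star \pi \in \pole$) and observe that realizing $\bool(0)$ with respect to every $\pole \in \setofpoles$ and the family of multi-evaluation statements then express the same condition. The only difference is presentational: the paper collapses this into a single identification of the two sides, whereas you write out the two implications separately.
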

\begin{proof} Similar to \cref{nondet-specification-theorem} \end{proof}

In other words, realizers of $\bool(0)$ are terms which behave like $\false$, and realizers of $\bool(1)$ are terms which behave like $\true$.

One can say that a term ``computes the \emph{or} function'' (modulo some realizability structure $\setofpoles$) if it realizes the formula
$\forall xy~ \bool(x) \ra \bool(y) \ra \bool(x \vee y)$, which we denote by $\for$. In general, there are two different ways of doing that: \begin{itemize}
\item left-first: $\torl = \lambda x. \lambda y.\ x\ y\ \true$,
\item right-first: $\torr = \lambda x. \lambda y.\ y\ x\ \true$. \end{itemize}

The left-first version ``returns $\true$'', \textit{i.e.} behaves like $\true$, as soon as its first argument does, even if its second argument diverges. The right-first
version does the symmetric.

Let $\forl$ denote the formula $(\forall x~ \bool(0) \ra \bool(x) \ra \bool(x)) \linebreak[1] \cap (\bool(1) \ra \top \ra \bool(1))$, and let
$\forr$ denote the formula $(\forall x~ \linebreak[1] \bool(x) \ra \bool(0) \ra \bool(x)) \cap (\top \ra \bool(1) \ra \bool(1))$. Note that $\forl \rle \for$ and $\forr \rle \for$.

\begin{lemma}
The term $\torl$ universally realizes $\forl$, and the term $\torr$ universally realizes $\forr$. 
\end{lemma}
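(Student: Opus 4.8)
The plan is to unfold the definitions of $\torl$, $\torr$, $\forl$, and $\forr$ and verify by direct computation that each term realizes each conjunct of its target formula. Since $\forl$ and $\forr$ are each an intersection ($\cap$) of two formulas, and realizing $A \cap B$ amounts to realizing both $A$ and $B$ (because $\falsity{A \cap B}_\pole = \falsity{A}_\pole \cup \falsity{B}_\pole$), I would split the problem into four independent claims. For each, I would fix an arbitrary pole $\pole$ and an appropriate stack in the falsity value of the target formula, then push the term through the $\evaluateskam$ rules (push/grab) until it reaches a process known to be in $\pole$, invoking the characterizations $\bool(0) \req \forall X~ X \ra \top \ra X$ and $\bool(1) \req \forall X~ \top \ra X \ra X$ from the preceding lemma.

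For the first conjunct of $\forl$, namely $\forall x~ \bool(0) \ra \bool(x) \ra \bool(x)$, I would take $u \realize \bool(0)$, $v \realize \bool(x)$, and a stack $\pi \in \falsity{\bool(x)}_\pole$. Then $\torl \star u \stackcons v \stackcons \pi \evaluateskam u \star v \stackcons \true \stackcons \pi$ by two grab steps. Since $u$ realizes $\bool(0) \req \forall X~ X \ra \top \ra X$, instantiating $X$ with $\falsity{\bool(x)}_\pole$ gives that $v \stackcons \true \stackcons \pi$ lies in the falsity value (here I use $v \in \truth{\bool(x)}_\pole$, $\true \in \truth{\top}_\pole = \terms$, and $\pi \in \falsity{\bool(x)}_\pole$), so the process is in $\pole$. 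For the second conjunct $\bool(1) \ra \top \ra \bool(1)$, I take $u \realize \bool(1)$ and any $v$; after the grab steps $u$ sees $v \stackcons \true \stackcons \pi$, and since $u$ realizes $\bool(1) \req \forall X~ \top \ra X \ra X$, it selects its second argument $\true$, which realizes $\bool(1)$, landing in $\pole$. The argument for $\torr$ against $\forr$ is entirely symmetric, with the roles of the two arguments swapped and the head reduction routing through $v$ first.

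I do not expect a serious obstacle here; this is a routine verification. The one point requiring mild care is getting the argument order right after the grab steps and correctly matching which of $u$ or $v$ plays the role of the ``selecting'' boolean versus the ``selected'' value in each conjunct — in particular tracking that $\true = \lambda x.\lambda y.y$ genuinely realizes $\bool(1)$ (as noted in the remark) so that the selected branch always produces a realizer of the conclusion. The computation is self-contained once the $\req$ characterizations of $\bool(0)$ and $\bool(1)$ are in hand, so I would simply present the four short reduction-and-selection arguments, treating $\torr$/$\forr$ briefly by appeal to symmetry with the $\torl$/$\forl$ case.
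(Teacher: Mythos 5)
Your proof is correct and is exactly the routine verification the paper leaves implicit (the lemma is stated there without proof): split each $\cap$ into its two conjuncts, reduce $\torl \star u \stackcons v \stackcons \pi$ (resp. $\torr \star u \stackcons v \stackcons \pi$) to $u \star v \stackcons \true \stackcons \pi$ (resp. $v \star u \stackcons \true \stackcons \pi$), and conclude via the $\req$-characterizations of $\bool(0)$ and $\bool(1)$, the fact that $\true$ realizes $\bool(1)$, and closure of poles under anti-evaluation. The only nitpick is that the reduction consists of two grab steps \emph{and} two push steps, not just two grabs --- immaterial, since poles are closed under the full preorder $\evaluateskam$.
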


How different are $\torl$ and $\torr$? Can we come up with a (proof-like) term which captures the behaviour of both? That is to say, a \emph{parallel or},
which returns $\true$ as soon as either of its arguments is $\true$, even if the other diverges, and $\false$ if both arguments are $\false$.
This turns out to depend on the realizability structure and to be related to the properties of $\gim 2$.

First, let us denote by $\forp$ the formula $(\bool(1) \ra \top \ra \bool(1)) \linebreak[1] \cap (\top \ra \bool(1) \ra \bool(1)) \cap (\bool(0) \ra \bool(0) \ra \bool(0))$.

\begin{definition} Let $\setofpoles$ be a realizability structure and $t$ a term. The term $t$ \emph{computes parallel or modulo} $\setofpoles$
if it realizes $\forp$ with respect to $\setofpoles$. \end{definition}

\begin{theorem}\label{por-gim2-theorem} The formulas $\forp$ and $\gimleqsimple{4}$ are universally equivalent. \end{theorem}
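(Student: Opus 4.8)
The plan is to prove the equivalence of $\forp$ and $\gimleqsimple{4}$ by showing mutual universal realizability of the two implications, exploiting the characterization $\gimleqsimple{4} \req \nondet{2}$ (from the remark that $\gimleqsimple{2^n}$ is $\gimltsimple{2^{n+1}}$, so $\gimleqsimple{4} = \gimltsimple{4} = \nondet{2}$ via \cref{gimlt-long-short-proposition}). Thus it suffices to realize $\forp \ea \nondet{2}$ universally by proof-like terms. The key leverage is the falsity-value description in \cref{falsity-nondet-lemma}: the stack $u \stackcons v \stackcons \pi$ lies in $\falsity{\nondet{2}}_\pole$ exactly when at most one of $u \star \pi$, $v \star \pi$ escapes $\pole$.

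First I would unfold $\forp$ using the lemma computing $\bool(0)$ and $\bool(1)$. Writing $\forp = (\bool(1) \ra \top \ra \bool(1)) \cap (\top \ra \bool(1) \ra \bool(1)) \cap (\bool(0) \ra \bool(0) \ra \bool(0))$, and recalling $\bool(1) \req \forall X~ \top \ra X \ra X$ and $\bool(0) \req \forall X~ X \ra \top \ra X$, I would compute $\falsity{\forp}_\pole$ as the union of the three conjunct falsity values. Each conjunct contributes stacks of the form $u \stackcons v \stackcons \pi$ subject to side conditions on whether $u$ or $v$ realizes the relevant truth value and whether $\pi$ sits in the appropriate falsity set. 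The goal is to show that this combined falsity value coincides (up to the subtyping needed for each direction) with $\falsity{\nondet{2}}_\pole$, so that $\lambda x. x$ realizes one direction and a suitable control term realizes the other.

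For the direction $\nondet{2} \rle \forp$ (giving $\forp \ra \nondet{2}$ realized by the identity, or the reverse — I would pin down orientation against \cref{falsity-nondet-lemma}), the inclusion of falsity values should be checked conjunct by conjunct: each behaviour demanded by a $\forp$-conjunct forces at most one of the two arguments to diverge off the pole, which is precisely the $\nondet{2}$ condition. For the harder converse direction I expect to need a genuine control operator, mirroring the realizer $\lambda t.\lambda u_1.\lambda u_2.\ \cc\ (\lambda k.\ t\ (k u_1)(k u_2))$ used in \cref{gimlt-long-short-proposition}: given a $2$-voting instruction (equivalently a realizer of $\nondet{2}$), one uses $\cc$ to capture the continuation $\pi$, feed both branches $k u_1$ and $k u_2$ that each jump to $\pi$ upon landing in the pole, and thereby synthesize the parallel-or behaviour. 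I would verify that the resulting proof-like term realizes each of the three conjuncts by the usual push/grab/save/restore computation.

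The main obstacle will be the bookkeeping in the converse direction: parallel or must return $\true$ when \emph{either} argument is $\true$ (a may-behaviour requiring the captured continuation to fire from whichever branch succeeds) while still returning $\false$ when \emph{both} are $\false$, and reconciling these with the mixed may/must structure of $\nondet{2}$ requires careful placement of $\cc$ and the continuation constant. I would double-check that the term built is genuinely proof-like (no $k_\pi$ or restricted instructions appear syntactically, only the bound continuation variable), so that the equivalence transfers to the level of $\theory{\setofpoles}$ and not merely to universal realizability by arbitrary terms.
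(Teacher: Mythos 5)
You have misidentified the target of the reduction, and the misidentification is fatal. From the remark you cite, taking $2^n = 4$ gives $\gimleqsimple{4} \req \gimltsimple{8}$, \emph{not} $\gimltsimple{4}$: a Boolean algebra can have exactly four elements, so ``$\leq 4$'' and ``$<4$'' are genuinely different conditions. Hence, by \cref{gimlt-long-short-proposition}, $\gimleqsimple{4}$ corresponds to $\nondet{3}$ (the $3$-voting instruction), not to $\nondet{2}$ (fork) -- this is exactly the point of the theorem, and of the remark after it that parallel or is equivalent to a $3$-voting instruction. The equivalence $\forp \ea \nondet{2}$ around which your whole plan is organised is in fact \emph{false}, and the paper itself shows it: by the theorem of \cref{gim2-card-2n-section} (with $n=2$) there is a consistent structure $\setofpoles$ with $\theory{\setofpoles} \models \gimeqsimple{4}$. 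In that structure the instruction $\phi$ is a proof-like $3$-voting instruction, so $\forp$ is realized by a proof-like term; if $\forp \ra \nondet{2}$ were universally realized, then $\nondet{2}$, hence $\gimltsimple{4}$ (by \cref{gimlt-long-short-proposition}), would be in $\theory{\setofpoles}$, contradicting the conjunct $\gimgeqsimple{4}$ of $\gimeqsimple{4}$ and the consistency of $\setofpoles$. Your falsity-value computation for the other direction is correct -- $\falsity{\forp}_\pole \subseteq \falsity{\nondet{2}}_\pole$ does hold, so the identity realizes $\nondet{2} \ra \forp$ (``fork computes parallel or'') -- but that is not one of the two implications the theorem requires. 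One can also see concretely why your $\cc$-based realizer cannot close the false direction: from $t \realize_\pole \forp$ and $u_1 \stackcons u_2 \stackcons \pi \in \falsity{\nondet{2}}_\pole$, your term reaches $t \star k_\pi u_1 \stackcons k_\pi u_2 \stackcons \pi$, and you would need this stack to lie in $\falsity{\forp}_\pole$; but every conjunct of $\forp$ requires the tail of the stack to lie in $\falsity{\bool(0)}_\pole$ or $\falsity{\bool(1)}_\pole$, whereas the $\pi$ coming from $\falsity{\nondet{2}}_\pole$ is arbitrary. This is precisely the sense in which $\forp$ is weaker than a voting instruction: it promises a behaviour only on ``Boolean-shaped'' stacks.

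The correct proof relates $\forp$ to the three-argument formulas, and needs no control operator at all. Left to right, one realizes $\forp \ra (\gim{2} \models \boolnodisjoint{3})$ by $\lambda t.\lambda u_1.\lambda u_2.\lambda u_3.\ (t\, u_1\, u_2)\, u_1\, u_3$: by \cref{falsity-boolnondisjoint-lemma} at least two of the $u_i$ realize $\bot$, and in each of the three cases $t\, u_1\, u_2$ realizes $\bool(0)$ or $\bool(1)$ and is then applied to an argument realizing $\bot$. Right to left, one realizes $\nondet{3} \ra \forp$ by $\lambda t.\lambda u.\lambda v.\ t\ (\torl\, u\, v)\ (\torr\, u\, v)\ \true$: the $3$-voting instruction is fed the left-first or, the right-first or, and the constant $\true$, and in each of the three scenarios demanded by $\forp$ (one argument realizes $\bool(1)$ with $\pi \in \falsity{\bool(1)}_\pole$, or both realize $\bool(0)$ with $\pi \in \falsity{\bool(0)}_\pole$) at least two of these three terms realize the required Boolean, so the appropriate conjunct of $\nondet{3}$ concludes. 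The conceptual content, which your proposal misses, is exactly this interplay between one ``wrong'' argument and two ``right'' ones -- it is what separates parallel or from fork and places it at the level $\gimleqsimple{4}$ of the hierarchy rather than at $\gimleqsimple{2}$.
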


\begin{proof}
In order to realize the left-to-right implication, by \cref{gimlt-long-short-proposition}, it suffices to realize $\forp \ra (\gim{2} \models \boolnodisjoint{3})$.
Let us prove that this formula is universally realized by $l =\lambda t.\lambda u_1.\lambda u_2.\lambda u_3. \ (t\ u_1\ u_2)\ u_1\ u_3$. Let us fix a pole $\pole$, and let  
$t \realize_\pole \forp$ and $u_1 \stackcons u_2 \stackcons u_3 \stackcons \pi \in \Vert \gim{2} \models \linebreak[1] \boolnodisjoint{3} \Vert_\pole$. 
By \cref{falsity-boolnondisjoint-lemma}, at least two of the $u_i$ realize $\bot$.
 If $u_1$ and $u_2$ realize $\bot$, in particular, they realize $\bool(0)$, which means that $t\ u_1\ u_2$ realizes $\bool(0)$. Since $u_1$ realizes $\bot$,
$(t\ u_1\ u_2)\ u_1\ u_3$ realizes $\bot$, so $l \star t \stackcons u_1 \stackcons u_2 \stackcons u_3 \stackcons \pi \in \pole$.
If $u_1$ and $u_3$ realize $\bot$, in particular, $u_1$ realizes $\bool(1)$, so $t\ u_1\ u_2$ realizes $\bool(1)$, $(t\ u_1\ u_2)\ u_1\ u_3$ realizes $\bot$,
and $l \star t \stackcons u_1 \stackcons u_2 \stackcons u_3 \stackcons \pi \in \pole$. The last case ($u_2$ and $u_3$ realize $\bot$) is similar tho the second.

In order to realize the right-to-left implication, by \cref{gimlt-long-short-proposition}, it suffices to realize $\nondet{n} \ra \forp$.
Let us prove that this formula is universally realized by $r = \lambda t.\lambda u.\lambda v.\ t \ (\torl\ u\ v) \linebreak[1] \ (\torr\ u\ v) \ \true$. Indeed, let us fix a pole $\pole$,
and let $t \realize_\pole \nondet{3}$, $u, v \in \terms$ and $\pi \in \stacks$.

If $u$ realizes $\bool(1)$ and $\pi \in \falsity{\bool(1)}_\pole$, then $\torl\ u\ v$ and $\true$ both realize $\bool(1)$, so $t \ (\torl\ u\ v) \ (\torr\ u\ v) \ \true$ realizes $\bool(1)$
(because $t$ realizes $\forall X~X \ra \top \ra X \ra X$), so $r \star t \stackcons u \stackcons v \stackcons \pi \linebreak[1] \in \pole$.

If $v$ realizes $\bool(1)$ and $\pi \in \falsity{\bool(1)}_\pole$, then $r \star t \stackcons u \stackcons v \stackcons \pi \linebreak[1]\in \pole$ for similar reasons.

If $u$ and $v$ realize $\bool(0)$ and $\pi \in \falsity{\bool(0)}$, then $\torl\ u\ v$ and $\torr\ u\ v$ both realize $\bool(0)$, so $t \ (\torl\ u\ v) \ (\torr\ u\ v) \ \true$ realizes $\bool(1)$
(because $t$ realizes $\forall X~X \ra X \ra \top \ra X$), so $r \star t \stackcons u \stackcons v \stackcons \pi \in \pole$.
\end{proof}

\begin{corollary} Let $\setofpoles$ be a realizability structure. We have $\theory{\setofpoles} \models \gimleqsimple{4}$ if and only if there is a proof-like term
which computes \emph{parallel or} modulo $\setofpoles$. \end{corollary}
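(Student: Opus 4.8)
The plan is to read the corollary off \cref{por-gim2-theorem} after translating both sides into the language of proof-like realizers. By the definition of computing parallel or, a term does so modulo $\setofpoles$ exactly when it realizes $\forp$ with respect to $\setofpoles$; and by the definition of the generated theory, $\theory{\setofpoles} \models \gimleqsimple{4}$ means precisely that some proof-like term realizes $\gimleqsimple{4}$ with respect to $\setofpoles$. So the corollary is the assertion that a proof-like realizer of $\gimleqsimple{4}$ with respect to $\setofpoles$ exists if and only if a proof-like realizer of $\forp$ with respect to $\setofpoles$ does.

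First I would unpack \cref{por-gim2-theorem}. Since $\forp$ and $\gimleqsimple{4}$ are closed, their universal equivalence means that $\forp \ea \gimleqsimple{4}$ is universally realized by some proof-like term $e$. Feeding $e$ the usual proof-like projection combinators for the encoded conjunction $(\forp \ra \gimleqsimple{4}) \wedge (\gimleqsimple{4} \ra \forp)$ produces proof-like terms $p$ and $q$ that universally realize $\forp \ra \gimleqsimple{4}$ and $\gimleqsimple{4} \ra \forp$ respectively.

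The only ingredient beyond \cref{por-gim2-theorem} is that modus ponens transfers realization across a whole structure while preserving proof-likeness: if a proof-like term $p$ universally realizes $A \ra B$ and a proof-like term $u$ satisfies $u \realize_\setofpoles A$, then $p u \realize_\setofpoles B$ and $p u \in \prooflikes$. To check this, fix $\pole \in \setofpoles$; then $p \realize_\pole A \ra B$ (a universal realizer realizes with respect to every pole) and $u \realize_\pole A$, so for each $\pi \in \falsity{B}_\pole$ we have $u \stackcons \pi \in \falsity{A \ra B}_\pole$, hence $p \star u \stackcons \pi \in \pole$; since $p u \star \pi \evaluateskam p \star u \stackcons \pi$ by the push rule and poles are closed under anti-evaluation, $p u \star \pi \in \pole$, so $p u \realize_\pole B$. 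This holds for every $\pole \in \setofpoles$, giving $p u \realize_\setofpoles B$; and $p u$ contains neither a continuation constant nor a restricted instruction because neither $p$ nor $u$ does, so $p u$ is proof-like.

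Finally I would apply this twice. If some proof-like $\psi$ computes parallel or modulo $\setofpoles$, i.e. $\psi \realize_\setofpoles \forp$, then $p \psi$ is a proof-like realizer of $\gimleqsimple{4}$ with respect to $\setofpoles$, whence $\theory{\setofpoles} \models \gimleqsimple{4}$. Conversely, if $\theory{\setofpoles} \models \gimleqsimple{4}$, witnessed by a proof-like $\rho$ with $\rho \realize_\setofpoles \gimleqsimple{4}$, then $q \rho$ is a proof-like term realizing $\forp$ with respect to $\setofpoles$, i.e. a proof-like term computing parallel or. I do not expect a real obstacle: all the mathematical content sits in \cref{por-gim2-theorem}, and the corollary merely repackages that universal equivalence, the sole thing to verify being the routine transfer-of-realizers observation above.
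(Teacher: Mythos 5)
Your proposal is correct and follows exactly the route the paper intends: the corollary is stated without proof as an immediate consequence of \cref{por-gim2-theorem}, and your argument supplies precisely the routine details being omitted --- unfolding the definitions of ``computes parallel or'' and of $\theory{\setofpoles}$, and transferring realizers across the universally realized equivalence by application, which preserves both proof-likeness and realization with respect to every pole of $\setofpoles$. No gap: the modus-ponens transfer step you verify (via the push rule and closure of poles under anti-evaluation) is the standard adequacy-style argument the paper takes for granted.
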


As a corollary, there is a proof-like term which computes \emph{parallel or} if and ony if there is a proof-like $3$-voting instruction: a similar result was stated by Sazonov in \cite{sazonov:degrees},
though in a very different setting.

Formulas of the form $\gimleqsimple{2^n}$ create a hierarchy on realizability theories (which is included in the hierarchy of all formulas of the form $\gim{2} \models A$).
Through the realizability interpretation, this gives a hierarchy on models of computation. This hierarchy compares models of computation on their ability to ``try'' different execution paths using nondeterminism (and, consequently, to perform computations on ``partial data'', the way \emph{parallel or} can compute even when one of its arguments is undefined or diverges):
the bigger $\gim{2}$, the more deterministic the computation model.

This hierarchy is reminiscent of the \emph{degrees of parallelism} of \cite{sazonov:degrees}. Indeed, the first three levels correspond
to \emph{fork}, \emph{parallel or}  and $4$-voting functions respectively. 

Interestingly, the third level also corresponds to G\'erard Berry's ``Gustave's function''. More precisely, the behaviour of Gustave's function can be encoded by the formula
$(0 \ra 1 \ra \top \ra 1) \cap (\top \ra 0 \ra 1 \ra 1) \cap (1 \ra \top \ra 0 \ra 1) \cap (0 \ra 0 \ra 0 \ra 0)
\cap (1 \ra 1 \ra 1 \ra 0)$ (where $0$ stands for $\bool(0)$ and $1$ for $\bool(1)$), and it can be proved that this formula is universally equivalent to $\gimleqsimple{8}$ (similar to \cref{por-gim2-theorem}).

\section{Models with $\left\vert \gim{2}\right\vert = 2^n$ for any $n>0$} \label{gim2-card-2n-section}

In order to prove that the hierarchy described above (of formulas of the form $\gimleqsimple{2^n}$) does not collapse, we have to prove the following:

\begin{theorem} Let $n$ be a positive integer. There exists a consistent realizability structure $\setofpoles$ such that
$\theory{\setofpoles} \models \gimeqsimple{2^n}$.
\end{theorem}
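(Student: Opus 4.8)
The plan is to reduce the statement to realizing two ``halves'' of $\gimeqsimple{2^n}$ and to build the structure from an explicit evaluation relation. First I would unfold the target: up to the power-of-two remark, $\gimeqsimple{2^n}$ is $\gimgeqsimple{2^n} \wedge \gimltsimple{2^{n+1}}$, and since $\gimltsimple{2^n}$ and $\gimltsimple{2^{n+1}}$ are $\gimlt{n}$ and $\gimlt{n+1}$, \cref{gimlt-long-short-proposition} gives the universal equivalences $\gimltsimple{2^{n+1}} \req \nondet{n+1}$ and $\gimgeqsimple{2^n} = (\gimltsimple{2^n}) \ra \bot \req \nondet{n} \ra \bot$. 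Hence it suffices to produce a consistent structure with a \emph{proof-like} realizer of $\nondet{n+1}$ (the bound $\card{\gim2}\le 2^n$) and a \emph{proof-like} realizer of $\neg\nondet{n}$ (the bound $\card{\gim2}\ge 2^n$); combining them by $\lambda f.\, f\,a\,b$ and transporting along the universally realized, proof-like equivalence yields a realizer of $\gimeqsimple{2^n}$.

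For the upper bound I would let $\evaluates$ be the smallest evaluation relation containing a batch of generating rules, the first of which introduces an unrestricted instruction $\instra_0$ with, for all $t_1,\dots,t_{n+1},\pi$ and each $j$, the rule $\{\instra_0 \star t_1 \stackcons \cdots \stackcons t_{n+1} \stackcons \pi\} \evaluates \{t_i \star \pi ; i \neq j\}$; then $\instra_0$ is an $(n+1)$-voting instruction and realizes $\nondet{n+1}$ by \cref{nondet-specification-theorem}. The delicate part is the lower bound: one cannot feed an $n$-voting realizer $n$ proof-like realizers of $\bot$, since in a consistent structure there are none. Instead I would use the restricted instructions, whose purpose here is to stay invisible to proof-like terms. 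Introduce $n$ restricted instructions $\instrb_1, \dots, \instrb_n$ with the rules $\{\instrb_i \star \pi, \instrb_{i'} \star \pi\} \evaluates \emptyset$ for all $i \neq i'$ and all $\pi$; these force, in every pole $\pole$ of $\modelevaluates$, that for each $\pi$ at most one $\instrb_i \star \pi$ lies outside $\pole$. Finally add one more unrestricted ``system call'' $\instra_1$ with $\{\instra_1 \star t \stackcons \pi\} \evaluates \{t \star \instrb_1 \stackcons \cdots \stackcons \instrb_n \stackcons \pi\}$, and set $\setofpoles := \modelevaluates$. If $t \realize_\pole \nondet{n}$, then by \cref{falsity-nondet-lemma} the constraint above makes $\instrb_1 \stackcons \cdots \stackcons \instrb_n \stackcons \pi \in \falsity{\nondet{n}}_\pole$, so $t \star \instrb_1 \stackcons \cdots \stackcons \instrb_n \stackcons \pi \in \pole$ and hence $\instra_1 \star t \stackcons \pi \in \pole$; thus the proof-like term $\instra_1$ realizes $\neg\nondet{n}$ with respect to $\setofpoles$.

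The main obstacle is \textbf{consistency}: since $\instra_0$ and $\instra_1$ are unrestricted, I must check that no proof-like term realizes $\bot$. The design is made precisely so that the $\bot$-strength of the $\instrb_i$ is unlocked only by genuine $n$-voting: a generic argument such as a projection $\lambda x_1 \dots x_n.\, x_i$ fed through $\instra_1$ reduces to $\instrb_i \star \pi$, which escapes the pole on the stacks where $\instrb_i$ is the ``failing'' coordinate, and so does not realize $\bot$. To turn this into a proof I would exhibit enough poles in $\setofpoles$ — one for each coherent choice, across stacks, of which coordinate fails — and run a thread-style argument: because a proof-like term contains neither continuation constants nor restricted instructions, its execution can be trapped on a stack meeting none of these poles, witnessing that $\theory{\setofpoles}$ is consistent.

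The real work lies in two checks. First, that these candidate poles genuinely belong to $\modelevaluates$, i.e. that closing my generating rules under cut and weakening does not collapse the pairwise $\instrb$-constraints into forcing some single $\instrb_i \star \pi$ into every pole (a short calculation with the cut rule suggests it does not, since eliminating an $\instrb$ against an identity pair only reproduces a pairwise rule). Second, that the thread argument still goes through in the presence of $\instra_0$ and $\instra_1$, which is the delicate point because $\instra_1$ does let proof-like terms touch the privileged instructions; here the ``at most one failing coordinate'' invariant is exactly what prevents a proof-like term from ever forcing its way into every pole at once.
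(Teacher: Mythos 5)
Your decomposition of $\gimeqsimple{2^n}$ into a proof-like realizer of $\nondet{n+1}$ plus a proof-like realizer of $\neg\nondet{n}$ is exactly the paper's, and your upper-bound gadget (the $(n+1)$-voting instruction $\instra_0$) coincides with the paper's $\phi$. Your lower-bound gadget is genuinely different: the paper refutes $\nondet{n}$ with a \emph{must}-nondeterministic instruction $\chi$ that branches over $n$ stacks built from two restricted instructions $\formulaAsInstruction{\top}$ and $\formulaAsInstruction{\bot}$, the latter realizing $\bot$ outright, whereas you use $n$ restricted instructions tied together by interaction rules $\{\instrb_i \star \pi, \instrb_{i'} \star \pi\} \evaluates \emptyset$ plus a deterministic call $\{\instra_1 \star t \stackcons \pi\} \evaluates \{t \star \instrb_1 \stackcons \cdots \stackcons \instrb_n \stackcons \pi\}$. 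The realization checks you give for this variant are correct, and your first ``check'' costs nothing: by the paper's lemma that $\setofpoles_{\evaluates^*} = \setofpoles_{\evaluates}$ for a generating relation, membership of a pole in $\modelevaluates$ is tested against the generating rules only, so no reasoning about the cut/weakening closure is needed.

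The genuine gap is consistency, which is the actual content of the theorem (it occupies most of the paper's final section), and which you only assert via the slogan that the ``at most one failing coordinate'' invariant protects you. Worse, your design destroys the two anchors of the paper's consistency proof. The paper relies on \cref{evaluates-no-interaction-lemma} (possible because all its generating rules have singleton left-hand sides) and on the existence of a smallest pole $\pole = \{p : \{p\} \evaluates \emptyset\}$, on which it runs the inductive content argument of \cref{gim-2-finite-consistent-prop}. In your structure (for $n \geq 2$) both fail: the pairwise rules have two-element left-hand sides, and there is no smallest pole at all, since for any fixed $\pi_0$ both $\processes \setminus \{\instrb_1 \star \pi_0\}$ and $\processes \setminus \{\instrb_2 \star \pi_0\}$ satisfy every generating rule and hence lie in $\setofpoles$, so a smallest pole would have to exclude both $\instrb_1 \star \pi_0$ and $\instrb_2 \star \pi_0$, violating your own pairwise constraint. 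So the paper's induction cannot be transplanted; you would need a new argument, e.g.\ stratifying $\setofpoles$ by choice functions $F$ assigning to each stack its failing coordinate (each stratum, unlike $\setofpoles$ itself, does have a smallest pole), and then proving that every proof-like $t$ escapes the smallest pole of some stratum at some stack. That proof must face the real combinatorial point -- at each $(n+1)$-ary voting node one must keep two branches without ever forcing two distinct $\instrb_i$'s at a single stack, even when $\cc$ is used to aim different branches at the same captured stack; the $n{+}1$ versus $n$ pigeonhole is in your favour, but turning it into an invariant that survives nesting and continuations is precisely the analogue of \cref{gim-2-finite-consistent-prop}, not a corollary of it. Until that induction is carried out, the theorem is not proved.
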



The cardinality of a finite Boolean algebra is always a power of two. Moreover, for all $n$, the theory of Boolean algebras with $2^n$ elements is complete,
so this theorem means $\gim{2}$ can be made elementarily equivalent to any finite Boolean algebra with at least two elements.

Let us fix a positive integer $n$.

Because of the results from \cref{gim2-section}, it suffices
to find a consistent realizability structure where both $\gim{2} \models \boolnodisjoint{n+1}$ and $\gim{2} \models \boolnodisjoint{n} \ra \bot$ are realized.

Let $\formulaAsInstruction{\top}$ and $\formulaAsInstruction{\bot}$ be two restricted instructions and $\phi$ and $\chi$ two nonrestricted instructions. Let $\succ_1$ be the smallest binary relation on $\P(\processes)$ such that:
\begin{itemize}
\item For all $p, q \in \processes$, if $p \evaluateskamone q$, then $\{ p \} \evaluatesone \{ q \}$,
\item For all $t_0, \ldots, t_n \in \terms$, $\pi \in \stacks$ and $j \in \{0, \ldots, n\}$, $$\{ \phi \star t_0 \stackcons \ldots \stackcons t_n \stackcons \pi \} \evaluatesone \{ t_i \star \pi; i \neq j \}\text{,}$$
\item For all $\pi \in \stacks$, $u \in \terms$ and $k \in \{1,\ldots,n\}$, $$\{ \chi \star u \stackcons \pi \} \evaluatesone \{ u \star \formulaAsInstruction{1=k} \stackcons \ldots \stackcons \formulaAsInstruction{n=k} \stackcons \pi ; 1 \leq k \leq n \}\text{,}$$
where $\formulaAsInstruction{i=k}$ means $\formulaAsInstruction{\top}$ if $i = k$ and $\formulaAsInstruction{\bot}$ if $i \neq k$,
\item For all $\pi \in \stacks$, $\{ \formulaAsInstruction{\bot} \star \pi \} \evaluatesone \emptyset$.
\end{itemize}

Let $\succ$ be the smallest evaluation relation which contains $\succ_1$, and let $\setofpoles = \setofpoles_{\evaluates}$.

Note that $\phi$ is an $n$-voting instruction with respect to $\evaluates$.

\begin{lemma} The term $\formulaAsInstruction{\bot}$ realizes $\bot$ with respect to $\setofpoles$. \end{lemma}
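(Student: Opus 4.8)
The plan is to show that $\formulaAsInstruction{\bot}$ realizes $\bot$ with respect to $\setofpoles = \setofpoles_{\succ}$, which by definition means that for every pole $\pole \in \setofpoles$, we have $\formulaAsInstruction{\bot} \realize_\pole \bot$, i.e. $\formulaAsInstruction{\bot} \in \truth{\bot}_\pole = (\falsity{\bot}_\pole)^\pole$. Since $\falsity{\bot}_\pole = \P(\stacks)$, the truth value $\truth{\bot}_\pole$ is exactly the set of terms $t$ such that $t \star \pi \in \pole$ for \emph{every} stack $\pi$. So the statement to prove reduces to: for all $\pole \in \setofpoles$ and all $\pi \in \stacks$, $\formulaAsInstruction{\bot} \star \pi \in \pole$.

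First I would unfold the definition of $\setofpoles_{\succ}$: a pole $\pole$ belongs to it precisely when, for all $P, Q$ with $P \evaluates Q$, the implication ``$Q \subseteq \pole \Rightarrow P \cap \pole \neq \emptyset$'' holds. The key is the fourth generating clause, which puts $\{\formulaAsInstruction{\bot} \star \pi\} \evaluatesone \emptyset$ into $\succ_1$, and hence into $\succ$. I would apply the membership condition with $P = \{\formulaAsInstruction{\bot} \star \pi\}$ and $Q = \emptyset$. The hypothesis $Q \subseteq \pole$ is then vacuously true (the empty set is contained in every pole), so the condition forces $P \cap \pole \neq \emptyset$, which means exactly $\formulaAsInstruction{\bot} \star \pi \in \pole$. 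Since this argument works for an arbitrary $\pole \in \setofpoles$ and an arbitrary $\pi \in \stacks$, it gives precisely what is needed.

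There is essentially no hard part here: the whole force of the lemma is packaged into the generating rule $\{\formulaAsInstruction{\bot} \star \pi\} \evaluatesone \emptyset$, whose \emph{intended meaning} is exactly that $\formulaAsInstruction{\bot}$ is a ``winning'' head term on every stack — i.e. a universal realizer of $\bot$ — and the realizability structure generated by $\succ$ is by construction the largest collection of poles compatible with all such rules. The one small point worth checking is that $\succ_1 \subseteq \succ$, so that the rule is indeed honoured by every pole in $\setofpoles_{\succ}$; this is immediate since $\succ$ is defined as the smallest evaluation relation containing $\succ_1$, and membership in $\setofpoles_{\succ}$ quantifies over all pairs related by $\succ$, in particular those coming from $\succ_1$. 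The only conceptual subtlety to flag — the thing a careful reader should verify — is the convention that $P \cap \pole \neq \emptyset$ cannot be satisfied ``for free,'' so that a rule with empty right-hand side genuinely constrains every pole to contain its left-hand side; this is exactly the mechanism that lets restricted instructions be forced to realize inconsistent formulas without any proof-like term doing so.
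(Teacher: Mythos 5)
Your proof is correct and is exactly the argument the paper has in mind: the lemma is stated there without proof precisely because it follows immediately from the generating rule $\{\formulaAsInstruction{\bot} \star \pi\} \evaluatesone \emptyset$ together with the definition of $\setofpoles_{\evaluates}$ (taking $Q = \emptyset \subseteq \pole$ vacuously, so $\formulaAsInstruction{\bot} \star \pi \in \pole$ for every $\pole \in \setofpoles$ and every $\pi$). Your unfolding of $\truth{\bot}_\pole$ and the remark that $\succ_1 \subseteq \succ$ supply all the details the paper leaves implicit.
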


\begin{proposition} The term $\phi$ realizes $\gim{2} \models \boolnodisjoint{n+1}$ with respect to $\setofpoles$. \end{proposition}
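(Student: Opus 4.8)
The plan is to recognise $\phi$ as an $(n+1)$-voting instruction and then transport this property along the chain of equivalences already established in \cref{gim2-section}. By construction, $\succ_1$ contains, for all $t_0, \ldots, t_n \in \terms$, all $\pi \in \stacks$ and all $j \in \{0, \ldots, n\}$, the pair $\{\phi \star t_0 \stackcons \ldots \stackcons t_n \stackcons \pi\} \evaluatesone \{t_i \star \pi; i \neq j\}$. Since $\succ$ is an evaluation relation containing $\succ_1$ and $\setofpoles = \setofpoles_{\succ}$, the general inclusion ${\succ} \subseteq {\evaluatesmodel{\setofpoles}}$ (which holds for every evaluation relation) shows that all these pairs also lie in $\evaluatesmodel{\setofpoles}$. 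In other words, $\phi$ is an $(n+1)$-voting instruction modulo $\evaluatesmodel{\setofpoles}$.

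First I would apply \cref{nondet-specification-theorem} with index $n+1$: being an $(n+1)$-voting instruction modulo $\evaluatesmodel{\setofpoles}$, the term $\phi$ realizes $\nondet{n+1}$ with respect to $\setofpoles$. Then I would invoke the subtyping established inside the proof of \cref{gimlt-long-short-proposition}, namely $\nondet{n+1} \rle (\gim{2} \models \boolnodisjoint{n+1})$, which follows from comparing the falsity values computed in \cref{falsity-boolnondisjoint-lemma} and \cref{falsity-nondet-lemma}. By the monotonicity of duals, $\falsity{\nondet{n+1}}_\pole \supseteq \falsity{\gim{2} \models \boolnodisjoint{n+1}}_\pole$ yields $\truth{\nondet{n+1}}_\pole \subseteq \truth{\gim{2} \models \boolnodisjoint{n+1}}_\pole$ for every pole $\pole \in \setofpoles$, so the very same $\phi$ realizes $\gim{2} \models \boolnodisjoint{n+1}$ with respect to $\setofpoles$, as required.

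I do not expect a genuine obstacle, since the difficult content has been front-loaded into the earlier results; the work is purely a matter of assembling them. The one point requiring care is the bookkeeping of indices: the clause defining $\phi$ pushes $n+1$ arguments $t_0, \ldots, t_n$ and, for each $j$, discards exactly one of them, so $\phi$ is an $(n+1)$-voting instruction and the target is $\gim{2} \models \boolnodisjoint{n+1}$ rather than $\boolnodisjoint{n}$. One must also be careful to use the subtyping $\nondet{n+1} \rle (\gim{2} \models \boolnodisjoint{n+1})$ directly, rather than the full universal equivalence of \cref{gimlt-long-short-proposition}, so that the conclusion concerns $\phi$ itself and not $\phi$ precomposed with a realizer of an implication.
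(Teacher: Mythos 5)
Your proof is correct and follows essentially the same route as the paper's: its one-line proof likewise combines \cref{nondet-specification-theorem} (applied with $n+1$, since $\phi$ pushes $n+1$ arguments) with the falsity-value comparison from \cref{falsity-boolnondisjoint-lemma,falsity-nondet-lemma}, \emph{i.e.} the subtyping $\nondet{n+1} \rle (\gim{2} \models \boolnodisjoint{n+1})$, so that the realizer is $\phi$ itself. Your careful index bookkeeping is warranted (and even silently corrects the paper's remark calling $\phi$ an ``$n$-voting'' instruction where it is in fact an $(n+1)$-voting one).
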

\begin{proof} From \cref{falsity-nondet-lemma} and \cref{nondet-specification-theorem}. 
 \end{proof}

\begin{proposition} The term $\chi$ realizes $\gim{2} \models \boolnodisjoint{n} \ra \bot$ with respect to $\setofpoles$. \end{proposition}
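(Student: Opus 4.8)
The plan is to unfold the definition of realizability with respect to $\setofpoles$ and reduce the claim to a local check about the stacks that $\chi$ pushes. First I would fix a pole $\pole \in \setofpoles$, a term $u$ with $u \realize_\pole \gim{2} \models \boolnodisjoint{n}$, and a stack $\pi$. Since $\falsity{\bot}_\pole = \P(\stacks)$, the falsity value of $\gim{2} \models \boolnodisjoint{n} \ra \bot$ consists exactly of the stacks $u \stackcons \pi$ with $u$ a realizer of $\gim{2} \models \boolnodisjoint{n}$; hence showing $\chi \realize_\setofpoles \gim{2} \models \boolnodisjoint{n} \ra \bot$ amounts to establishing $\chi \star u \stackcons \pi \in \pole$ for every such $\pole$, $u$ and $\pi$.

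Next I would invoke the evaluation rule for $\chi$ together with $\setofpoles = \modelevaluates$. Because $\{ \chi \star u \stackcons \pi \} \evaluates \{ u \star \formulaAsInstruction{1=k} \stackcons \ldots \stackcons \formulaAsInstruction{n=k} \stackcons \pi ; 1 \leq k \leq n \}$ (the transition belongs to $\evaluates$ since $\evaluates \supseteq \evaluates_1$) and $\pole$ lies in the structure generated by $\evaluates$, the definition of $\modelevaluates$ tells me that $\chi \star u \stackcons \pi \in \pole$ follows as soon as the entire right-hand set is contained in $\pole$. So it suffices to prove $u \star \formulaAsInstruction{1=k} \stackcons \ldots \stackcons \formulaAsInstruction{n=k} \stackcons \pi \in \pole$ for each $k \in \{1, \ldots, n\}$.

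The crux is then to verify, for each fixed $k$, that the pushed stack lies in $\falsity{\gim{2} \models \boolnodisjoint{n}}_\pole$. By definition of $\formulaAsInstruction{i=k}$, the $k$-th entry equals $\formulaAsInstruction{\top}$ while every other entry equals $\formulaAsInstruction{\bot}$, which realizes $\bot$ by the preceding lemma. Thus at most one of the $n$ terms $\formulaAsInstruction{1=k}, \ldots, \formulaAsInstruction{n=k}$ fails to realize $\bot$, which is precisely the condition characterised by \cref{falsity-boolnondisjoint-lemma} for membership of $\formulaAsInstruction{1=k} \stackcons \ldots \stackcons \formulaAsInstruction{n=k} \stackcons \pi$ in $\falsity{\gim{2} \models \boolnodisjoint{n}}_\pole$. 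Since $u \realize_\pole \gim{2} \models \boolnodisjoint{n}$, feeding $u$ this stack lands in $\pole$, closing the step and hence the proof.

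I do not expect a genuine obstacle: the argument is a direct dualisation of \cref{nondet-specification-theorem}, and the $\chi$-rule is designed so that each of its $n$ \emph{must}-nondeterministic branches confronts $u$ with a stack having exactly one distinguished non-$\bot$ slot. The only point requiring care is the bookkeeping—reading off "at most one entry fails to realize $\bot$" correctly from the definition of $\formulaAsInstruction{i=k}$ for every $k$, and matching the universally quantified $k$ (the \emph{must}-branching) against the inclusion $Q \subseteq \pole$ in the definition of $\modelevaluates$ in the correct direction.
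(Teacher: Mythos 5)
Your proposal is correct and follows essentially the same route as the paper's proof: fix $\pole \in \setofpoles$, use the $\chi$-rule together with the definition of $\modelevaluates$ to reduce to showing each branch $u \star \formulaAsInstruction{1=k} \stackcons \ldots \stackcons \formulaAsInstruction{n=k} \stackcons \pi$ lies in $\pole$, and conclude via the falsity-value characterisation since at most one entry ($\formulaAsInstruction{\top}$) fails to realize $\bot$. The only (cosmetic) difference is that you spell out the unfolding of the implication's falsity value and of the generated structure explicitly, and you cite \cref{falsity-boolnondisjoint-lemma}, which is indeed the lemma matching the formula $\gim{2} \models \boolnodisjoint{n}$.
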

\begin{proof} Let us fix a pole $\pole \in \setofpoles$. Let $u \realize_\pole (\gim{2} \models \boolnodisjoint{n})$, $\pi \in \stacks$ and $k \in \{1,\ldots,n\}$. There is at most one $i$ such that $\formulaAsInstruction{i=k}$ does not realize $\bot$, so
$\formulaAsInstruction{1=k} \stackcons \ldots \stackcons \formulaAsInstruction{n=k} \stackcons \pi \in \falsity{\gim{2} \models \boolnodisjoint{n}}_\pole$ (by \cref{falsity-nondet-lemma}). Therefore, $\{ u \star \formulaAsInstruction{1=k} \stackcons \ldots \stackcons \formulaAsInstruction{n=k} \stackcons \pi ; 1 \leq k \leq n \} \subseteq \pole$,
so $\chi \star u \stackcons \pi \in \pole$.
   \end{proof}

Note that in particular, the formula $\forall x^{\gim{2}} \forall y^{\gim{2}}~ x = 0 \vee x = 1$, which is universally equivalent to $\gimeqsimple{2}$,
is a disjunction of two positive literals which is true in $\N$, and whose \emph{negation} is realized in $\setofpoles$ if $n > 1$ (compare this with \cref{horn-prop}).

\begin{lemma} \label{evaluates-no-interaction-lemma} For all $P, Q$, $P \evaluates Q$ if and only if there exists $p \in P$ such that $\{ p \} \evaluates Q$. \end{lemma}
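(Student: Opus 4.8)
The right-to-left direction is immediate: if $p \in P$ and $\{p\} \evaluates Q$, then since $\{p\} \subseteq P$ and $Q \subseteq Q$, the weakening axiom of the definition of an evaluation relation yields $P \evaluates Q$. All the content lies in the converse, which says that $\evaluates$ never forces genuine interaction between distinct processes on its left-hand side. The reason this should hold is structural: every one of the four rules generating $\succ_1$ has a \emph{singleton} as its left-hand side, so a multi-element left context can only ever be created through weakening, never through the generators nor (as I will check) through cut.

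To make this precise, the plan is to introduce the auxiliary relation $R$ on $\P(\processes)$ defined by $P \mathrel{R} Q$ if and only if there exists $p \in P$ with $\{p\} \evaluates Q$, and to show that $R$ is itself an evaluation relation containing $\succ_1$. Since $\evaluates$ is the \emph{smallest} such relation, this will give ${\evaluates} \subseteq R$, which is exactly the left-to-right direction. (The relation $R$ is well-defined because $\evaluates$, the smallest evaluation relation containing $\succ_1$, exists by the earlier lemma, and $\succ_1 \subseteq {\evaluates}$.)

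First I would check that $R$ contains $\succ_1$ and satisfies the two base axioms: each generating rule has a singleton left-hand side $\{p\}$, so from $\{p\} \succ_1 Q$ and $\succ_1 \subseteq {\evaluates}$ I get $\{p\} \evaluates Q$ with $p \in \{p\}$, i.e. $\{p\} \mathrel{R} Q$; the one-step and identity axioms are handled the same way from $\{p\} \evaluates \{q\}$ and $\{p\} \evaluates \{p\}$ in $\evaluates$. Weakening for $R$ is routine: a witness $p \in P$ with $\{p\} \evaluates Q$ remains a witness after enlarging, after first passing from $\{p\} \evaluates Q$ to $\{p\} \evaluates Q'$ by weakening inside $\evaluates$, and noting $p \in P \subseteq P'$.

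The main obstacle, and the only step that uses the singleton structure essentially, is closure of $R$ under cut. Given $P \mathrel{R} (Q \cup \{r\})$ and $(P' \cup \{r\}) \mathrel{R} Q'$, I obtain witnesses $p \in P$ with $\{p\} \evaluates Q \cup \{r\}$ and $s \in P' \cup \{r\}$ with $\{s\} \evaluates Q'$. If $s \in P'$, then weakening in $\evaluates$ gives $\{s\} \evaluates Q \cup Q'$ with $s \in P \cup P'$, so $(P \cup P') \mathrel{R} (Q \cup Q')$. Otherwise $s = r$, and I apply the cut axiom of $\evaluates$ to $\{p\} \evaluates Q \cup \{r\}$ and $\{r\} \evaluates Q'$, taking the second left context to be $\emptyset$; this produces $\{p\} \evaluates Q \cup Q'$ with $p \in P \cup P'$, so again $(P \cup P') \mathrel{R} (Q \cup Q')$. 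In both cases the resulting witness is a single process, which is precisely what keeps $R$ inside the single-premise shape and makes the verification go through; it completes the proof that $R$ is an evaluation relation containing $\succ_1$, hence ${\evaluates} \subseteq R$ and the lemma.
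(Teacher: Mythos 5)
Your proof is correct and takes essentially the same route as the paper's: the paper likewise defines the auxiliary relation $\widetilde{\evaluates}$ (given by the existence of a witness $p \in P$ with $\{p\} \evaluates Q$), notes that it is contained in $\evaluates$, contains $\evaluatesone$, and is itself an evaluation relation, and concludes equality by minimality of $\evaluates$. The only difference is one of detail: the paper states these three facts without proof, whereas you carry out the verification, in particular the cut case with its split on whether the witness lies in $P'$ or equals $r$, which is indeed the only step where the singleton structure matters.
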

\begin{proof} Let $\widetilde{\evaluates}$ be the binary relation defined by $P\ \widetilde{\evaluates}\ Q$ if and only if there exists $p \in P$ such that $\{ p \} \evaluates Q$. It is contained in $\evaluates$,
it contains $\evaluatesone$, and it is an evaluation relation, therefore it is equal to $\evaluates$.\end{proof}

\begin{lemma} \label{evaluates-is-compact} The evaluation relation $\evaluates$ is \emph{compact}: for all $P, Q$ such that $P \evaluates Q$, there exist
two finite sets $P_0 \subseteq P$ and $Q_0 \subseteq Q$ such that $P_0 \evaluates Q_0$. \end{lemma}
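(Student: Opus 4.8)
The plan is to imitate the proof of \cref{evaluates-no-interaction-lemma}: I introduce the \emph{compact approximation} $\succ_{\mathrm f}$ of $\evaluates$, defined by $P \mathrel{\succ_{\mathrm f}} Q$ if and only if there exist finite subsets $P_0 \subseteq P$ and $Q_0 \subseteq Q$ with $P_0 \evaluates Q_0$, and then show that $\succ_{\mathrm f} = \evaluates$. The claimed compactness is exactly the inclusion $\evaluates \subseteq \succ_{\mathrm f}$ unwound, so this suffices. The reverse inclusion $\succ_{\mathrm f} \subseteq \evaluates$ is immediate: any witnesses $P_0 \evaluates Q_0$ with $P_0 \subseteq P$ and $Q_0 \subseteq Q$ give $P \evaluates Q$ by weakening. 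For $\evaluates \subseteq \succ_{\mathrm f}$, since $\evaluates$ is the smallest evaluation relation containing $\succ_1$, it is enough to check that $\succ_{\mathrm f}$ is itself an evaluation relation and that it contains $\succ_1$.

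The containment $\succ_1 \subseteq \succ_{\mathrm f}$ and three of the four axioms are routine. Each of the four generating clauses of $\succ_1$ relates two \emph{finite} sets of processes (the left-hand sides are singletons, and the right-hand sides have at most $n$ elements), and every such pair already lies in $\evaluates$; taking $P_0 = P$ and $Q_0 = Q$ shows it lies in $\succ_{\mathrm f}$. The same observation, applied to one-step evaluation and to identity, shows that $\succ_{\mathrm f}$ contains $\evaluateskamone$ and all pairs $\{p\} \mathrel{\succ_{\mathrm f}} \{p\}$, while weakening for $\succ_{\mathrm f}$ holds because enlarging $P$ and $Q$ does not disturb the finite witnesses.

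The only delicate point — and the step I expect to be the main obstacle — is the cut rule for $\succ_{\mathrm f}$. Suppose $P \mathrel{\succ_{\mathrm f}} Q \cup \{r\}$ and $P' \cup \{r\} \mathrel{\succ_{\mathrm f}} Q'$, witnessed respectively by $P_1 \evaluates S$ (with $P_1 \subseteq P$ finite and $S \subseteq Q \cup \{r\}$ finite) and by $P_2 \evaluates Q_2$ (with $P_2 \subseteq P' \cup \{r\}$ finite and $Q_2 \subseteq Q'$ finite). The subtlety is that the cut element $r$ need not occur in these finite witnesses, so I cannot blindly apply the cut rule of $\evaluates$; instead I case-split. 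If $r \notin S$, then $S \subseteq Q$ and the witnesses $(P_1, S)$ already show $P \cup P' \mathrel{\succ_{\mathrm f}} Q \cup Q'$ by weakening; symmetrically, if $r \notin P_2$ then $(P_2, Q_2)$ works. In the remaining case $r \in S$ and $r \in P_2$, I write $S = S_0 \cup \{r\}$ with $S_0 = S \setminus \{r\} \subseteq Q$ and $P_2 = P_2^0 \cup \{r\}$ with $P_2^0 = P_2 \setminus \{r\} \subseteq P'$, so that $P_1 \evaluates S_0 \cup \{r\}$ and $P_2^0 \cup \{r\} \evaluates Q_2$; now the genuine cut rule of $\evaluates$ yields $P_1 \cup P_2^0 \evaluates S_0 \cup Q_2$, and since $P_1 \cup P_2^0 \subseteq P \cup P'$ and $S_0 \cup Q_2 \subseteq Q \cup Q'$ are finite, this is the required witness. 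Hence $\succ_{\mathrm f}$ is an evaluation relation containing $\succ_1$, giving $\evaluates \subseteq \succ_{\mathrm f}$ and thus $\succ_{\mathrm f} = \evaluates$. Note that this argument uses nothing about the particular instructions $\phi$, $\chi$ and $\formulaAsInstruction{\bot}$ beyond the fact that each generating clause is finite on both sides, so it shows more generally that the evaluation relation generated by any such family of clauses is compact; moreover, combining the conclusion with \cref{evaluates-no-interaction-lemma} one may even take the finite set $P_0$ to be a singleton.
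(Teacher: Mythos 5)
Your proof is correct and is precisely the argument the paper intends: the paper's own proof reads ``Similar to the previous lemma,'' i.e.\ one defines the candidate relation (your $\succ_{\mathrm f}$), checks that it contains $\succ_1$, is an evaluation relation, and is contained in $\evaluates$ by weakening, then concludes by minimality of $\evaluates$ --- exactly your plan of imitating \cref{evaluates-no-interaction-lemma}. Your case analysis for the cut rule supplies the one nontrivial detail that the paper leaves implicit.
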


\begin{proof} Similar to the previous lemma. \end{proof}

It remains to prove that the structure $\setofpoles$ is consistent. It suffices to find one pole $\pole \in \setofpoles$ such that no proof-like term realizes $\bot$ with respect to $\pole$.

Let $\pole = \{ p \in \processes, \{ p \} \evaluates \emptyset \}$.

\begin{lemma} We have $\pole \in \setofpoles$, and for all $\pole' \in \setofpoles$, $\pole \subseteq \pole'$: in other words, $\pole$ is the smallest pole in $\setofpoles$. \end{lemma}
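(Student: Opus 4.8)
The plan is to verify the three assertions in turn: that $\pole$ is a pole, that $\pole$ belongs to $\setofpoles$, and that it is contained in every other member of $\setofpoles$. The last is essentially immediate, while the first two both rest on using the cut rule to absorb the right-hand side of an evaluation.

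First I would check that $\pole$ is closed by anti-evaluation. Suppose $p \evaluateskamone q$ with $q \in \pole$. Then $\{p\} \evaluates \{q\}$ (since $\evaluates$ contains the one-step relation) and $\{q\} \evaluates \emptyset$ (since $q \in \pole$). Applying the cut rule with $P = \{p\}$, $Q = \emptyset$, $r = q$, $P' = Q' = \emptyset$ yields $\{p\} \evaluates \emptyset$, i.e. $p \in \pole$. As $\evaluateskam$ is the reflexive-transitive closure of $\evaluateskamone$, this shows $\pole$ is closed by anti-evaluation, hence is a pole.

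Next, to establish $\pole \in \setofpoles$, I must check that whenever $P \evaluates Q$ with $Q \subseteq \pole$, we have $P \cap \pole \neq \emptyset$. The subtlety is that $Q$ may be infinite, so one cannot simply absorb it all with the finitary cut rule. This is exactly where \cref{evaluates-is-compact} is needed: from $P \evaluates Q$ it provides finite subsets $P_0 \subseteq P$ and $Q_0 \subseteq Q$ with $P_0 \evaluates Q_0$, and \cref{evaluates-no-interaction-lemma} then singles out some $p \in P_0$ with $\{p\} \evaluates Q_0$. Since $Q_0 \subseteq Q \subseteq \pole$ is finite, each of its elements $q$ satisfies $\{q\} \evaluates \emptyset$, so iterating the cut rule peels the elements of $Q_0$ off one at a time until $\{p\} \evaluates \emptyset$. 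This gives $p \in P_0 \cap \pole \subseteq P \cap \pole$, as required.

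Finally, for minimality, let $\pole' \in \setofpoles$ and take any $p \in \pole$, so $\{p\} \evaluates \emptyset$. Instantiating the membership condition defining $\setofpoles$ for $\pole'$ with $P = \{p\}$ and $Q = \emptyset$ (whose inclusion $\emptyset \subseteq \pole'$ is trivial) forces $\{p\} \cap \pole' \neq \emptyset$, i.e. $p \in \pole'$. Hence $\pole \subseteq \pole'$. The only genuinely delicate point is the reduction to the finite case via compactness; everything else is a direct unwinding of the cut rule.
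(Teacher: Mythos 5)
Your proof is correct and follows essentially the same route as the paper's: compactness (\cref{evaluates-is-compact}) plus \cref{evaluates-no-interaction-lemma} to reduce to a single process and a finite right-hand side, iterated cut to absorb it, and the trivial instantiation $P = \{p\}$, $Q = \emptyset$ for minimality. Your explicit verification that $\pole$ is closed by anti-evaluation is a small, correct addition that the paper leaves implicit (it also follows from the membership condition applied to the singleton evaluations $\{p\} \evaluates \{q\}$).
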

\begin{proof} Let us prove that $\pole \in \setofpoles$: let $Q \subseteq \pole$ and $P \evaluates Q$. By \cref{evaluates-is-compact}, one can assume without loss of generality that $P$ and $Q$ are finite.
 By \cref{evaluates-no-interaction-lemma}, there exists $p \in P$ such that $\{ p \} \evaluates Q$. Since $\{ q \} \evaluates \emptyset$ for all $q \in Q$, one can prove by repeatedly applying the cut rule (once per element of $Q$) that
 $\{ p \} \evaluates \emptyset$: in other words, $p \in \pole$.
 
 Now, let $\pole' \in \setofpoles$ and let $p \in \pole$. Since $\{ p \} \evaluates \emptyset$ and $\emptyset \subseteq \pole$, necessarily, $\{ p \} \cap \pole' \neq \emptyset$, so $p \in \pole'$.
 \end{proof}

We say that a process is \emph{sound} if it can be written without the instruction $\formulaAsInstruction{\bot}$.

Let us fix a sequence $(\gamma_i)_{i \in \N}$ of pairwise distinct restricted instructions all different from $\formulaAsInstruction{\top}$ and $\formulaAsInstruction{\bot}$.

\begin{notation} Let $p$ be a process and $K$ a subset of $\N$. We denote by $\processreplace{p}{K}$ the process $p$ where,
for all $i \in \N$, each occurence of $\gamma_i$ has been replaced by $\formulaAsInstruction{\top}$ if $i \in K$, and by $\formulaAsInstruction{\bot}$ if $i \notin K$.
\end{notation}

\begin{definition} Let $p$ be a process. The \emph{content} of $p$, denoted by $\content{p}$, is the set of all $K \subseteq \N$ such that $\processreplace{p}{K} \in \pole$.
\end{definition}

\begin{remark} Let $p$ be a process and $K \subseteq L \subseteq \N$. One can prove that if $L \in \content{p}$, then $K \in \content{p}$. \end{remark}

\begin{proposition} \label{gim-2-finite-consistent-prop} Let $p$ be a sound process. For all $K_1, \ldots, K_n \in \content{p}$, $K_1 \cup \ldots \cup K_n \neq \N$.
\end{proposition}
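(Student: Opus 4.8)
The plan is to prove a stronger \emph{polychromatic} statement by well-founded induction on evaluation, arranged so that the two custom instructions $\phi$ and $\chi$ each exactly exhaust the combinatorial slack afforded by having $n$ colours. Throughout I will use that, by \cref{evaluates-no-interaction-lemma} and \cref{evaluates-is-compact}, $\{r\} \evaluates \emptyset$ holds if and only if $r$ admits a \emph{finite} reduction tree all of whose leaves are instances of the only rule with empty right-hand side, namely $\{\formulaAsInstruction{\bot} \star \pi\} \evaluatesone \emptyset$; the size of such a tree is my induction measure. Since the only way to reach $\emptyset$ is through $\formulaAsInstruction{\bot}$ at the head, death can only come from $\formulaAsInstruction{\bot}$, whereas $\formulaAsInstruction{\top}$ is inert and faces any stack as a non-dying normal form.

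First I recast the sets $K_m$ as colours: what matters about $K_1, \ldots, K_n$ is, for each index $i$, only the set $\{m : i \in K_m\} \subseteq \{1, \ldots, n\}$. So I work with \emph{labelled processes}, in which every occurrence of a restricted instruction carries a label $\ell \subseteq \{1, \ldots, n\}$; in the sound process $p$ I label each $\gamma_i$ by $\{m : i \in K_m\}$ and each $\formulaAsInstruction{\top}$ by $\{1, \ldots, n\}$. For a colour $m$, let $\sigma_m(q)$ be the genuine process obtained by replacing each labelled occurrence by $\formulaAsInstruction{\top}$ if $m \in \ell$ and by $\formulaAsInstruction{\bot}$ otherwise, so that $\sigma_m(p) = \processreplace{p}{K_m}$. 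Call $q$ \emph{$\emptyset$-free} if no occurrence carries the empty label. Under the assumption $\bigcup_m K_m = \N$ made toward a contradiction, every index is covered, so the labelling of $p$ is $\emptyset$-free. Relabelling commutes with the push/grab/save/restore rules and with $\phi$- and $\chi$-steps, since these inspect only non-restricted head symbols, so $\sigma_m$ transports reduction trees faithfully.

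The core claim, proved by induction on the size of a reduction tree witnessing $\{\sigma_1(q)\} \evaluates \emptyset$, is: \textbf{no $\emptyset$-free labelled process $q$ satisfies $\{\sigma_m(q)\} \evaluates \emptyset$ for every colour $m$.} I case on the head of $q$ once its applications have been unfolded (a shared, deterministic prelude). If the head is a labelled instruction with label $\ell$, then $\emptyset$-freeness gives some $m_0 \in \ell$, for which $\sigma_{m_0}(q)$ has head $\formulaAsInstruction{\top}$ and is a non-dying normal form, contradicting $\{\sigma_{m_0}(q)\} \evaluates \emptyset$; the remaining stuck heads (an abstraction, $\cc$, a $k_\pi$, or $\phi$/$\chi$ facing too short a stack) are excluded identically, since then \emph{no} colour dies. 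A deterministic head step is shared by all colours and yields a strictly smaller, still $\emptyset$-free instance, handled by induction.

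The two genuine cases are $\phi$ and $\chi$. For $\phi$ with its $n+1$ arguments $t_0, \ldots, t_n$: each colour $m$ kills $\{\sigma_m(t_i) \star \sigma_m(\pi) : i \neq j_m\}$ for some dropped index $j_m$; since there are $n$ colours but $n+1$ positions, by pigeonhole some $j^{*}$ is dropped by no colour, so every colour kills $\sigma_m(t_{j^{*}} \star \pi)$, and I recurse on the $\emptyset$-free process $t_{j^{*}} \star \pi$ with a strictly smaller tree. For $\chi$, which reduces $\chi \star u \stackcons \pi$ to the $n$ branches carrying a single $\formulaAsInstruction{\top}$ in position $k$, I recurse on $q' = u \stackcons g_1 \stackcons \cdots \stackcons g_n \stackcons \pi$ where $g_i$ is a fresh labelled instruction with the \emph{singleton} label $\{i\}$: then $\sigma_m(q')$ is exactly the $m$-th branch, which colour $m$ kills, $q'$ is $\emptyset$-free because each new label is non-empty, and its tree is a strict subtree. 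This recolouring is the crux, and I expect it to be the main obstacle: the naive attempt of adjoining a fresh natural $a_m$ to $K_m$ breaks down whenever $\bigcup_{m' \neq m} K_{m'}$ already equals $\N$, and it is precisely to sidestep this that I pass to abstract colour-labels, where singleton labels are always available and the count ``$n$ branches for $n$ colours'' matches perfectly. Finally the Proposition follows: assuming $\bigcup_m K_m = \N$ makes the labelling of $p$ $\emptyset$-free while $K_m \in \content{p}$ gives $\{\sigma_m(p)\} \evaluates \emptyset$ for all $m$, contradicting the core claim; hence $\bigcup_m K_m \neq \N$.
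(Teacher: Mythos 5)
Your proposal is correct and is essentially the paper's own proof: the paper likewise inducts along the evaluation (via the stratification $\pole_0 \subseteq \pole_1 \subseteq \cdots$ of the least pole, which is exactly your finite-reduction-tree measure), performs the same case analysis on the head of the process, uses the same pigeonhole argument in the $\phi$ case, and resolves the $\chi$ case by the same recolouring idea. The only difference is presentational: where you introduce label-annotated processes so that singleton labels are always available, the paper stays with subsets of $\N$, picks fresh instructions $\gamma_{a_1}, \ldots, \gamma_{a_n}$, and sets $L_i = (K_i \setminus \{a_1, \ldots, a_n\}) \cup \{a_i\}$ --- precisely the ``remove-then-add'' correction that your label formalism builds in.
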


In other words, whenever $p$ is sound, it is impossible to span all of $\N$ with only $n$ elements of $\content{p}$.

\begin{proof}[Proof of \cref{gim-2-finite-consistent-prop}]
By induction, we let $\pole_0 = \emptyset$, and for all $r \in \N$, $\pole_{r+1} = \{ p \in \processes; $ there exists $ Q \subseteq \pole_r$ such that $\{p\} \evaluatesone Q \}$.

For all $r \in \N$, let $\contentr{p}{r}$ be the set of all $K \subseteq \N$ such that $\processreplace{p}{K} \in \pole_r$.

The sequences $(\pole_r)_r$ and $(\contentr{p}{r})_r$ are increasing. Moreover, $\pole = \bigcup_{r \in \N} \pole_r$ (because $\bigcup_{r \in \N} \pole_r$ can also be proved to be the smallest pole in $\setofpoles$),
and for all $p$, $\content{p} =  \bigcup_{r \in \N} \contentr{p}{r}$, so we can prove the result by induction on $r$.

Let $r \in \N$, and let us assume that for all sound $p$, for all $K_1, \ldots,  \linebreak[1] K_n \in \contentr{p}{r}$, $K_1 \cup \ldots \cup K_n \neq \N$.

Let $p$ be a sound process and $K_1, \ldots, K_n \in \contentr{p}{r+1}$. By contradiction, let us assume that $K_1 \cup \ldots \cup K_n = \N$.
By definition of $\contentr{p}{r+1}$, for all $i \in \{1, \ldots ,n\}$, there exists $Q_i \subseteq \pole_r$ such that $\processreplace{p}{K_i} \evaluatesone Q_i$.
By a case analysis on the structure of the head term of $p$ (and by considering the definition of $\evaluatesone$), we must be in one of the following cases:
\begin{itemize}
\item $p$ is of the form $tu \star \pi$, $\lambda x.t \star u \stackcons \pi$, $\cc \star t \stackcons \pi$ or $k_{\pi'} \star t \stackcons \pi$.
In that case, there is one and only one $q$ such that $p \evaluateskamone q$, and one can check that necessarily $q$ is sound and for all $i$, $Q_i = \{ \processreplace{q}{K_i} \}$
(because $\{ \processreplace{q}{K_i} \}$ is the only $R$ such that $\{\processreplace{p}{K_i}\} \evaluatesone R$).
Therefore, for all $i$, $\processreplace{q}{K_i} \in \pole_r$, so $K_i \in \contentr{q}{r}$.
This contradicts the induction hypothesis, because  $q$ is sound and $K_1 \cup \ldots \cup K_n = \N$.

\item $p$ is of the form $\phi \star t_0 \stackcons \ldots \stackcons t_n \stackcons \pi$. In that case, for all $i  \in \{1, \ldots ,n\}$,
there exists $j(i) \in \{ 0, \ldots, n \}$ such that $Q_i = \{ \processreplace{(t_k \star \pi)}{K_i}; k \neq j(i) \}$.
Therefore, for all $i \in \{1, \ldots ,n\}$, for all $k \in \{0, \ldots ,n\}$ such that $k \neq j(i)$, $K_i \in \contentr{t_k \star \pi}{r}$.

Let $l$ be an element of $\{ 0, \ldots, n \}$ such that for all $i  \in \{1, \ldots ,n\}$, $j(i) \neq l$. Then for all $i$, 
$K_i \in \contentr{t_l \star \pi}{r}$. This contradicts the induction hypothesis, because  $t_l \star \pi$ is sound and $K_1 \cup \ldots \cup K_n = \N$.

\item $p$ is of the form $\chi \star u \stackcons \pi$. In that case, for all $i$,
$Q_i = \{(u \star \formulaAsInstruction{1=k} \linebreak[1] \stackcons \ldots \stackcons \formulaAsInstruction{n=k} \stackcons \pi) \processreplace{}{K_i} ;
1 \leq k \leq n \}$.

Let $a_1, \ldots a_n \in \N$ such that for all $i$, there is no occurence of $\gamma_{a_i}$ in $p$, let 
$q = u \star \gamma_{a_1} \stackcons \ldots \stackcons \gamma_{a_n} \stackcons \pi$, and for all $i$,
let $L_i = (K_i \setminus \{a_1, \ldots, a_n\}) \cup \{a_i\}$.

Then for all $i$, $\processreplace{q}{L_i} = \processreplace{(u \star \formulaAsInstruction{1=i} \stackcons \ldots \stackcons \formulaAsInstruction{1=k} \stackcons \pi)}{K_i} \in \pole_r$. This contradicts the induction hypothesis because $q$ is sound and $L_1 \cup \ldots \cup L_n = \N$.

\item $p$ is of the form $\gamma_j \star \pi$, with $j \in \N$. Since $K_1 \cup \ldots \cup K_n = \N$,
there exists $i \in \{1,\ldots,n\}$ such that $j \in K_i$.
Then the head term of $\processreplace{p}{K_i}$ is $\formulaAsInstruction{\top}$, which contradicts $\processreplace{p}{K_i}
\evaluatesone Q_i$, since there is no evaluation rule for $\formulaAsInstruction{\top}$.
\end{itemize}

Now, since $\pole_0 = \emptyset$, for all sound $p$, $\contentr{p}{0} = \emptyset$, and so for all
$K_1, \ldots, K_n \in \contentr{p}{0}$, $K_1 \cup \ldots \cup K_n \neq \N$: the result follows by induction.
\end{proof}

\begin{corollary} If $p$ is a sound process, then $\content{p} \neq \P(\N)$. \end{corollary}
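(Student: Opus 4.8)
The plan is to derive this immediately from \cref{gim-2-finite-consistent-prop} by a one-line contradiction argument, since all the genuine work has already been done there. Suppose, for the sake of contradiction, that $\content{p} = \P(\N)$ for some sound process $p$. Since $\N \in \P(\N)$, this assumption gives in particular $\N \in \content{p}$, i.e. $\processreplace{p}{\N} \in \pole$.

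Next I would instantiate \cref{gim-2-finite-consistent-prop} with the constant family $K_1 = \cdots = K_n = \N$. Each $K_i$ lies in $\content{p}$ by the assumption $\content{p} = \P(\N)$, so this is a legitimate choice of $n$ elements of $\content{p}$ and the hypotheses of the proposition are satisfied. But their union $K_1 \cup \cdots \cup K_n = \N$ is all of $\N$, directly contradicting the proposition's conclusion that $K_1 \cup \cdots \cup K_n \neq \N$. Hence no sound process $p$ can satisfy $\content{p} = \P(\N)$, which is exactly the claim.

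I do not expect any real obstacle here: the substantial content is \cref{gim-2-finite-consistent-prop} itself, proved by the induction on the stages $\pole_r$ together with the case analysis on the head term. The only small point to verify is that $n$ is a fixed \emph{positive} integer (chosen at the start of this section), so the degenerate empty-family case does not occur and the constant family $K_1 = \cdots = K_n = \N$ is well defined; with that in hand, the corollary is a trivial specialisation.
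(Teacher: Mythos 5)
Your proof is correct and is exactly the intended argument: the paper leaves this corollary without an explicit proof precisely because it is the immediate specialisation $K_1 = \cdots = K_n = \N$ of \cref{gim-2-finite-consistent-prop}, which is what you carry out (and your remark that $n$ is a fixed positive integer, so the family is nonempty, is the right point to check).
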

\begin{corollary} The realizability structure $\setofpoles$ is consistent. \end{corollary}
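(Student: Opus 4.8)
The plan is to exhibit a single pole of $\setofpoles$ relative to which no proof-like term realizes $\bot$; as noted just before the definition of $\pole$, this is enough, since $\theory{\setofpoles} \models \bot$ would require a single proof-like realizer of $\bot$ that works for \emph{all} poles of $\setofpoles$ simultaneously, so refuting it at one pole refutes it everywhere. I would take the smallest pole $\pole = \{ p \in \processes ; \{ p \} \evaluates \emptyset \}$, which was just shown to lie in $\setofpoles$, and recall that a term realizes $\bot$ with respect to $\pole$ exactly when $t \star \pi \in \pole$ for every stack $\pi$; hence a single bad stack suffices to defeat it.

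I would then fix an arbitrary proof-like term $t$. The key observation is that a proof-like term contains no continuation constant and no restricted instruction at all; in particular it contains none of the $\gamma_i$ and no occurrence of $\formulaAsInstruction{\bot}$. Choosing any stack bottom, say $\emptystack_0$, the process $t \star \emptystack_0$ is therefore sound, and, crucially, it contains no occurrence of any $\gamma_i$.

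From the $\gamma_i$-freeness I would deduce that $\processreplace{(t \star \emptystack_0)}{K}$ equals $t \star \emptystack_0$ for every $K \subseteq \N$, so that $\content{t \star \emptystack_0}$ is either $\emptyset$ or all of $\P(\N)$. Since $t \star \emptystack_0$ is sound, the preceding corollary excludes the second possibility, leaving $\content{t \star \emptystack_0} = \emptyset$, that is, $t \star \emptystack_0 \notin \pole$. Thus $t$ fails the condition ``$t \star \pi \in \pole$ for every $\pi$'' at $\pi = \emptystack_0$, so $t$ does not realize $\bot$ with respect to $\pole$.

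As $t$ was an arbitrary proof-like term, no proof-like term realizes $\bot$ with respect to $\pole$, and the consistency of $\setofpoles$ follows. There is essentially no obstacle left at this point: all the weight has already been carried by \cref{gim-2-finite-consistent-prop} and its corollary. The only points needing a moment's care are, first, recalling that realizing $\bot$ is a statement quantified over all stacks, so producing one stack outside the pole is all that is required, and second, checking that ``proof-like'' delivers precisely the two syntactic properties — absence of $\formulaAsInstruction{\bot}$ and absence of every $\gamma_i$ — that make $t \star \emptystack_0$ sound and its content two-valued, which is exactly what lets the corollary force the content to be empty.
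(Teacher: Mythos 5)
Your proof is correct and follows essentially the same route as the paper's: both use the smallest pole $\pole$, observe that a proof-like term $t$ contains no restricted instructions (hence $t \star \emptystack_0$ is sound and $\gamma_i$-free, so $\processreplace{(t \star \emptystack_0)}{K} = t \star \emptystack_0$ for all $K$), and invoke the corollary that a sound process cannot have content $\P(\N)$. The only cosmetic difference is that the paper argues by contradiction from an assumed realizer of $\bot$, while you argue directly that $t \star \emptystack_0 \notin \pole$; the substance is identical.
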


\begin{proof} If $t$ is proof-like and realizes $\bot$ with respect to $\setofpoles$, then in particular, $t \star \emptystack_0$ is sound and for all $K \subseteq \N$, $\processreplace{(t \star \emptystack_0)}{K} = t \star \emptystack_0 \in \pole$,
which contradicts the previous corollary. \end{proof}

\section{Concluding remarks}

We hope to have made apparent the connection between the different forms of nondeterminism and the hierarchy of ``formulas about $\gim{2}$''. In this paper,
we only studied this connection on $\Pi^0_{1}$ formulas: further work would include considering more complex first-order formulas,
higher-order formulas, and formulas about the whole $\gim{\N}$. Moreover, it would be interesting to take a more serious look at the analogy with the hierarchy of parallelism:
currently, little is known about it, and the methods presented here could help understand it better.





\bibliographystyle{ACM-Reference-Format}
\bibliography{lics-classical-realizability.bib}

\end{document}